\newcommand{\truncated}{curve-normalized}
\newcommand{\Truncated}{Curve-normalized}
\newcommand{\Gs}{\mathcal{G}}
\newcommand{\Vs}{\mathcal{V}}
\newcommand{\Es}{\mathcal{E}}
\author{%
Rishabh Iyer$^{\dagger}$, Stefanie Jegelka$^{\ast}$, Jeff Bilmes$^{\dagger}$ \\
$^{\dagger}$ University of Washington, Dept.\ of EE, Seattle, U.S.A.\\$^{\ast}$ University of California, Dept. of EECS, Berkeley, U.S.A. \\
\texttt{rkiyer@uw.edu, stefje@eecs.berkeley.edu, bilmes@uw.edu}
}
\titlespacing{\paragraph}{%
  0pt}{
  0.3\baselineskip}{
  1em}%
\newcommand{\curv}{\ensuremath{\kappa_f}}
\newcommand{\curvf}[1]{\ensuremath{\kappa_{#1}}}
\DeclareMathOperator*{\argmin}{argmin}
\newtheorem{problem}{Problem}
\newtheorem{theorem}{Theorem}[section]
\newtheorem{observation}{Observation}[section]
\newtheorem{lemma}{Lemma}[section]
\newtheorem{proposition}[theorem]{Proposition}
\newtheorem{corollary}[theorem]{Corollary}
\newcommand{\myaddcomment}[3]{{\color{#1}{\ensuremath{\langle\!\!\langle}{\bf {#2} :} {#3}\ensuremath{\rangle\!\!\rangle}}}}
\newcommand{\rishabh}[1]{\myaddcomment{orange}{Rishabh}{#1}}
\newcommand{\JTR}[1]{\myaddcomment{orange}{Jeff\ensuremath{\rightarrow}Rishabh}{#1}}
\newcommand{\STR}[1]{\myaddcomment{orange}{Steffi\ensuremath{\rightarrow}Rishabh}{#1}}
\newcommand{\jeff}[1]{\myaddcomment{blue}{Jeff}{#1}}
\newcommand{\RTJ}[1]{\myaddcomment{blue}{Rishabh\ensuremath{\rightarrow}Jeff}{#1}}
\newcommand{\STJ}[1]{\myaddcomment{blue}{Steffi\ensuremath{\rightarrow}Jeff}{#1}}
\newcommand{\steffi}[1]{\myaddcomment{red}{Steffi}{#1}}
\newcommand{\JTS}[1]{\myaddcomment{red}{Jeff\ensuremath{\rightarrow}Steffi}{#1}}
\newcommand{\RTS}[1]{\myaddcomment{red}{Rishabh\ensuremath{\rightarrow}Steffi}{#1}}
\newcommand{\rishabh}[1]{}
\newcommand{\JTR}[1]{}
\newcommand{\JTS}[1]{}
\newcommand{\jeff}[1]{}
\newcommand{\RTJ}[1]{}
\newcommand{\RTS}[1]{}
\newcommand{\toboth}[1]{}
\newcommand{\steffi}[1]{}
\newcommand{\STR}[1]{}
\newcommand{\STJ}[1]{}
\newcounter{propcounter}
\providecommand{\doarxiv}{true}
\newcommand{\arxiv}[1]{#1}
\newcommand{\notarxiv}[1]{}
\newcommand{\arxiv}[1]{}
\newcommand{\notarxiv}[1]{#1}
\newcommand{\arxivalt}[2]{\ifthenelse{\boolean{isarxiv}}{#1}{#2}}
\begin{document}

\title{Curvature and Optimal Algorithms for Learning and Minimizing Submodular Functions}

\maketitle
\begin{abstract}

  We investigate three related and important problems
  connected to machine learning: approximating a submodular
  function everywhere, learning a submodular function (in a PAC-like
  setting~\cite{valiant1984theory}), and constrained minimization of submodular functions. 
  We show that the complexity of all three problems depends on the ``curvature'' of the submodular function, and provide lower and upper bounds that refine and improve previous results 
  \cite{balcanlearning,goel2009approximability, goemans2009approximating, 
    svitkina2008submodular}.
  Our proof techniques are fairly generic. We either use a black-box transformation of the function (for approximation and learning), or a transformation of algorithms to use an appropriate surrogate function (for minimization).
  Curiously, curvature has been known to influence approximations for
  submodular maximization \cite{conforti1984submodular,
    vondrak2010submodularity}, but its effect on minimization,
  approximation and learning has hitherto been open. We complete this
  picture, and also support our theoretical claims by empirical
  results.
\end{abstract}\looseness-1

\section{Introduction}
\label{sec:introduction}

Submodularity is a pervasive and important property
in the areas of combinatorial optimization, economics, operations
research, and game theory. In recent years, submodularity's use
in machine learning has begun to proliferate as well.  A set function
$f: 2^V \to \mathbb R$ over a finite set $V = \{1, 2, \ldots, n\}$ is
\emph{submodular} if for all subsets $S, T \subseteq V$, it holds that
$f(S) + f(T) \geq f(S \cup T) + f(S \cap T)$. Given a set $S \subseteq
V$, we define the \emph{gain} of an element $j \notin S$ in the
context $S$ as $f(j | S) \triangleq f(S \cup j) - f(S)$. A function $f$ is
submodular if it satisfies \emph{diminishing marginal returns},
namely $f(j | S) \geq f(j | T)$ for all $S \subseteq T, j \notin T$,
and is \emph{monotone} if $f(j | S) \geq 0$ for all $j \notin S, S
\subseteq V$. 

\arxiv{The search for optimal algorithms for submodular optimization
  has seen substantial progress~\cite{fujishige2005submodular,
    iwata2008submodular, feldman2012optimal} in recent years, but is still an ongoing
  endeavor.  The first polynomial-time algorithm used the ellipsoid
  method \cite{grotschel1981ellipsoid,grotschel1984geometric}, and
  several combinatorial algorithms followed
  \cite{iwata2001combinatorial, iwata2002fully, fleischer2003push,
    iwata2003faster, iwata2009simple, orlin2009faster}.
For a detailed summary,
see~\cite{iwata2008submodular}. 
%
Unlike submodular minimization, submodular maximization is NP
hard. However, maximization problems admit constant-factor
approximations~\cite{nemhauser1978, sviridenko2004note, janvondrak,
  feldman2012optimal}, often even in the constrained case
\cite{sviridenko2004note, lee2009non, matroidimproved,
  chekuri2011submodular, feldman2011unified, vondrakcontinuousgreedy}.
}

While submodularity, like convexity, occurs naturally in a wide variety of
problems, recent studies have shown that in the general case, many
submodular problems of interest are very hard: the problems of
learning a submodular function or of submodular minimization
under constraints do not even admit constant or logarithmic
approximation factors in polynomial time
\cite{balcanlearning,goel2009optimal,goemans2009approximating,iwata2009submodular,svitkina2008submodular}.
These rather pessimistic results however stand in sharp contrast to empirical observations, which suggest
that these lower bounds are
specific to rather contrived classes of functions, whereas
much better results can be achieved in many practically relevant cases.
Given the increasing importance of submodular
functions in machine learning, these observations beg the question of
qualifying and quantifying properties that make sub-classes of
submodular functions more amenable to learning and
optimization. Indeed, limited prior work has shown improved
results for constrained minimization and learning of
 sub-classes of submodular functions, including symmetric
functions \cite{balcanlearning,sotoG10}, concave functions
\cite{goel2009optimal,kohliOJ13,nikolova2010approximation},
\emph{label cost} or covering functions
\cite{hassinMS07,zhang2011approximation}. 

In this paper, we take additional steps towards addressing 
the above problems and show how the generic notion of the
\emph{curvature} -- the deviation from modularity-- of a submodular function determines both upper and
lower bounds on approximation factors for many learning and
constrained optimization problems. In particular, our quantification tightens
the generic, function-independent bounds in \cite{goemans2009approximating,balcanlearning,svitkina2008submodular,goel2009optimal,iwata2009submodular} for many practically relevant functions.
Previously, the concept of
curvature has been used to tighten bounds for submodular maximization
problems~\cite{conforti1984submodular, vondrak2010submodularity}. Hence, our results complete a unifying picture of the effect of curvature on submodular problems. By
 quantifying the influence of curvature on other problems, we improve previous bounds in
 \cite{goemans2009approximating,balcanlearning,svitkina2008submodular,goel2009optimal,iwata2009submodular}
 for many functions used in applications.  Curvature, moreover, does
 not rely on a specific functional form but generically
 only on the marginal gains. It allows a smooth transition between the `easy' functions and the `really hard' subclasses of submodular functions.\looseness-1
 

\section{Problem statements, definitions and background}
\label{sec:probl-stat-defin}

Before stating our main results, we provide some necessary definitions and
introduce a new
concept, the \emph{curve normalized} version of a submodular
function.
  Throughout this paper, we assume that the submodular function $f$ is
  defined on a ground set $V$ of $n$ elements, that it is
  nonnegative and $f(\emptyset) = 0$. We also use normalized
  \emph{modular} (or additive) functions $w: 2^V \to \mathbb{R}$ which
  are those that can be written as a sum of weights, $w(S) = \sum_{i
    \in S}w(i)$.
We are
concerned with the following three problems: 

\begin{problem}{(Approximation \cite{goemans2009approximating})}\label{prob1}
  Given a submodular function $f$ in form of a value oracle, find an
  approximation $\hat{f}$ (within polynomial time and representable
  within polynomial space), such that for all $X \subseteq V$, it
  holds that $\hat{f}(X) \leq f(X) \leq \alpha_1(n) \hat{f}(X)$ for a
  polynomial $\alpha_1(n)$.
\end{problem}
\begin{problem}{(PMAC-Learning \cite{balcanlearning})}\label{prob2}
  Given i.i.d\ training samples $\{(X_i,f(X_i)\}_{i=1}^m$ from a distribution $\mathcal D$, learn an approximation $\hat{f}(X)$ that is, with probability $1-\delta$, within a multiplicative factor of $\alpha_2(n)$ from $f$.
  \arxiv{PMAC learning is defined like PAC learning with the added relaxation that the function is, with high probability, approximated within a factor of $\alpha_2(n)$.}
  \end{problem}
  \begin{problem}{(Constrained optimization \cite{svitkina2008submodular,goel2009optimal,iwata2009submodular, jegelka2011-inference-gen-graph-cuts})}\label{prob:min}
    Minimize a submodular function $f$ over a family $\mathcal
    C$ of feasible sets, i.e., $\min_{X \in \mathcal C}f(X)$. 
  \end{problem}

  In its general form, the approximation problem was first studied by
  \citet{goemans2009approximating}, who approximate any
  monotone submodular function to within a factor of $O(\sqrt{n}\log
  n)$, with a lower bound of $\alpha_1(n) = \Omega(\sqrt{n}/\log
  n)$. Building on this result,
  \citet{balcanlearning} show how to PMAC-learn a monotone submodular function
  within a factor of $\alpha_2(n) = O(\sqrt{n})$, and prove 
  a lower bound of $\Omega(n^{1/3})$ for the learning
  problem. Subsequent work extends these results to sub-additive
  and fractionally sub-additive functions~\arxivalt{\cite{balcan2011learning,badanidiyuru2012sketching}}{\cite{balcan2011learning}}. Better learning results are
  possible for the subclass of \emph{submodular shells}
  \cite{lin2012learning} and Fourier sparse set functions
  \cite{stobbe12learning}. \arxiv{Very recently Devanur \textit{et al}~\cite{devanur2013approximation} investigated a related problem of approximating one class of submodular functions with another and they show how many non-monotone submodular functions can be approximated with simple directed graph cuts within a factor of $n^2/4$ which is tight. They also consider problems of approximating symmetric submodular functions and other subclasses of submodular functions. 

}Both Problems 1 and 2 have numerous
  applications in algorithmic game theory and
  economics~\cite{balcanlearning, goemans2009approximating} as well as
  machine learning~\cite{balcanlearning,lin2011-class-submod-sum,
    lin2012learning,stobbe12learning,jegelka2010online}.  \arxiv{For example, applications like bundle pricing, predicting prices of objects or growth rates etc. often have diminishing returns and a natural problem is to estimate these functions~\cite{balcanlearning}. Similarly in machine learning, a number of problems involving sensor placement, summarization and others~\cite{lin2011-class-submod-sum, rkiyeruai2012} can be modeled through submodular functions. Often in these scenarios we would want to explicitly approximate or learn the true objective. For example in the case of document summarization, we are given the ROUGE scores. Since this function is submodular~\cite{lin2011-class-submod-sum}, a natural application is to learn these functions for summarization tasks.}\looseness-1
    

  Constrained submodular minimization arises in applications
  such as power assignment or transportation problems \arxivalt{
    \cite{krause2005near,krause06near,wan02networks,rkiyersemiframework2013}
  }{\cite{krause2005near, wan02networks, rkiyersemiframework2013}}.  In machine learning, it
  occurs, for instance, in the form of MAP inference in high-order
  graphical models
  \cite{\arxiv{delong2012minimizing,vicente2008graph,}jegelka2011-nonsubmod-vision} or in
  size-constrained corpus extraction \cite{lin11}.  Recent results
  show that almost all constraints make it hard to solve the
  minimization even within a constant factor
  \cite{svitkina2008submodular,goel2009approximability,jegelka2011-inference-gen-graph-cuts}. 
  Here, we will focus
  on the constraint of imposing a lower bound on the cardinality, 
  and on combinatorial constraints where $\mathcal{C}$
  is the set of all $s$-$t$ paths, $s$-$t$ cuts, spanning trees, or
  perfect matchings in a graph.

\arxiv{\subsection{Curvature of a Submodular function}}
A central concept in this work is the total \emph{curvature} $\curv$ of a submodular function
$f$ and the curvature $\curv(S)$ with respect to a set $S \subseteq V$, defined as~\cite{conforti1984submodular, vondrak2010submodularity}
\begin{align}
  \label{eq:1}
  \curv = 1 - \min_{j \in V}\frac{f(j\mid V\setminus j)}{f(j)},\qquad\quad \curv(S) 
= 1 - \min_{j \in S} \frac{f(j | S \backslash j)}{f(j)}.
\end{align}
Without loss of generality, assume that $f(j) > 0$ for all $j \in V$.\arxiv{ This follows since, if there exists an element $j \in V$ such that $f(j) = 0$, we can safely remove element $j$ from the ground set, since for every set $X$, $f(j | X)$ = 0 (from submodularity), and including or excluding $j$ does not make any difference to the cost function.}
\arxivalt{

We also define two alternate notions of curvature. Define $\hat{\kappa_f}(S)$ and $\tilde{\kappa_f}(S)$ as,
\begin{align}
\hat{\kappa_f}(S) = 1 - \frac{\sum_{j \in S} f(j | S \backslash j)}{\sum_{j \in S} f(j)}, \tilde{\kappa_f}(S) = 1 - \min_{T \subseteq V} \frac{f(T | S) + \sum_{j \in S \cup T} f(j | S \cup T \backslash j)}{f(T)}
\end{align}
These different forms of curvature are closely related.
\begin{proposition}
For any monotone submodular function and set $S \subseteq V$,
\begin{align}
\hat{\kappa_f}(S) \leq \kappa_f(S) \leq \tilde{\kappa_f}(S) \leq \kappa_f
\end{align}
\end{proposition}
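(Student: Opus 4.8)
The plan is to prove the three inequalities separately, each by an elementary argument, using three standing facts that all follow from the hypotheses: monotonicity of marginals ($f(j\mid A)\ge 0$), submodularity of marginals (if $A\subseteq B$ then $f(j\mid A)\ge f(j\mid B)$), and subadditivity $f(T)\le\sum_{j\in T}f(j)$ (which follows by telescoping $f(T)=\sum_i f(t_i\mid\{t_1,\dots,t_{i-1}\})\le\sum_i f(t_i\mid\emptyset)$). For the first inequality $\hat{\kappa_f}(S)\le\kappa_f(S)$, the engine is the mediant inequality applied to the nonnegative numerators $a_j=f(j\mid S\setminus j)$ and positive denominators $b_j=f(j)$:
\[
\frac{\sum_{j\in S}a_j}{\sum_{j\in S}b_j}\ \ge\ \min_{j\in S}\frac{a_j}{b_j}.
\]
Subtracting each side from $1$ reverses the inequality and gives exactly $\hat{\kappa_f}(S)\le\kappa_f(S)$; this step is routine.

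For $\tilde{\kappa_f}(S)\le\kappa_f$ I would show that \emph{every} feasible $T$ in the minimization defining $\tilde{\kappa_f}(S)$ produces a ratio of at least $1-\kappa_f$, so the minimum, and hence $1$ minus it, obeys the bound. The definition of total curvature gives $f(j\mid V\setminus j)\ge(1-\kappa_f)f(j)$ for every $j$; since $(S\cup T)\setminus j\subseteq V\setminus j$, submodularity of marginals upgrades this to $f(j\mid (S\cup T)\setminus j)\ge(1-\kappa_f)f(j)$. Summing over $j\in T$ and then invoking subadditivity yields
\[
\sum_{j\in T}f\!\left(j\mid (S\cup T)\setminus j\right)\ \ge\ (1-\kappa_f)\sum_{j\in T}f(j)\ \ge\ (1-\kappa_f)\,f(T).
\]
Since the remaining term $f(T\mid S)$ (and any further nonnegative summands) only increase the numerator, the ratio is at least $1-\kappa_f$ for all $T$, which is precisely the claim.

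For the middle inequality $\kappa_f(S)\le\tilde{\kappa_f}(S)$, since $\tilde{\kappa_f}(S)$ equals $1$ minus a minimum over $T$, it suffices to exhibit one witness $T$ whose ratio is at most $1-\kappa_f(S)$. I would take $T=\{j^\ast\}$, where $j^\ast\in S$ attains $\min_{j\in S}f(j\mid S\setminus j)/f(j)$, the minimizer defining $\kappa_f(S)$. Because $j^\ast\in S$ we have $S\cup T=S$, so $f(T\mid S)=0$ and the marginal-gain sum over $T=\{j^\ast\}$ is the single term $f(j^\ast\mid S\setminus j^\ast)$; with denominator $f(j^\ast)$, the ratio equals $f(j^\ast\mid S\setminus j^\ast)/f(j^\ast)=1-\kappa_f(S)$, certifying $\tilde{\kappa_f}(S)\ge\kappa_f(S)$. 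The only genuinely non-mechanical step is $\tilde{\kappa_f}(S)\le\kappa_f$, where one must compose the curvature lower bound on marginal gains, monotonicity of marginals, and submodular subadditivity in the right order; the other two reduce to a textbook mediant inequality and to a single well-chosen witness set.
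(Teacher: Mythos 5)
Your proposal is correct in substance, and for the two links that the paper actually proves it follows the same route: your mediant inequality for $\hat{\kappa_f}(S)\leq\kappa_f(S)$ is exactly the paper's averaging argument (the paper phrases it as $\sum_{j\in S}f(j\mid S\setminus j)\geq(1-\kappa_f(S))\sum_{j\in S}f(j)$), and your single witness $T=\{j^\ast\}$ for $\kappa_f(S)\leq\tilde{\kappa_f}(S)$ is a cleaner version of the paper's restriction of the minimization to singletons (the paper's displayed chain garbles the $1-(\cdot)$ conversions and several inequality directions, but the intent is identical). The genuine difference is the last link: the paper never proves $\tilde{\kappa_f}(S)\leq\kappa_f$ at all --- it only observes $\kappa_f(S)\leq\kappa_f(V)=\kappa_f$, which bounds the wrong quantity --- so your argument that \emph{every} feasible $T$ has ratio at least $1-\kappa_f$ supplies a step that is missing from the paper's own proof.

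One caution about consistency. The paper's definition of $\tilde{\kappa_f}(S)$ literally sums over $j\in S\cup T$; this must be a typo for $j\in S\cap T$ (Vondr\'ak's definition in \cite{vondrak2010submodularity}, and the only reading consistent with the paper's own singleton computation). Under the literal $S\cup T$ reading, your witness $T=\{j^\ast\}$ would produce the ratio $\sum_{j\in S}f(j\mid S\setminus j)/f(j^\ast)$, not $1-\kappa_f(S)$, and indeed the middle inequality is then simply false (e.g.\ $V=S=\{1,2\}$, $f(1)=1$, $f(2)=10$, $f(\{1,2\})=10.5$ gives $\tilde{\kappa_f}(S)\approx 0.05<0.5=\kappa_f(S)$). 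Your middle-inequality step implicitly uses the corrected ($S\cap T$) definition, which is the right call; but your last-inequality step, which treats the numerator as containing all terms $j\in T$ plus ``further nonnegative summands,'' implicitly uses the literal reading. To make the whole argument run under the single corrected definition, add one line: ordering $T\setminus S$ arbitrarily, telescoping, and applying submodularity gives
\begin{align}
f(T\mid S)\;\geq\;\sum_{j\in T\setminus S}f\bigl(j\mid (S\cup T)\setminus j\bigr),
\end{align}
so the numerator $f(T\mid S)+\sum_{j\in S\cap T}f(j\mid (S\cup T)\setminus j)$ still dominates $\sum_{j\in T}f(j\mid(S\cup T)\setminus j)\geq(1-\kappa_f)\sum_{j\in T}f(j)\geq(1-\kappa_f)f(T)$, exactly as you intended. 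With that line added, your proof is complete, internally consistent, and strictly more complete than the one in the paper.
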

\begin{proof}
It is easy to see that $\curv(S) \leq \curv(V) = \curv$, by the fact that $\curv(S)$ is a monotone-decreasing set function. To show that $\kappa_f(S) \leq \tilde{\kappa_f}(S)$, note that,
\begin{align}
\tilde{\kappa_f}(S) &= \min_{T \subseteq V} \frac{f(T | S) + \sum_{j \in S \cup T} f(j | S \cup T \backslash j)}{f(T)} \nonumber \\
		&\geq \min_{T \subseteq V: |T| = 1} \frac{f(T | S) + \sum_{j \in S \cup T} f(j | S \cup T \backslash j)}{f(T)} \nonumber \\
		&\geq \min\{\min_{j \in S} \frac{f(j | S \backslash j)}{f(j)}, \min_{j \notin S} \frac{f(j | S)}{f(j)}\} \nonumber \\
		&\geq \min_{j \in S} \frac{f(j | S \backslash j)}{f(j)} \nonumber \\
		&\geq 1 - \kappa_f(S)
\end{align}
We finally prove that $\hat{\kappa_f}(S) \leq \kappa_f(S)$. 
Note that,
\begin{align*}
1 - \kappa_f(S) &= \min_{j \in S} \frac{f(j | S \backslash j)}{f(j)}\\
	&\leq \frac{f(j | S \backslash j)}{f(j)}, \forall j \in S
\end{align*}
Also notice that,
\begin{align*}
1 - \hat{\kappa_f}(S) &= \frac{\sum_{j \in S} f(j | S \backslash j)}{\sum_{j \in S} f(j)}\\
		&\geq \frac{\sum_{j \in S} (1 - \kappa_f(S)) f(j)}{\sum_{j \in S} f(j)} \\
	&\geq 1 - \kappa_f(S)
\end{align*}
Hence, $\hat{\kappa_f}(S) \leq \kappa_f(S)$.
\end{proof}
Hence $\hat{\kappa_f}(S)$ is the tightest notion of curvature. In this paper, we shall see these different notions of curvature coming up in different bounds.
}{It is easy to see that $\curv(S) \leq \curv(V) = \curv$, and hence $\curv(S)$ is a tighter notion of curvature.} A modular function has curvature $\curv = 0$, and a matroid rank
function has maximal curvature $\curv = 1$. Intuitively, $\curv$
measures how far away $f$ is from being \emph{modular}. 
Conceptually, 
curvature is distinct from the recently
  proposed \emph{submodularity ratio} \cite{das2011submodular} that
  measures how far a function is from being \emph{submodular}. 
Curvature has served to
tighten bounds for submodular maximization problems, e.g., from
$(1-1/e)$ to $\frac{1}{\curv}(1 - e^{-\curv})$ for monotone submodular
maximization subject to a cardinality constraint
\cite{conforti1984submodular} or matroid constraints
\cite{vondrak2010submodularity}, and these results are tight. \arxiv{In fact, \cite{vondrak2010submodularity} showed that this result for submodular maximization holds for the tighter version of curvature $\tilde{\kappa_f}(S^*)$, where $S^*$ is the optimal solution. In other words, the bound for the greedy algorithm of \cite{vondrak2010submodularity} can be tightened to $\frac{1}{\tilde{\curv}(S^*)}(1 - e^{-\tilde{\curv}(S^*)})$.

}For
submodular minimization, learning, and approximation,
however, the role of curvature has not yet been addressed (an
exception are the upper bounds in \cite{rkiyersemiframework2013} for
minimization). In the following sections, we complete the picture of how curvature affects the complexity of submodular maximization and minimization, approximation, and learning.



The above-cited lower bounds for Problems \ref{prob1}--\ref{prob:min} were established with functions of maximal curvature
($\curv=1$) which, as we will see, is the worst case. 
By contrast,
many practically interesting functions have smaller curvature, and 
our analysis will provide an explanation for the good empirical results observed with such functions
\cite{rkiyersemiframework2013,lin2011-class-submod-sum,jegelkathesis}. An
example for functions with $\curv < 1$ is the class of concave
over modular functions that have been used in speech\STR{Is this ``speeech''?} processing
\cite{lin2011-class-submod-sum} and computer
vision~\cite{jegelka2011-nonsubmod-vision}. This class comprises, for
instance, functions of the form $f(X) = \sum_{i = 1}^k (w_i(X))^a$, for some $a \in [0,
1]$ and a nonnegative weight vectors $w_i$. Such functions may be defined over clusters $C_i \subseteq V$, in which case the weights $w_i(j)$ are nonzero only if $j \in C_i$~\cite{lin2011-class-submod-sum,jegelka2011-nonsubmod-vision, rkiyeruai2012}.
%

\arxiv{A related quantity distinct from curvature that  has been
  introduced in the machine learning community is the
  \emph{submodularity ratio}~\cite{das2011submodular}:
\begin{align}
\gamma_{U,k}(f)
= \min_{L \subseteq U, S : |S| \leq k, S \cap L = \emptyset}
\frac{ \sum_{x \in S} f(x|L) }{f(S|L)}
\end{align}
This parameter shows the decay of approximation bounds when an
algorithm for submodular maximization is applied to non-submodular
functions. The submodularity ratio measures how ``close'' $f$ is to
submodularity, and helps characterize theoretical bounds for functions
which are approximately submodular. Curvature, by contrast, 
measures how close a submodular function to being modular.}

\arxiv{\subsection{The \Truncated{} Polymatroid function}
\label{hardnessframework}}
\notarxiv{\paragraph{Curvature-dependent analysis.}}
To analyze Problems~\ref{prob1} -- \ref{prob:min},
we introduce the concept of a 
\emph{\truncated{}}
polymatroid\footnote{A polymatroid function is a monotone increasing,
nonnegative, submodular function satisfying $f(\emptyset) = 0$.}.
Specifically, we define
the $\curv$-\emph{\truncated{}} version of $f$ as
\begin{align}
\label{eq:defineg}
f^{\kappa}(X) = 
\frac{
f(X) - {(1 - \curv)} \sum_{j \in X} f(j)
}{\curv}
\end{align}
If $\curv=0$, then we set $f^\kappa \equiv 0$.  
We call $f^{\kappa}$ the \truncated{} version of $f$ because its curvature is
$\curvf{f^\kappa} = 1$.  The function $f^\kappa$ allows us to decompose a submodular
function $f$ into a ``difficult'' polymatroid function and an ``easy''
modular part as $f(X) = f_\text{difficult}(X) +
m_\text{easy}(X)$ where $f_\text{difficult}(X) = \kappa_f f^\kappa(X)$
and $m_\text{easy}(X) = (1-\kappa_f)\sum_{j \in X} f(j)$.  Moreover, we may modulate the curvature of
given any function $g$ with $\curvf{g} = 1$, by constructing a function 
$f(X) \triangleq c g(X) + (1- c) |X|$ with curvature $\curvf{f} = c$ but
otherwise the same polymatroidal structure as $g$. 

Our curvature-based decomposition is different from decompositions such as that into a \emph{totally normalized} function and a modular function \cite{cun82}. Indeed, the \truncated{} function has some specific properties that will be useful later on
\notarxiv{(proved in~\cite{nips2013extendedvcurv})}:
\begin{lemma}\label{lem:simplefacts}
If $f$ is monotone submodular with $\curv > 0$, then 
\arxivalt{
\begin{align}
  f(X) \leq \sum_{j \in X} f(j),\,\,\ f(X) 
\geq (1 - \curv) \sum_{j \in X} f(j).
\end{align}}{$f(X) \leq \sum_{j \in X} f(j)$ and $f(X) \geq (1 - \curv) \sum_{j \in X} f(j)$.}
\end{lemma}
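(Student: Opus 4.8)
The plan is to establish both inequalities as direct consequences of submodularity (diminishing marginal returns) combined with the definition of curvature $\curv$. The key observation is that both bounds compare the value $f(X)$ of a set to the sum $\sum_{j \in X} f(j)$ of singleton values, so I would work by enumerating the elements of $X$ and building up $f(X)$ one element at a time via a telescoping sum of marginal gains.

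\textbf{Upper bound.} First I would prove $f(X) \leq \sum_{j \in X} f(j)$. Writing $X = \{x_1, \dots, x_k\}$ and telescoping, we have $f(X) = \sum_{i=1}^{k} f(x_i \mid \{x_1, \dots, x_{i-1}\})$. By diminishing marginal returns, each gain satisfies $f(x_i \mid \{x_1, \dots, x_{i-1}\}) \leq f(x_i \mid \emptyset) = f(x_i)$, since $\emptyset \subseteq \{x_1, \dots, x_{i-1}\}$. Summing these inequalities yields the claim. This direction is the routine one and does not even use curvature (only monotonicity of marginal gains), so it should present no obstacle.

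\textbf{Lower bound.} Next I would prove $f(X) \geq (1 - \curv) \sum_{j \in X} f(j)$. The idea is to bound each marginal gain from below using the definition of total curvature. By definition, $1 - \curv = \min_{j \in V} f(j \mid V \setminus j)/f(j)$, so for every element $j$ we have $f(j \mid V \setminus j) \geq (1 - \curv) f(j)$. Again telescoping $f(X) = \sum_{i=1}^{k} f(x_i \mid \{x_1, \dots, x_{i-1}\})$, I would use diminishing returns in the other direction: since $\{x_1, \dots, x_{i-1}\} \subseteq V \setminus x_i$, we have $f(x_i \mid \{x_1, \dots, x_{i-1}\}) \geq f(x_i \mid V \setminus x_i) \geq (1 - \curv) f(x_i)$. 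Summing over $i$ gives $f(X) \geq (1 - \curv) \sum_{j \in X} f(j)$, as desired.

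\textbf{Main obstacle.} Both directions reduce to a single clean application of diminishing returns once the right telescoping is set up, so there is no substantive difficulty. The only point requiring a little care is making sure the marginal gain is compared against the correct reference set: the upper bound uses that adding context can only \emph{decrease} a gain (compare to $\emptyset$), while the lower bound uses that the gain is minimized when the context is the full complement $V \setminus j$ (which is exactly what the curvature $\curv$ measures). Getting these two comparison directions matched to the two inequalities is the entire content of the argument.
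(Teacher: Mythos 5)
Your proposal is correct and matches the paper's proof in essence: the paper invokes subadditivity for the upper bound (which your telescoping argument proves) and establishes the lower bound via exactly the chain $f(X) \geq \sum_{j \in X} f(j \mid V \setminus j) \geq (1-\curv)\sum_{j \in X} f(j)$, which is what your telescoping-plus-diminishing-returns step yields. You merely spell out explicitly the standard telescoping that the paper leaves implicit.
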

\arxiv{\begin{proof}
The inequalities follow from
submodularity and monotonicity of $f$. The first part follows from the subadditivity of $f$. The second inequality follows since $f(X) \geq \sum_{j \in X} f(j | V \backslash j) \geq (1 - \curv) \sum_{j \in X} f(j)$, since $\forall j \in X, f(j | V \backslash j) \geq (1 - \curv) f(j)$ by definition of $\curv$. 
\end{proof}}
\begin{lemma}
If $f$ is monotone submodular, then
$f^{\kappa}(X)$ in Eqn.~\eqref{eq:defineg}
is a monotone non-negative submodular function. Furthermore, $f^{\kappa}(X) \leq \sum_{j \in X} f(j)$.
\end{lemma}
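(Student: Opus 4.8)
The plan is to read off all four properties from the single observation that $f^{\kappa}$ is, up to a positive rescaling, the original function minus a modular function. Writing $w(X) = \sum_{j \in X} f(j)$ for the modular function with weights $f(j)$, Eqn.~\eqref{eq:defineg} reads $f^{\kappa}(X) = \frac{1}{\curv} f(X) - \frac{1-\curv}{\curv} w(X)$. Since $\curv > 0$, the coefficient $1/\curv$ is positive, and submodularity is preserved both under subtracting a modular function and under multiplying by a positive scalar; hence $f^{\kappa}$ is submodular. Non-negativity is then immediate from the second inequality of Lemma~\ref{lem:simplefacts}, namely $f(X) \geq (1-\curv) w(X)$, which makes the numerator in Eqn.~\eqref{eq:defineg} non-negative (and $f^{\kappa}(\emptyset) = 0$ since $f(\emptyset) = 0$).

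For monotonicity I would compute the marginal gain directly. For $j \notin S$, the modular and scaling terms collapse to $f^{\kappa}(j \mid S) = \frac{1}{\curv}\left( f(j \mid S) - (1 - \curv) f(j) \right)$, using $w(j) = f(j)$. It then suffices to show $f(j \mid S) \geq (1-\curv) f(j)$. This follows from two facts chained together: by submodularity (diminishing returns), $f(j \mid S) \geq f(j \mid V \setminus j)$ because $S \subseteq V \setminus j$; and by the definition of $\curv$ in Eqn.~\eqref{eq:1}, every element satisfies $f(j \mid V \setminus j) \geq (1-\curv) f(j)$. Combining these gives $f^{\kappa}(j \mid S) \geq 0$, so $f^{\kappa}$ is monotone.

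The final bound $f^{\kappa}(X) \leq w(X)$ is equivalent, after clearing the positive denominator $\curv$, to $f(X) \leq w(X)$, which is precisely the first (subadditivity) inequality of Lemma~\ref{lem:simplefacts}. I expect no serious obstacle here: the whole statement reduces to the two inequalities already recorded in Lemma~\ref{lem:simplefacts} together with the elementary marginal-gain computation. The only step requiring a little care is monotonicity, where one must correctly identify that the gain of $f^{\kappa}$ reduces to $f(j \mid S) - (1-\curv) f(j)$ and then invoke the defining curvature inequality uniformly over all $j$; everything else is bookkeeping around the positive rescaling by $1/\curv$.
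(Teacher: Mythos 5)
Your proposal is correct and follows essentially the same route as the paper's proof: submodularity from positive rescaling of $f$ minus a modular function, monotonicity via the marginal gain $f(j \mid S) - (1-\curv) f(j) \geq 0$ using diminishing returns and the definition of $\curv$, and the upper bound by substituting the subadditivity inequality into the numerator. The only cosmetic difference is that you obtain non-negativity from the second inequality of Lemma~\ref{lem:simplefacts} rather than from monotonicity plus $f^{\kappa}(\emptyset)=0$, and you spell out the diminishing-returns step that the paper leaves implicit.
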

\arxiv{
\begin{proof}
 Submodularity of $f^{\kappa}$ is evident from the definition. To show the monotonicity, it suffices to show that $f(X) - {(1 - \curv)} \sum_{j \in X} f(j)$ is monotone non-decreasing
  and non-negative submodular. To show it is non-decreasing, notice that $\forall j \notin X, f(j | V \backslash j) -
  (1 - \curv) f(j)  \geq 0$, since $(1 - \curv) f(j) \leq f(j | V \backslash j)$ by the definition of $\curv$. 
  Non-negativity follows from monotonicity and the fact that $f^{\kappa}(\emptyset) = 0$.
  To show the second part,
  notice that $\frac{f(X) - (1 - \curv) \sum_{j \in X} f(j)}{\curv} \leq \frac{\sum_{j \in X} f(j) -  (1 - \curv) \sum_{j \in X} f(j)}{\curv} = \sum_{j \in X} f(j)$.
\end{proof}}

\arxiv{\subsection{A framework for curvature-dependent lower bounds.}
\label{hardnessframework}}

The function $f^\kappa$ will be our tool for analyzing the hardness of submodular problems.
Previous information-theoretic lower bounds for
Problems~\ref{prob1}--\ref{prob:min}
\cite{goel2009approximability,goemans2009approximating,iwata2009submodular,svitkina2008submodular}
are \emph{independent} of curvature and use functions with
$\curv = 1$.  These curvature-independent bounds are proven by
constructing two essentially indistinguishable matroid rank functions
$h$ and $f^R$, one of which depends on a random set $R \subseteq
V$. One then argues that any algorithm would need to make a
super-polynomial number of queries to the functions for being able to
distinguish $h$ and $f^R$ with high enough probability. The lower
bound will be the ratio $\max_{X \in \mathcal{C}} h(X) / f^R(X)$.  We
extend this proof technique to functions with a fixed given curvature.
To this end, we define the functions
\begin{equation}
f_{\kappa}^R(X) = \curv f^R(X) + (1 - \curv) |X| \quad \text{ and }\quad h_{\kappa}(X) =  \curv h(X) + (1 - \curv) |X|.\label{eq:hidingfuncs}
\end{equation}
Both of these functions have curvature $\curv$.  This
construction enables us to explicitly introduce the effect of
curvature into information-theoretic bounds for all monotone
submodular functions.  

\paragraph{Main results.} The curve normalization \eqref{eq:defineg} leads to refined upper bounds for Problems~\ref{prob1}--\ref{prob:min}, while the curvature modulation~\eqref{eq:hidingfuncs} provides matching lower bounds. The following are some of our main results:
for approximating submodular functions (Problem~\ref{prob1}), we
replace the known bound 
of $\alpha_1(n) = O(\sqrt{n} \log n)$ \cite{goemans2009approximating} by
an improved curvature-dependent $O(\frac{\sqrt{n}
  \log n}{1 + (\sqrt{n} \log n - 1) (1 - \curv)})$. We complement this
with a lower bound of $\tilde{\Omega}(\frac{\sqrt{n}}{1 + (\sqrt{n} -
  1)(1 - \curv)})$. For learning submodular functions
(Problem~\ref{prob2}), we refine the known bound of $\alpha_2(n) =
O(\sqrt{n})$ \cite{balcanlearning} in the PMAC setting to a curvature
dependent bound of $O(\frac{\sqrt{n}}{1 + (\sqrt{n} - 1)
  (1 - \curv)})$, with a lower bound of
$\tilde{\Omega}(\frac{n^{1/3}}{1 + (n^{1/3} - 1) (1 -
  \curv)})$. Finally, Table~\ref{tab:results} summarizes our
curvature-dependent approximation bounds for constrained minimization
(Problem~\ref{prob:min}). These bounds refine many of the results in
\cite{goel2009approximability, svitkina2008submodular,
  iwata2009submodular,jegelka2011-inference-gen-graph-cuts}.  In
general, our new curvature-dependent upper and lower bounds refine
known theoretical results whenever $\curv < 1$, in many cases
replacing known polynomial bounds by a curvature-dependent constant
factor $1/(1 - \kappa_f)$. Besides making these bounds precise, the decomposition and the \truncated{} version
\eqref{eq:defineg} 
are the basis for constructing tight algorithms that (up to logarithmic
factors) achieve the lower bounds.

\begin{table}
  \centering
  \begin{tabular}{|l|c|c|c|c|}
  \hline
    Constraint & Modular approx. (MUB) & Ellipsoid approx. (EA) & Lower bound 
    \\ \hline
   Card. LB & $\frac{k}{1+(k-1)(1 - \curv)}$ & $O(\frac{\sqrt{n} \log n}{1 + (\sqrt{n} \log n - 1)(1 - \curv)})$ 
   & $\tilde{\Omega}(\frac{n^{1/2})}{1 + (n^{1/2} - 1)(1 - \curv)})$ 
   \\
   Spanning Tree & $\frac{n}{1+(n - 1)(1 - \curv)}$ & $O(\frac{\sqrt{m} \log m}{1 + (\sqrt{m}\log m - 1)(1 - \curv)})$ 
   & $\tilde{\Omega}(\frac{n}{1 + (n - 1)(1 - \curv)})$ 
   \\
   Matchings & $\frac{n}{2+(n - 2)(1 - \curv)}$ & $O(\frac{\sqrt{m} \log m}{1 + (\sqrt{m} \log m - 1)(1 - \curv)})$ 
   & $\tilde{\Omega}(\frac{n}{1 + (n - 1)(1 - \curv)})$ 
   \\  
\arxiv{   Edge Cover & $\frac{n}{1+(\frac{n}{2} - 1)(1 - \curv)}$ & $O(\frac{\sqrt{m} \log m}{1 + (\sqrt{m} \log m - 1)(1 - \curv)})$ 
  & $\tilde{\Omega}(\frac{n}{1 + (0.5 n - 1)(1 - \curv)}$ 
\\  }
     s-t path & $\frac{n}{1 + (n-1)(1-  \curv)}$ & $O(\frac{\sqrt{m}\log m}{1 + (\sqrt{m}\log m - 1)(1 - \curv)})$ 
     & $\tilde{\Omega}(\frac{n^{2/3}}{1 + (n^{2/3} - 1)(1 - \curv)})$ 
     \\  
    s-t cut & $\frac{m}{1 + (m-1)(1 - \curv)}$ & $O(\frac{\sqrt{m} \log m}{1 + (\log m \sqrt{m} - 1)(1 - \curv)})$ 
    & $\tilde{\Omega}(\frac{\sqrt{n}}{1 + (\sqrt{n} - 1)(1 - \curv)})$ 
    \\  
   \hline
  \end{tabular}
  \caption{Summary of our results for constrained minimization (Problem~\ref{prob:min}).}
  \label{tab:results}
\end{table}

\section{Approximating submodular functions everywhere}
\label{learnimproved}

We first address improved bounds for the problem of approximating a monotone submodular function everywhere.
Previous work established 
$\alpha$-approximations $g$ to a submodular function $f$ satisfying $g(S) \leq f(S) \leq \alpha g(S)$ for all $S \subseteq V$ \cite{goemans2009approximating}.
We begin with a theorem showing how any algorithm computing such an approximation may be used to obtain a curvature-specific, improved approximation. Note that the curvature of a monotone submodular function can be obtained within $2n+1$ queries to $f$.
The key idea of Theorem~\ref{thm:f_and_g} is to only approximate the curved part of $f$, and to retain the modular part exactly.\notarxiv{ The full proof is in~\cite{nips2013extendedvcurv}.}

\begin{theorem}\label{thm:f_and_g}
Given a polymatroid function $f$ with $\curv < 1$, let $f^{\kappa}$ be its \truncated{} version defined in Equation~\eqref{eq:defineg}, and let $\hat{f}^{\kappa}$ be a submodular function satisfying $\hat{f}^{\kappa}(X) \leq f^{\kappa}(X) \leq \alpha(n) \hat{f}^{\kappa}(X)$,
for some $X \subseteq V$. 
Then the function $\hat{f}(X) \triangleq \curv \hat{f^{\kappa}}(X) + (1 - \curv) \sum_{j \in X} f(j)$ satisfies
\begin{align} \label{maineq}
\hat{f}(X) \leq f(X) \leq \frac{\alpha(n)}{1 + (\alpha(n) - 1) (1 - \curv)} \hat{f}(X) \leq \frac{\hat{f}(X)}{1 - \curv}.
\end{align}
\arxiv{The above inequalities hold, even if we use an upper bound $\bar{\kappa_f}$ instead of the actual curvature $\kappa_f$.}
\end{theorem}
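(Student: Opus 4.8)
The plan is to exploit the fact that $f$ and its surrogate $\hat f$ share exactly the same modular part. Writing $m(X) \triangleq \sum_{j \in X} f(j)$, the definition \eqref{eq:defineg} rearranges to the identity $f(X) = \curv\, f^{\kappa}(X) + (1-\curv)m(X)$, while by construction $\hat f(X) = \curv\, \hat f^{\kappa}(X) + (1-\curv)m(X)$. The lower bound $\hat f(X) \le f(X)$ is then immediate: subtracting the two identities gives $f(X) - \hat f(X) = \curv\bigl(f^{\kappa}(X) - \hat f^{\kappa}(X)\bigr) \ge 0$, since $\curv > 0$ and $\hat f^{\kappa}(X) \le f^{\kappa}(X)$ by hypothesis. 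Note the whole argument is pointwise in $X$, so it applies to precisely the set(s) for which the approximation guarantee on $\hat f^{\kappa}$ holds.

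For the upper bound I would avoid maximizing the ratio $f(X)/\hat f(X)$ directly and instead clear denominators. Setting $A \triangleq 1 + (\alpha(n)-1)(1-\curv)$, the claim $A\,f(X) \le \alpha(n)\,\hat f(X)$ can be reduced by replacing $\hat f^{\kappa}(X)$ on the right with its lower bound $f^{\kappa}(X)/\alpha(n)$ (from $f^{\kappa}(X) \le \alpha(n)\hat f^{\kappa}(X)$), so it suffices to prove $A\bigl(\curv f^{\kappa}(X) + (1-\curv)m(X)\bigr) \le \curv f^{\kappa}(X) + \alpha(n)(1-\curv)m(X)$. Collecting the $f^{\kappa}(X)$ and $m(X)$ terms and using the two elementary identities $A - 1 = (\alpha(n)-1)(1-\curv)$ and $\alpha(n) - A = (\alpha(n)-1)\curv$, this collapses to $(\alpha(n)-1)(1-\curv)\curv\, f^{\kappa}(X) \le (\alpha(n)-1)\curv(1-\curv)\, m(X)$, i.e., after cancelling the common nonnegative factor, simply $f^{\kappa}(X) \le m(X) = \sum_{j \in X} f(j)$.

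The crux — and the only place genuine structure enters — is that this final inequality is exactly the second assertion of the lemma preceding the theorem, which holds by subadditivity of $f$. This is the step I expect to be the main obstacle: a naive bound that uses only $\hat f^{\kappa} \le f^{\kappa} \le \alpha(n)\hat f^{\kappa}$ without the ceiling $f^{\kappa}(X) \le m(X)$ yields the strictly weaker factor $1 + \curv(\alpha(n)-1)$, so the curvature-dependent improvement genuinely relies on controlling $f^{\kappa}$ by the modular part. Finally, the trailing estimate $\tfrac{\alpha(n)}{A} \le \tfrac{1}{1-\curv}$ follows by cross-multiplying, since it is equivalent to $\alpha(n)(1-\curv) \le A = \alpha(n)(1-\curv) + \curv$, i.e., to $\curv \ge 0$. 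I would close by observing that the argument is insensitive to overestimating the curvature: for any upper bound $\bar{\kappa_f} \ge \curv$, the surrogate $f^{\bar{\kappa_f}}$ remains monotone submodular (monotonicity needs only $f(j\mid V\setminus j)\ge (1-\bar{\kappa_f})f(j)$, implied by the definition of $\curv$) and still satisfies $f^{\bar{\kappa_f}}(X)\le m(X)$ by subadditivity, so every step above carries over verbatim with $\curv$ replaced by $\bar{\kappa_f}$.
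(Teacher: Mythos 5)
Your proof is correct and follows essentially the same route as the paper's: both substitute the lower bound $\hat{f^{\kappa}}(X) \geq f^{\kappa}(X)/\alpha(n)$ and then reduce the claimed factor to the single structural fact $f^{\kappa}(X) \leq \sum_{j \in X} f(j)$ (subadditivity, i.e., the lemma preceding the theorem), with your cleared-denominator algebra being just a cosmetic variant of the paper's ratio manipulation. The handling of the lower bound $\hat{f}(X) \leq f(X)$ and of an overestimated curvature $\bar{\kappa_f}$ likewise matches the paper's argument.
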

\arxiv{
\begin{proof}
The first inequality follows directly from definitions. 
To show the second inequality, note that
$\hat{f^{\kappa}}(X) \geq \frac{f^{\kappa}(X)}{\alpha(n)}$, and therefore
\begin{align}
\frac{f(X)}{\curv \hat{f^{\kappa}}(X) + (1 - \curv) \sum_{j \in X} f(j)}&= \frac{\curv f^{\kappa}(X) + (1 - \curv) \sum_{j \in X} f(j)}{\curv \hat{f^{\kappa}}(X) + (1 - \curv) \sum_{j \in X} f(j)} \\
					&\leq  \frac{\curv f^{\kappa}(X) + (1 - \curv) \sum_{j \in X} f(j)}{\curv \frac{f^{\kappa}}{\alpha(n)} + (1 - \curv) \sum_{j \in X} f(j)} \\
					&= \alpha(n) \frac{\curv f^{\kappa}(X) + (1-  \curv) \sum_{j \in X} f(j)}{\curv f^{\kappa}+ (1 - \curv) \alpha(n) \sum_{j \in X} f(j)} \\
					&= \frac{\alpha(n)}{1 + \frac{(\alpha(n) - 1) (1 - \curv) \sum_{j \in X} f(j)}{\curv f^{\kappa}(X) + (1 - \curv) \sum_{j \in X} f(j)}} \\
					&\leq \frac{\alpha(n)}{1 + (\alpha(n)-1)(1 - \curv)}
\end{align}
The last inequality follows since $\curv f^{\kappa}(X) + (1 - \curv) \sum_{j \in X} f(j) \leq \sum_{j \in X} f(j)$. The other inequalities in Eqn.~\eqref{maineq} follow directly from the definitions.

It is also easy to see that all the above inequalities will hold using an upper bound $\bar{\kappa_f} > \kappa_f$ instead of $\kappa_f$ in the definition of the curve-normalized function. The bound in that case would be, 
\begin{align}
\hat{f}(X) \leq f(X) \leq \frac{\alpha(n)}{1 + (\alpha(n) - 1) (1 - \bar{\curv})} \hat{f}(X) \leq \frac{\hat{f}(X)}{1 - \bar{\curv}}
\end{align}
where, $\hat{f}(X) = \bar{\curv} \hat{f^{\bar{\kappa}}}(X) + (1 - \bar{\curv}) \sum_{j \in X} f(j)$, $\hat{f^{\bar{\kappa}}}$ is an approximation of $f^{\bar{\kappa}}$ satisfying $\hat{f^{\bar{\kappa}}}(X) \leq f^{\kappa}(X) \leq \alpha(n) \hat{f^{\kappa}}(X)$ and,
\begin{align}
f^{\bar{\kappa}}(X) = \frac{f(X) - {(1 - \bar{\curv})} \sum_{j \in X} f(j)}{\bar{\curv}}
\end{align}
\end{proof}}
Theorem~\ref{thm:f_and_g} may be directly applied to tighten recent results on approximating submodular functions everywhere.
An algorithm by \citet{goemans2009approximating} computes an approximation to a polymatroid function $f$ in polynomial time by approximating the submodular polyhedron via an ellipsoid. This approximation (which we call the ellipsoidal approximation) satisfies $\alpha(n) = O(\sqrt{n}\log{n})$, and
has the form $\sqrt{w^f(X)}$ for a certain weight vector $w^f$.
%
\arxivalt{
\begin{theorem}[\cite{goemans2009approximating}]
For any polymatroid rank function $f$, one can compute a weight vector $w^f$ and correspondingly an approximation $\sqrt{w^f(X)}$ via a polynomial number of oracle queries such that $\sqrt{w^f(X)} \leq f(X) \leq O(\sqrt{n}\log{n}) \sqrt{w^f(X)}$. 
%
\end{theorem}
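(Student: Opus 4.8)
The plan is to realize $f(S)$ as the support function of the polymatroid $P_f=\{x\in\mathbb{R}^n_{\ge 0}:x(A)\le f(A)\text{ for all }A\subseteq V\}$ evaluated at the indicator $\mathbf 1_S$, and then to sandwich that support function between those of two nested ellipsoids. The starting point is the classical identity $\max_{x\in P_f}\langle \mathbf 1_S,x\rangle=f(S)$, a consequence of the greedy algorithm for polymatroids: the binding constraint in the nonnegative direction $\mathbf 1_S$ is exactly $x(S)\le f(S)$, and it is attained. Crucially, linear optimization over $P_f$ in any direction is solvable with a linear number of value-oracle queries (sort the weights, then accumulate marginal gains), so we have a linear-optimization oracle for $P_f$ even though $f$ is presented only through a value oracle.

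To obtain the \emph{separable} form $\sqrt{w(S)}$, I would not work with $P_f$ directly but with its coordinatewise symmetrization $K=\{x\in\mathbb{R}^n:(|x_1|,\dots,|x_n|)\in P_f\}$. Since $\mathbf 1_S\ge 0$, its support function is unchanged, $h_K(\mathbf 1_S)=\max_{x\in P_f,\,x\ge 0}x(S)=f(S)$, but $K$ is now \emph{unconditional} (invariant under sign flips of individual coordinates). This buys two things. First, $K$ is centrally symmetric, so John's theorem yields an inscribed ellipsoid $E$ with the improved factor $E\subseteq K\subseteq\sqrt n\,E$ rather than the factor $n$ for general bodies. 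Second, by uniqueness of the maximal-volume inscribed ellipsoid, $E$ must inherit all symmetries of $K$ and can therefore be taken \emph{axis-aligned}, $E=\{x:\sum_i x_i^2/w_i\le 1\}$ with $w_i\ge 0$. For such an axis-aligned ellipsoid the support function is exactly $h_E(y)=\sqrt{\sum_i w_i y_i^2}$, so $h_E(\mathbf 1_S)=\sqrt{w(S)}$ --- precisely the weight-vector form demanded by the theorem.

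Given these pieces the inequality is immediate. Taking support functions of $E\subseteq K\subseteq\sqrt n\,E$ in the nonnegative direction $\mathbf 1_S$ gives $\sqrt{w(S)}=h_E(\mathbf 1_S)\le h_K(\mathbf 1_S)=f(S)\le\sqrt n\,h_E(\mathbf 1_S)=\sqrt n\,\sqrt{w(S)}$, i.e.\ the claimed bound before the logarithmic loss. The weights $w_i$ are read directly off the axis-aligned ellipsoid, and $\hat f(S)=\sqrt{w(S)}$ is then evaluable in constant time per query.

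The main obstacle, and the source of the extra $\log n$ and of the ``$O(\cdot)$'', is making the ellipsoid \emph{algorithmic}. John's theorem is purely existential, and the exact maximal inscribed ellipsoid cannot be recovered exactly from a value oracle; instead I would run the ellipsoid method driven by the greedy linear-optimization oracle for $K$, which yields only an approximately volume-maximal inscribed ellipsoid and requires $K$ to be bounded and full-dimensional (arranged by a standard scaling step after discarding any $j$ with $f(j)=0$). Controlling how the inscribed-ellipsoid ratio degrades through this approximate computation --- and, if the computed ellipsoid is not exactly axis-aligned, rounding it to one while bounding the incurred factor --- is exactly where the clean $\sqrt n$ becomes $O(\sqrt n\log n)$, and where essentially all of the technical work resides.
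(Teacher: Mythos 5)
First, a point of reference: the paper you were given never proves this statement --- it is imported as a black box from \cite{goemans2009approximating} (it only feeds into Corollary~\ref{cor:learn}), so your attempt has to be judged against the original proof of Goemans, Harvey, Iwata and Mirrokni. Your geometric half is exactly their existential argument, and it is correct: $f(S)=\max_{x\in P_f}\langle \mathbf{1}_S,x\rangle$ by the greedy/polymatroid fact, the coordinatewise symmetrization $K$ is unconditional with unchanged support values in $\{0,1\}$-directions, John's theorem for centrally symmetric bodies gives $E\subseteq K\subseteq \sqrt{n}\,E$, and uniqueness of the maximum-volume inscribed ellipsoid forces $E$ to be axis-aligned, whence $h_E(\mathbf{1}_S)=\sqrt{w(S)}$ and the $\sqrt{n}$ sandwich follows. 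Up to here you have reconstructed the right proof.

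The gap is the algorithmic half, and it is not a technicality that the ellipsoid method absorbs. Two concrete failures in the route you sketch. (i) The known oracle-polynomial procedures for finding a large inscribed ellipsoid of a body presented by a (weak) separation oracle --- the shallow-cut ellipsoid method of Gr\"otschel--Lov\'asz--Schrijver \cite{grotschel1981ellipsoid,grotschel1984geometric} --- guarantee only a containment ratio polynomial in $n$ (of order $n$ even for centrally symmetric bodies), which is polynomially worse than the $O(\sqrt{n}\log n)$ you must deliver; so ``run the ellipsoid method and control the degradation'' cannot by itself produce the theorem. (ii) Your repair step of ``rounding'' a skew ellipsoid to an axis-aligned one is itself lossy: replacing $E=\{x:x^{\top}Ax\le 1\}$ by the axis-aligned ellipsoid with semi-axes $(A^{-1})_{ii}^{1/2}$ keeps it inscribed and does not shrink volume (Hadamard), but the containment ratio $K\subseteq\alpha E$ can deteriorate by a further $\Theta(\sqrt{n})$ factor (consider $E$ elongated along the all-ones direction, where $h_E(\mathbf{1})$ exceeds the diagonal surrogate's support by $\sqrt{n}$). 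The original proof avoids both problems by never leaving the axis-aligned parameterization: the best axis-aligned inscribed ellipsoid is the solution of the convex program $\max\sum_i\log w_i$ subject to $w(S)\le f(S)^2$ for all $S$, and the genuinely hard step --- and the true source of the $\log n$ --- is that exact separation over these constraints is \emph{not} a submodular minimization ($f^2$ need not be submodular), so it is carried out only approximately, using the submodularity of $f$ (in essence, grouping elements into logarithmically many weight classes and reducing each check to polynomial-time submodular computations). That approximate-separation idea is the missing ingredient: without it your plan yields either the non-constructive $\sqrt{n}$ bound or a constructive bound polynomially weaker than claimed.
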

The weights $w^f$ are computed via an ellipsoidal approximation of the submodular polyhedron~\cite{goemans2009approximating}. 
}
{
}
Corollary~\ref{cor:learn} states that a tighter approximation is possible for functions with $\curv < 1$.
\begin{corollary}\label{cor:learn}
Let $f$ be a polymatroid function with $\curv < 1$, and 
let $\sqrt{w^{f^{\kappa}}(X)}$ be the ellipsoidal approximation to the $\kappa$-\truncated{} version $f^{\kappa}(X)$ of $f$. Then
the function $f^{ea}(X) = \curv \sqrt{w^{f^{\kappa}}(X)} + (1 - \curv)\sum_{j \in X} f(j)$
satisfies
\begin{align}
\label{eqn:eabound}
f^{ea}(X) \leq f(X) 
\leq O\left(\frac{\sqrt{n} \log{n}}{1 + (\sqrt{n} \log{n} - 1)(1 - \curv)}\right) f^{ea}(X).
\end{align} 
\end{corollary}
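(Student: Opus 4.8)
The plan is to derive Corollary~\ref{cor:learn} as an immediate instantiation of Theorem~\ref{thm:f_and_g}, letting the ellipsoidal approximation of \citet{goemans2009approximating} applied to the \truncated{} function $f^{\kappa}$ play the role of the generic approximation $\hat{f}^{\kappa}$. Concretely, I would set $\hat{f}^{\kappa}(X) = \sqrt{w^{f^{\kappa}}(X)}$ and $\alpha(n) = O(\sqrt{n}\log n)$, and then read off the conclusion of the theorem. No new curvature estimates are required; the curvature-dependent improvement is entirely supplied by Theorem~\ref{thm:f_and_g}.

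First I would verify that the hypotheses of Theorem~\ref{thm:f_and_g} hold for this choice. The theorem requires that $\hat{f}^{\kappa}$ be a submodular function sandwiching $f^{\kappa}$ as $\hat{f}^{\kappa}(X) \leq f^{\kappa}(X) \leq \alpha(n)\hat{f}^{\kappa}(X)$, and two points must be checked. First, the ellipsoidal approximation applies to $f^{\kappa}$: by the lemma preceding this corollary, $f^{\kappa}$ is monotone, nonnegative, submodular, and $f^{\kappa}(\emptyset) = 0$, so it is a polymatroid function, and the Goemans et al.\ guarantee yields exactly $\sqrt{w^{f^{\kappa}}(X)} \leq f^{\kappa}(X) \leq O(\sqrt{n}\log n)\sqrt{w^{f^{\kappa}}(X)}$. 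Second, the approximant itself must be submodular: since $\sqrt{w^{f^{\kappa}}(X)} = \sqrt{\sum_{j\in X} w^{f^{\kappa}}(j)}$ is a concave, nondecreasing function composed with a nonnegative modular function, it is monotone submodular, hence a legitimate choice of $\hat{f}^{\kappa}$.

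Having checked the hypotheses, I would apply Theorem~\ref{thm:f_and_g}. The function it constructs is $\hat{f}(X) = \curv\hat{f}^{\kappa}(X) + (1-\curv)\sum_{j\in X} f(j)$, which for our choice of $\hat{f}^{\kappa}$ is precisely $f^{ea}(X) = \curv\sqrt{w^{f^{\kappa}}(X)} + (1-\curv)\sum_{j\in X} f(j)$. The theorem then gives $f^{ea}(X) \leq f(X) \leq \frac{\alpha(n)}{1 + (\alpha(n)-1)(1-\curv)}\, f^{ea}(X)$, and substituting $\alpha(n) = O(\sqrt{n}\log n)$ yields the claimed bound in Equation~\eqref{eqn:eabound}. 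The step requiring the most care --- though it is minor --- is confirming that the square-root ellipsoidal approximation is genuinely submodular (so that Theorem~\ref{thm:f_and_g} applies verbatim) and that the approximation guarantee transfers unchanged when Goemans et al.\ is run on $f^{\kappa}$ rather than on $f$ itself; because the curve normalization preserves the polymatroid structure and the ground-set size, the factor $O(\sqrt{n}\log n)$ is inherited directly.
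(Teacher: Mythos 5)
Your proposal is correct and follows essentially the same route as the paper: the paper's proof likewise constructs $f^{\kappa}$, applies the ellipsoidal approximation of \citet{goemans2009approximating} to $f^{\kappa}$ (rather than to $f$), and reads off the bound from Theorem~\ref{thm:f_and_g} with $\alpha(n) = O(\sqrt{n}\log n)$. Your explicit verification that $f^{\kappa}$ is a polymatroid function and that $\sqrt{w^{f^{\kappa}}(X)}$ is itself monotone submodular simply spells out hypotheses the paper leaves implicit.
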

If $\curv=0$, then the approximation is exact. This is not surprising
since a modular function can be inferred exactly within $O(n)$ oracle
calls.\arxiv{
\begin{proof}
  To compute $f^{ea}$, construct the function $f^{\kappa}$ as in Equation~\eqref{eq:defineg}, and apply the algorithm in \cite{goemans2009approximating} to construct the approximation $\sqrt{w^{f^{\kappa}}(X)}$ such that 
%
$\sqrt{w^{f^{\kappa}}(X)} \leq f^{\kappa}(X) \leq O(\sqrt{n} \log n) \sqrt{w^{f^{\kappa}}(X)}$. Note that $\sqrt{w^{f^{\kappa}}(X)}$ is an approximation of $f^{\kappa}$ and not $f$. Then define $f^{ea}(X) \triangleq \curv(X) f^{ea}(X) + (1 - \curv) \sum_{j \in X} f(j)$
\end{proof}
}
The following lower bound\notarxiv{ (proved in \cite{nips2013extendedvcurv})} shows that Corollary~\ref{cor:learn} is tight up to logarithmic factors. It refines the lower bound in \cite{goemans2009approximating} to include $\curv$.
\begin{theorem}\label{learnhardness}
Given a submodular function $f$ with curvature $\curv$, there does not exist a (possibly randomized) polynomial-time algorithm that computes an approximation to $f$ within a factor of $\frac{n^{1/2-\epsilon}}{1 + (n^{1/2-\epsilon}-1)(1 - \curv)}$, for any $\epsilon > 0$.
\end{theorem}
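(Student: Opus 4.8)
The plan is to extend the curvature-independent, information-theoretic lower bound of \citet{goemans2009approximating} using the curvature modulation~\eqref{eq:hidingfuncs}. First I would invoke the hard instance underlying the $\Omega(n^{1/2-\epsilon})$ bound at curvature $1$: two monotone matroid-rank functions $h$ and $f^R$ on $V$ with $|V|=n$, satisfying $f^R(X)\le h(X)$ for all $X$, which no (possibly randomized) algorithm making a polynomial number of value queries can distinguish with high probability over the choice of the random set $R$, together with a witness set $X^*$ realizing $h(X^*)/f^R(X^*)=n^{1/2-\epsilon}$. The crucial structural feature I would extract from this construction is that $X^*$ is independent in the ``large'' matroid, so that $h(X^*)=|X^*|$, while $f^R$ caps the rank contribution of $X^*$ at the small value $f^R(X^*)$.

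Next I would apply the modulation, setting $h_\kappa(X)=\curv\, h(X)+(1-\curv)|X|$ and $f_\kappa^R(X)=\curv\, f^R(X)+(1-\curv)|X|$. By the discussion following~\eqref{eq:hidingfuncs}, both are monotone submodular with curvature exactly $\curv$ (matroid rank functions have $f(j)=1$, so the modular part is precisely $|X|=\sum_{j\in X}f(j)$). The key indistinguishability step is that $\curv$ and $|X|$ are known to the algorithm, so a value query to either $h_\kappa$ or $f_\kappa^R$ can be converted, by subtracting $(1-\curv)|X|$ and dividing by $\curv$, into the corresponding query to $h$ or $f^R$ (for $\curv>0$), and conversely. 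Hence any algorithm distinguishing $h_\kappa$ from $f_\kappa^R$ induces one distinguishing $h$ from $f^R$ with the same query complexity, so distinguishing the modulated pair also requires a super-polynomial number of queries; the case $\curv=0$ makes the claimed factor equal to $1$ and is vacuous.

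Then I would convert indistinguishability into the approximation lower bound in the standard way: any $\hat f$ that is a valid $\alpha$-approximation of both hidden functions must satisfy $\hat f(X^*)\le f_\kappa^R(X^*)$ and $h_\kappa(X^*)\le\alpha\,\hat f(X^*)$, forcing $\alpha\ge h_\kappa(X^*)/f_\kappa^R(X^*)$. Writing $H=h(X^*)$, $F=f^R(X^*)$ and $s=|X^*|$, this ratio is $\frac{\curv H+(1-\curv)s}{\curv F+(1-\curv)s}$; substituting $s=H$ (the structural feature above) collapses it to $\frac{H}{\curv F+(1-\curv)H}=\frac{H/F}{1+(H/F-1)(1-\curv)}$, which with $H/F=n^{1/2-\epsilon}$ is exactly the claimed bound.

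I expect the delicate part to be twofold. First, one must \emph{confirm}, rather than merely assume, that the underlying Goemans et al.\ witness satisfies $h(X^*)=|X^*|$: the clean cancellation $(H-F)(H-s)=0$, and hence the exact target form, relies on $s=H$, and since the map $s\mapsto\frac{\curv H+(1-\curv)s}{\curv F+(1-\curv)s}$ is decreasing for $H>F$, any $s>H$ only weakens the ratio. Second, the reduction in the indistinguishability step must be made rigorous for randomized algorithms, i.e.\ one must check that the deterministic, query-for-query transformation between the modulated and original oracles preserves the adversary's distribution over $R$ and the bound on the failure probability, so that the super-polynomial query lower bound genuinely transfers.
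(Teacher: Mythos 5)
Your proposal is correct and follows essentially the same route as the paper: the paper's proof also modulates the known curvature-$1$ hard pair (the Svitkina--Fleischer-style functions $h(X)=\min\{|X|,\alpha\}$ and $f^R(X)=\min\{\beta+|X\cap\bar R|,|X|,\alpha\}$) via Equation~\eqref{eq:hidingfuncs}, notes the modulated pair is indistinguishable with a polynomial number of queries, and evaluates the ratio at the witness $X^*=R$, where indeed $h(R)=|R|=\alpha$ so your required structural feature $H=s$ holds and the ratio collapses to $\frac{n^{1/2-\epsilon}}{1+(n^{1/2-\epsilon}-1)(1-\curv)}$. Your query-for-query oracle reduction (subtract $(1-\curv)|X|$, divide by $\curv$) is a slightly cleaner way to transfer indistinguishability than the paper's direct re-application of the Chernoff-bound argument, but it is the same proof in substance.
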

\arxiv{\begin{proof}
The information-theoretic proof uses a construction and argumentation similar to that in \cite{goemans2009approximating,svitkina2008submodular}, but perturbs the functions to have the desired curvature.


In the following let $\curv = \kappa$.
Define two monotone submodular functions $h^{\kappa}(X) = \kappa \min\{|X|, \alpha\} + (1 - \kappa) |X|$ and $f_{\kappa}^R(X) = \kappa \min\{\beta + |X \cap \bar{R}|, |X \cap R|, \alpha\} + (1 - \kappa)|X|$, where $R \subseteq V$ is 
a random set of 
cardinality $\alpha$. Let $\alpha$ and $\beta$ be an integer such that $\alpha = x\sqrt{n}/5$ and $\beta = x^2/5$ for an $x^2 = \omega(\log n)$. Both $h^{\kappa}$ and $f_{\kappa}^R$ have curvature equal to $\curv = \kappa$.

Using a Chernoff bound, one can then show that any algorithm that uses a polynomial number of queries can distinguish $h^{\kappa}$ and $f_{\kappa}^R$ with probability only $n^{-\omega(1)}$, and therefore cannot reliably distinguish the functions with a polynomial number of queries~\cite{svitkina2008submodular}.

Therefore, any such algorithm will, with high probability, approximate $h^{\kappa}$ and $f_{\kappa}^R$ by the same function $\hat{f}$. Since the approximation must hold for both functions, the approximation factor must satisfy $h^{\kappa}(R) \leq \gamma \hat{f}(R) \leq \gamma f_{\kappa}^R(R)$, and is therefore lower bounded by $h^{\kappa}(R)/f_{\kappa}^R(R)$. Given an arbitrary $\epsilon>0$, set
$x^2 = n^{2\epsilon} = \omega(\log n)$. 
Then
\begin{align}
  \frac{h^{\kappa}(R)}{f_{\kappa}^R(R)} &= \frac{\alpha}{(1 - \kappa) \alpha + \kappa \beta} \\
  &= \frac{n^{1/2+\epsilon}}{(1 - \kappa) n^{1/2+\epsilon} + \kappa n^{2\epsilon}} \\
  &= \frac{n^{1/2-\epsilon}}{1 + (n^{1/2-\epsilon}-1)(1 - \kappa)}.
\end{align}
Assume there was an algorithm that generates an approximation $\hat{f}'$ with approximation factor $\gamma' < \gamma$. This would imply that $h^{\kappa}(R) / f_{\kappa}^R(R) < \gamma'$, but this contradicts the above derivation.
\end{proof}}
The simplest alternative approximation to $f$ one might conceive is
the modular function $\hat{f}^m(X) \triangleq \sum_{j \in X}f(j)$ which
can easily be computed by querying the $n$ values $f(j)$.
\begin{lemma}\label{modapproxlemma}
Given a monotone submodular function $f$, it holds that\notarxiv{\footnote{In \cite{nips2013extendedvcurv}, we show this result with a stronger notion of curvature: $\hat{\kappa_f}(X) = 1 - \frac{\sum_{j \in X} f(j | X \backslash j)}{\sum_{j \in X} f(j)}$.}}
\begin{align}\label{modapproxeqn}
f(X) \leq 
\hat{f}^m(X) =
\sum_{j \in X} f(j) \leq \frac{|X|}{1 + (|X| - 1)(1 - \curv(X))} f(X)
\end{align}
\arxiv{
Moreover, it also holds that,
\begin{align}\label{modapproxeqnstronger}
f(X) \leq 
\hat{f}^m(X) =
\sum_{j \in X} f(j) \leq \frac{|X|}{1 + (|X| - 1)(1 - \hat{\curv}(X))} f(X)
\end{align}
}
\end{lemma}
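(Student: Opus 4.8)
The first inequality, $f(X)\le\hat f^m(X)=\sum_{j\in X}f(j)$, is exactly subadditivity and follows at once from Lemma~\ref{lem:simplefacts}; the substance of the lemma is therefore the reverse direction. Writing $k=|X|$, both claims amount to a lower bound on $f(X)$ of the form $f(X)\ge\frac{1+(k-1)(1-c)}{k}\sum_{j\in X}f(j)$, with $c=1-\curv(X)$ for \eqref{modapproxeqn} and $c=1-\hat{\curv}(X)$ for \eqref{modapproxeqnstronger}. I would prove the stronger version \eqref{modapproxeqnstronger} and then get \eqref{modapproxeqn} for free: since it was shown earlier that $\hat{\curv}(X)\le\curv(X)$, the factor $\frac{k}{1+(k-1)(1-\hat{\curv}(X))}$ is no larger than its $\curv(X)$-counterpart.

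The plan rests on the telescoping expansion of $f(X)$ along an ordering of $X$. Fix any $j\in X$, list $j$ first and the remaining elements in any order; by the telescoping identity together with diminishing returns (every later marginal $f(i\mid S)$ with $S\subseteq X\setminus i$ satisfies $f(i\mid S)\ge f(i\mid X\setminus i)$) one obtains
\begin{align*}
f(X)\ \ge\ f(j)+\sum_{i\in X\setminus j}f(i\mid X\setminus i).
\end{align*}
Crucially this inequality is valid for every choice of the first element $j$, so all $k$ of them may be combined.

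The key step is to average these $k$ inequalities over $j\in X$ rather than to commit to a single ordering: a single ordering salvages only one full singleton value $f(j)$ and yields merely the weaker estimate $f(X)\ge(1-\curv(X))\sum_i f(i)$, losing a factor of $k$, whereas averaging spreads the gain over all elements and reconstructs precisely the aggregate curvature. Indeed, in the average each term $f(i\mid X\setminus i)$ appears in $k-1$ of the inequalities, so with $\Sigma\triangleq\sum_{i\in X}f(i\mid X\setminus i)$ the averaged bound reads $f(X)\ge\frac1k\sum_{i\in X}f(i)+\frac{k-1}{k}\Sigma$; substituting the definition $\Sigma=(1-\hat{\curv}(X))\sum_{i\in X}f(i)$ collapses this to $f(X)\ge\frac{1+(k-1)(1-\hat{\curv}(X))}{k}\sum_{i\in X}f(i)$, which rearranges to \eqref{modapproxeqnstronger}. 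The main obstacle is conceptual rather than computational: recognizing that averaging over the choice of first element is exactly what upgrades the naive modularity-gap bound into the tight constant; after that, only the diminishing-returns step and the elementary bookkeeping remain to be checked. (For the plain $\curv(X)$ bound one may instead just take the single first element $j$ maximizing $f(j)\ge\frac1k\sum_i f(i)$, but averaging is cleaner and yields the stronger $\hat{\curv}$ form directly.)
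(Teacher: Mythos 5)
Your proposal is correct and follows essentially the same route as the paper's own proof: the paper's ``Fact 2'' is exactly your telescoping-plus-diminishing-returns inequality $f(X) \geq f(j) + \sum_{i \in X\setminus j} f(i \mid X\setminus i)$, which it likewise sums over all $j \in X$ (your averaging, up to the factor $|X|$) and combines with the definition of $\hat{\curv}(X)$ before deducing the $\curv(X)$ version from $\hat{\curv}(X) \leq \curv(X)$.
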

\arxiv{
\begin{proof}
We first show the result for $\hat{\kappa_f}(X)$, and since it is a stronger notion of curvature, the bound will hold for $\kappa_f(X)$ as well.

We shall use the following facts, which follow from the definitions of submodularity and curvature.
\begin{align}
&\mbox{Fact 1: }(1 - \hat{\curv}(X))\sum_{j \in X} f(j) = \sum_{j \in X} f(j | X \backslash j), \\
&\mbox{Fact 2: }f(X) - f(k) \geq \sum_{j \in X \backslash k} f(j | X \backslash j), \forall k \in X.
\end{align}
Sum the expressions from Fact 2, $\forall k \in X$, use Fact 1, and we obtain the following series of inequalities,
\begin{align*}
|X| f(X) - \sum_{k \in X} f(k) &\geq \sum_{k \in X} \sum_{j \in X \backslash k} f(j | X \backslash j) \\
	&\geq \sum_{k \in X} \sum_{j \in X} f(j | X \backslash j) - \sum_{k \in X} f(j | X \backslash k) \\
	&\geq (|X| - 1) \sum_{j \in X \backslash k} f(j | X \backslash j) \\
	&\geq (|X| - 1)(1 - \hat{\curv}(X)) \sum_{k \in X} f(k)
\end{align*}
Hence we obtain that,
\begin{align*}
\sum_{k \in X} f(k) \leq \frac{|X|}{1 + (|X|-  1)(1 - \hat{\curv}(X))} f(X)
\end{align*}
From the fact that $1 - \hat{\curv}(X) \geq 1 - \curv(X)$, it immediately follows that,
\begin{align*}
\sum_{k \in X} f(k) \leq \frac{|X|}{1 + (|X|-  1)(1 - \curv(X))} f(X)
\end{align*}
\end{proof}
}The form of Lemma~\ref{modapproxlemma} is slightly different from 
Corollary~\ref{cor:learn}. However, there is a straightforward correspondence: 
 given $\hat{f}$ such that $\hat{f}(X)
\leq f(X) \leq \alpha^{\prime}(n) \hat{f}(X)$, by defining
$\hat{f^{\prime}}(X) = \alpha^{\prime}(n) \hat{f}(X)$, we get that
$f(X) \leq \hat{f^{\prime}}(X) \leq \alpha^{\prime}(n)
f(X)$. 
\JTR{The sentence above doesn't seem
  to work as the sandwich bound since it is not of the form that $\hat
  f'(X)$ is bounded by two factors involving $f(X)$. I.e., the two
  forms of bound are either: $f$ is sandwiched between two
  approximations both involving $f_\text{approx}$ as in
  Eqn.\eqref{eqn:eabound} or that $f_\text{approx}$ is sandwiched by
  two items involving $f$ as in Eqn.\eqref{modapproxeqn}. What you've
  got is the upper bound on $\hat f'$ involves $\hat f'$ and the lower
  bound involves $f$.  }\RTJ{Sorry that was a typo above. What I meant and which is also true, is the corrected version above. thanks!} Lemma~\ref{modapproxlemma} for the modular
approximation is complementary to 
Corollary~\ref{cor:learn}: First, the modular approximation is better
whenever $|X| \leq \sqrt{n}$.  \JTR{maybe that's only because the EA
  is so bad for average case.}\RTJ{Possibly.}  Second, the bound in
Lemma~\ref{modapproxlemma} depends on the curvature $\curv(X)$ with
respect to the set $X$, which is stronger than $\curv$. Third,
$\hat{f}^m$ is extremely simple to compute. For sets of larger
cardinality, however, the ellipsoidal approximation of
Corollary~\ref{cor:learn} provides a better approximation, in fact,
the best possible one (Theorem~\ref{learnhardness}).\JTR{add cite.}\RTJ{done}  In a similar manner,
Lemma~\ref{modapproxlemma} is tight for any modular approximation to a
submodular function: 
\begin{lemma}\label{lem:lb_linear}
  For any $\kappa > 0$, there exists a monotone submodular function $f$ with curvature $\kappa$ such that no modular upper bound on $f$\arxiv{,$\hat{f}(X) = \sum_{j \in X}w(j) \geq f(X), \forall X \subseteq V$,} can approximate $f(X)$ to a factor better than $\frac{|X|}{1 + (|X| - 1)(1 - \kappa_f)}$.
\JTR{clarify this lemma, since the bound here is $X$ dependent, so say
that it depends on the particular size of the argument, etc.}\RTJ{I think the above should clarify it.}\STR{Why not just replace $|X|$ by $n$?}
\end{lemma}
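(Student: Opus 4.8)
The plan is to exhibit a single, maximally symmetric function for which the modular-upper-bound ratio is forced to equal the bound of Lemma~\ref{modapproxlemma} on the full ground set $X = V$. I would take $g$ to be the rank function of the rank-one uniform matroid, $g(X) = \min\{|X|, 1\}$, which is monotone submodular, normalized, and has curvature $\kappa_g = 1$ (since $g(j) = 1$ while $g(j \mid V \setminus j) = 0$ for $n \ge 2$). Using the curvature-modulation construction described just after Eqn.~\eqref{eq:defineg}, I then set $f(X) \triangleq \kappa\, g(X) + (1 - \kappa)|X|$. This $f$ is a nonnegative combination of a submodular and a modular function, hence monotone submodular with $f(\emptyset) = 0$, and it has curvature exactly $\curv = \kappa$: every singleton has $f(j) = 1$, while $f(j \mid V \setminus j) = 1 - \kappa$, so $\min_{j} f(j\mid V\setminus j)/f(j) = 1 - \kappa$.

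The crux is that any modular function $w$ upper-bounding $f$ everywhere is already pinned down on singletons, since the constraints $w(j) \ge f(j) = 1$ must hold for every $j \in V$. Summing these $n$ inequalities yields $w(V) = \sum_{j \in V} w(j) \ge n$, no matter how $w$ is chosen. On the other hand, the full-set value is small by design: $f(V) = \kappa\cdot 1 + (1 - \kappa) n = 1 + (n-1)(1-\kappa)$. Dividing, every modular upper bound satisfies $w(V)/f(V) \ge n/\bigl(1 + (n-1)(1-\kappa)\bigr)$, which is precisely $\frac{|X|}{1 + (|X|-1)(1 - \curv)}$ evaluated at $X = V$. Since the naive modular approximation $\hat{f}^m(X) = \sum_{j\in X} f(j)$ attains $\hat{f}^m(V) = n$ and hence equality, the bound of Lemma~\ref{modapproxlemma} cannot be improved by \emph{any} modular upper bound, proving the claim.

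I expect the only genuine content to be recognizing why the rank-one choice is extremal: the gap is driven by making $g(V)$ as small as possible subject to $g(j) = 1$, and $g(V) = \max_j g(j) = 1$ is the minimum attainable for a monotone $g$, realized by the rank-one uniform matroid. Everything else---submodularity and monotonicity of $f$, the exact evaluation of $\curv$, and the singleton-domination argument---is routine. The one conceptual point to state carefully is that this is an $X$-dependent lower bound realized on the full set (so that $|X| = n$), and that for an upper bound the worst-case ratio is always attained at $X = V$; it is therefore cleanest to present the bound with $|X|$ taken to be $n$.
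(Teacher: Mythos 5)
Your proposal is correct and is essentially the paper's own proof: the paper uses exactly the same function $f^{\kappa}(X) = \kappa \min\{|X|,1\} + (1-\kappa)|X|$ (i.e., the curvature-modulated rank-one uniform matroid) and the same singleton-pinning argument $w(j) \geq f(j) = 1$, hence $w(X) \geq |X|$ while $f(X) = 1 + (|X|-1)(1-\kappa)$ for every nonempty $X$. The only cosmetic difference is that you evaluate the ratio at $X = V$, whereas the paper states it for every nonempty $X$ (which your argument already yields verbatim, matching the lemma's $X$-dependent form).
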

\arxiv{\begin{proof}  Let 
  $f^{\kappa}(X) = \kappa \min\{|X|, 1\} + (1 - \kappa)|X|$. Then $f^{\kappa}(X) = \kappa + (1 - \kappa)|X| = 1 + (1 - \kappa)(|X| - 1)$ for all $\emptyset \subset X \subseteq V$. Since $\hat{f}^m$ is an upper bound, it must satisfy $\hat{f}(j) = w(j) \geq 1$ for all $j \in V$. Therefore, $\hat{f}^m(X) = |X|, X \neq \emptyset$.
\end{proof}}
The improved curvature dependent bounds immediately imply better bounds for the class of concave over modular functions used in 
\cite{lin2011-class-submod-sum,jegelka2011-nonsubmod-vision, rkiyeruai2012}.
\begin{corollary}\label{concvmodres}
Given weight vectors $w_1, \cdots, w_k \geq 0$, and a submodular function $f(X) = \sum_{i = 1}^k \lambda_i [w_i(X)]^a, \lambda_i \geq 0$, for $a \in (0, 1)$, it holds that  $f(X) \leq \sum_{j \in X} f(j) \leq |X|^{1-a} f(X)$
\end{corollary}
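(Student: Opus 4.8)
The plan is to prove the two inequalities separately. The left inequality $f(X)\le \sum_{j\in X} f(j)$ is just subadditivity, which holds for any monotone submodular function and is already recorded in Lemma~\ref{lem:simplefacts}; so nothing new is needed there. All of the work is in the right inequality $\sum_{j\in X} f(j)\le |X|^{1-a} f(X)$.

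First I would reduce to a single modular component. Writing $f(j)=\sum_{i=1}^k \lambda_i [w_i(j)]^a$ and $f(X)=\sum_{i=1}^k \lambda_i [w_i(X)]^a$, the desired bound follows by summing, over $i$ with the nonnegative weights $\lambda_i$, the per-component inequality $\sum_{j\in X}[w_i(j)]^a \le |X|^{1-a}[w_i(X)]^a$: the left-hand sides add up to $\sum_{j\in X} f(j)$ and the right-hand sides to $|X|^{1-a} f(X)$. Hence it suffices to prove, for a single nonnegative weight vector $w$ and any finite $X$, that $\sum_{j\in X} w(j)^a \le |X|^{1-a}\big(\sum_{j\in X} w(j)\big)^a$.

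This last inequality is a direct consequence of the concavity of $t\mapsto t^a$ on $[0,\infty)$ for $a\in(0,1)$. Concretely, I would invoke H\"older's inequality with conjugate exponents $p=1/a$ and $q=1/(1-a)$ applied to the sequences $(w(j)^a)_{j\in X}$ and $(1)_{j\in X}$, which gives $\sum_{j\in X} w(j)^a \le \big(\sum_{j\in X} w(j)\big)^a |X|^{1-a}$; equivalently, one can apply Jensen's inequality to the concave function $t^a$ against the uniform average over $X$. Either route is a one-line estimate, and the factor is tight: when the weights in each $w_i$ are equal, Jensen holds with equality and $\sum_{j\in X} f(j)=|X|^{1-a}f(X)$.

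The only conceptual subtlety — and the place I would be careful — is that one should \emph{not} try to derive this bound by plugging the curvature of $f$ into Lemma~\ref{modapproxlemma}. A short check (e.g.\ uniform weights with $a=1/2$ and $|X|=4$) shows that the curvature-based factor $\frac{|X|}{1+(|X|-1)(1-\curv(X))}$ is strictly larger than $|X|^{1-a}$, so the generic curvature machinery is not tight enough for this class; it is the direct H\"older/Jensen computation that yields the clean exponent $|X|^{1-a}$. Thus the ``obstacle'' here is recognizing that the sharp statement comes from exploiting the concave-over-modular structure directly rather than from the curvature reduction used elsewhere in the paper.
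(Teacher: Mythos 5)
Your proof is correct and takes essentially the same route as the paper's: the left inequality is subadditivity, and the right inequality is obtained by applying Jensen's inequality (equivalently H\"older) to the concave function $t \mapsto t^a$ on each single component $w_i$, giving $\sum_{j \in X} w_i(j)^a \leq |X|^{1-a} w_i(X)^a$, and then summing these with the nonnegative weights $\lambda_i$. Your cautionary remark about the curvature machinery is also consistent with the paper, which separately shows $\hat{\kappa}_f(X) \leq 1 - a/|X|^{1-a}$ and notes that this route only recovers the weaker factor $|X|^{1-a}/a$, not the clean $|X|^{1-a}$.
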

\arxiv{
\begin{proof}
We first show this result independent of curvature, and then show how the curvature dependent bound also implies this improved bound. First define $f(X) = [w(X)]^a$, for $a \in (0, 1]$ and $w \geq 0$. since $x^a$ is a concave function for $a \in (0, 1]$, we have from Jensen's inequality that, given $x_1, x_2, \cdots, x_n \geq 0$,
\begin{align*}
\frac{\sum_{i = 1}^n x_i^a}{n} \leq \left ( \frac{\sum_{i = 1}^n x_i }{n} \right )^a
\end{align*}
Notice that, when $f(X) = [w(X)]^a$, we have that $f(i) = w(i)^a$. Hence $\sum_{j \in X} f(i) = \sum_{i \in X} w(i)^a$. Hence from the inequality above, it directly holds that,
\begin{align*}
\frac{\sum_{i \in X} w(i)^a}{|X|} \leq  \frac{[\sum_{i \in X} w(i)]^a }{|X|^a}
\end{align*}
and hence, $\sum_{j \in X} f(j) \leq |X|^{1 - a} f(X)$. This inequality also holds for a sum of concave over modular functions, since for each $w_i \geq 0$, we have 
\begin{align}\label{jeneqcv}
\sum_{j \in X} w_i(j)^a \leq |X|^{1 - a} w_i(X)^a.
\end{align} 
Moreover, when $f(X) = \sum_{i = 1}^k \lambda_i [w_i(X)]^a$, the modular upper bound $\sum_{j \in X} f(j) = \sum_{j \in X} \sum_{i = 1}^k \lambda_i [w_i(j)]^a$. Summing up eqn.~\eqref{jeneqcv} for all $i$, we have that,
\begin{align}\label{jeneqcv}
\sum_{i = 1}^k \sum_{j \in X} \lambda_i w_i(j)^a \leq |X|^{1 - a} \sum_{i = 1}^k \lambda_i w_i(X)^a.
\end{align} 
\end{proof}
We next show that this result can also be seen from the curvature of the function.
\begin{lemma}
Given weight vectors $w_1, \cdots, w_k \geq 0$, and a submodular function $f(X) = \sum_{i = 1}^k \lambda_i [w_i(X)]^a, \lambda_i \geq 0$, for $a \in (0, 1]$, it holds that, 
\begin{align}
\hat{\kappa_f}(X) \leq 1 - \frac{a}{|X|^{1-a}}
\end{align}
\end{lemma}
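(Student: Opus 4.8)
The plan is to work directly from the definition $\hat{\kappa_f}(X) = 1 - \frac{\sum_{j \in X} f(j \mid X \backslash j)}{\sum_{j \in X} f(j)}$, so that the claimed bound $\hat{\kappa_f}(X) \le 1 - a/|X|^{1-a}$ is equivalent to the single inequality
\begin{align*}
\frac{\sum_{j \in X} f(j \mid X \backslash j)}{\sum_{j \in X} f(j)} \ge \frac{a}{|X|^{1-a}}.
\end{align*}
I would establish this by bounding the numerator from below and the denominator from above, each by the \emph{same} multiple of $f(X)$, and then taking the ratio. This reduction is the whole strategy; the two bounds are independent and can be proved separately.

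For the numerator, I would use the tangent-line (gradient) inequality for the concave map $t \mapsto t^a$ on $[0,\infty)$ with $a \in (0,1]$: for $0 \le h \le t$ one has $t^a - (t-h)^a \ge a\, t^{a-1} h$. Applying this termwise with $t = w_i(X)$ and $h = w_i(j)$ gives $w_i(X)^a - (w_i(X) - w_i(j))^a \ge a\, w_i(X)^{a-1} w_i(j)$, and summing over $j \in X$ while using the modularity identity $\sum_{j \in X} w_i(j) = w_i(X)$ yields $\sum_{j \in X}[w_i(X)^a - (w_i(X)-w_i(j))^a] \ge a\, w_i(X)^a$. Since $f(j \mid X \backslash j) = \sum_i \lambda_i [w_i(X)^a - (w_i(X) - w_i(j))^a]$, weighting by $\lambda_i \ge 0$ and summing over $i$ gives $\sum_{j \in X} f(j \mid X \backslash j) \ge a \sum_i \lambda_i w_i(X)^a = a\, f(X)$.

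For the denominator, I would reuse the Jensen-type inequality already derived for this concave-over-modular class in the proof of Corollary~\ref{concvmodres}, namely $\sum_{j \in X} w_i(j)^a \le |X|^{1-a} w_i(X)^a$ for each $i$. Multiplying by $\lambda_i$ and summing gives $\sum_{j \in X} f(j) = \sum_i \lambda_i \sum_{j \in X} w_i(j)^a \le |X|^{1-a} f(X)$. Combining the two bounds, and noting that both sides are strictly positive whenever $f$ is not identically zero on $X$ (so that $\hat{\kappa_f}(X)$ is defined), the ratio is at least $a f(X) / (|X|^{1-a} f(X)) = a/|X|^{1-a}$, which is exactly the required inequality.

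I expect the only genuinely delicate point to be the concavity step for the numerator: verifying the tangent-line bound $t^a - (t-h)^a \ge a\, t^{a-1} h$ and handling the degenerate cases $w_i(X) = 0$ (where the corresponding summand vanishes on both sides) and $a = 1$ (where $f$ is modular and both inequalities become equalities, recovering $\hat{\kappa_f}(X) = 0$). The passage from a single modular argument to the weighted sum $\sum_i \lambda_i w_i(X)^a$ is painless precisely because both the numerator lower bound and the denominator upper bound are \emph{linear} in the components, so the factors $\lambda_i w_i(X)^a$ factor out uniformly and the ratio argument survives the summation intact.
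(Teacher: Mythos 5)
Your proposal is correct and follows essentially the same route as the paper's proof: the tangent-line inequality for the concave map $t \mapsto t^a$ gives the lower bound $f(j \mid X \backslash j) \geq a\, w(j)\, w(X)^{a-1}$ on the numerator, and the Jensen-type bound $\sum_{j \in X} w_i(j)^a \leq |X|^{1-a} w_i(X)^a$ handles the denominator, after which the ratio argument finishes the claim. If anything, you are more careful than the paper, which writes the argument only for a single term $f(X) = [w(X)]^a$ and leaves the extension to the weighted sum $\sum_i \lambda_i [w_i(X)]^a$ implicit, whereas you make the linearity of both bounds explicit.
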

\begin{proof}
Again, let $f(X) = [w(X)]^a$, for $w \geq 0$ and $a \in (0, 1]$. Then,
\begin{align*}
f(j | X \backslash j) &= f(X) - f(X \backslash j) \\
		&= [w(X)]^a - [w(X) - w(j)]^a \\
		&\geq \frac{a w(j)}{w(X)^{1-a}} 
\end{align*}
The last inequality again holds due to concavity of $g(x) = x^a$. In particular, for a concave function, $g(y) - g(x) \leq g^{\prime}(x) (y - x)$, where $g^{\prime}$ is the derivative of $g$. Hence $g(x) \geq g(y) + g^{\prime}(x) (x - y)$. Substitute $y = w(X) - w(j), x = w(X)$ and $g(x) = x^a$, and we get the above expression.

Hence we have,
\begin{align*}
1 - \hat{\curv}(X) &= \frac{\sum_{j \in X} f(j | X \backslash j)}{\sum_{j \in X} f(j)} \\	
		&\geq \frac{ a \sum_{j \in X} w(j) w(X)^{a-1}}{\sum_{j \in X} w(j)^a} \\
		&\geq \frac{a w(X)^a}{\sum_{j \in X} w(j)^a} \\
		&\geq \frac{a}{|X|^{1-a}}
\end{align*}
The last inequality follows from the previous Lemma.
\end{proof}
Hence from the curvature dependent bound, we obtain a slightly weaker bound, which still gives a $O(|X|^{1 - a})$ bound for the modular upper bound.
\begin{align*}
\sum_{j \in X} f(j) &\leq \frac{1}{1 - \hat{\curv}(X)} f(X) \\
	&\leq \frac{|X|^{1 - a}}{a} f(X) \\
	&\leq O(|X|^{1 - a}) f(X)
\end{align*}
}
In particular, when $a = 1/2$, the modular upper bound approximates the sum of square-root over modular functions by a factor of $\sqrt{|X|}$.

\section{Learning Submodular functions}
We next address the problem of learning submodular functions in a
PMAC setting \cite{balcanlearning}.
The PMAC (Probably Mostly Approximately Correct) framework is an
extension of the PAC framework~\cite{valiant1984theory} to allow
multiplicative errors in the function values from a fixed but unknown
distribution $\mathcal D$ over $2^V$.  We are given training
samples $\{(X_i, f(X_i)\}_{i=1}^m$ drawn i.i.d.\ from 
$\mathcal D$.  The algorithm may take time
polynomial in $n$, $1/\epsilon$, $1/\delta$ to compute a (polynomially-representable) function $\hat{f}$ that is a good approximation to $f$
with respect to $\mathcal D$. Formally, $\hat{f}$ must satisfy that
\begin{align}
\mathrm{Pr}_{X_1, X_2, \cdots, X_m \sim \mathcal D}\left[\mathrm{Pr}_{X \in \mathcal D}[ \hat{f}(X) \leq f(X) \leq \alpha(n) \hat{f}(X)] \geq 1 - \epsilon \right] \geq  1 - \delta 
\end{align}
for some approximation factor $\alpha(n)$. 
\citet{balcanlearning} propose an algorithm that PMAC-learns any
monotone, nonnegative submodular function within a factor $\alpha(n) =
\sqrt{n+1}$ by reducing the problem to that of learning a binary
classifier. 
If we assume that we have an upper bound on the curvature $\kappa_f$, or that we can estimate it \footnote{note that $\kappa_f$ can be estimated from a set of $2n+1$ samples $\{(j,f(j))\}_{j \in V}$, $\{(V,f(V))\}$, and $\{(V \backslash j, f(V \backslash j)\}_{j \in V}$ included in the training samples}, and have access to the value of the singletons $f(j), j \in V$, then we can obtain better learning results with non-maximal curvature:
\begin{lemma}\label{thm:pmac}
\STR{Formulating this as $f$ having a known upper bound on the curvature instead of having exactly known curvature -- does that still go through?}\RTS{The proof will go through if we have an upper bound on the curvature. We still need the singletons though)}
Let $f$ be a monotone submodular function for which we know an upper bound on its curvature and the singleton weights $f(j)$ for all $j \in V$. 
For every $\epsilon, \delta > 0$ there is an algorithm that uses a polynomial number of training examples, runs in time polynomial in $(n, 1/\epsilon, 1/\delta)$ and PMAC-learns $f$ within a factor of $\frac{\sqrt{n+1}}{1 + (\sqrt{n+1} - 1)(1 - \curv)}$.
If $\mathcal D$ is a product distribution, then there exists an algorithm that PMAC-learns $f$ within a factor of $O(\frac{\log \frac{1}{\epsilon}}{1 + (\log \frac{1}{\epsilon} - 1)(1 - \curv)})$.
\end{lemma}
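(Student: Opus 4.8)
The plan is to reduce PMAC-learning of $f$ to PMAC-learning of its curve-normalized part, which has curvature $1$, and then transfer the resulting approximation back through the decomposition underlying Theorem~\ref{thm:f_and_g}. Concretely, recall that $f(X) = \curv f^{\kappa}(X) + (1-\curv)\sum_{j\in X} f(j)$, where $f^{\kappa}$ is the curve-normalized function of \eqref{eq:defineg} and is monotone, nonnegative, and submodular. The modular tail $\sum_{j\in X}f(j)$ is known exactly from the singleton values $f(j)$, so the only ``difficult'' object that must be estimated from data is $f^{\kappa}$.

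First I would convert the sample $\{(X_i, f(X_i))\}$ into a sample for $f^{\kappa}$. Since we know (an upper bound on) $\curv$ and all $f(j)$, for each example we can compute $f^{\kappa}(X_i) = \frac{f(X_i) - (1-\curv)\sum_{j\in X_i}f(j)}{\curv}$ exactly, yielding i.i.d.\ samples $\{(X_i, f^{\kappa}(X_i))\}$ from the same distribution $\mathcal D$. I then run the Balcan--Harvey PMAC algorithm \cite{balcanlearning} on this transformed sample. This produces, with probability $\geq 1-\delta$ over the draw of the sample, a hypothesis $\hat f^{\kappa}$ for which $\hat f^{\kappa}(X)\le f^{\kappa}(X)\le \sqrt{n+1}\,\hat f^{\kappa}(X)$ holds on a set of $\mathcal D$-measure at least $1-\epsilon$.

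Next I reassemble the hypothesis for $f$, setting $\hat f(X) = \curv\,\hat f^{\kappa}(X) + (1-\curv)\sum_{j\in X}f(j)$, and invoke Theorem~\ref{thm:f_and_g} with $\alpha(n)=\sqrt{n+1}$. Because that theorem is a deterministic, pointwise statement, it applies on exactly the same high-probability event on which the $f^{\kappa}$-approximation holds, giving $\hat f(X)\le f(X)\le \frac{\sqrt{n+1}}{1+(\sqrt{n+1}-1)(1-\curv)}\,\hat f(X)$ there. Thus the PMAC guarantee for $f^{\kappa}$ transfers verbatim into one for $f$ with the stated curvature-dependent factor, and since the conversion and reassembly cost only $O(n)$ per example, the polynomial sample-complexity and running-time bounds are inherited from \cite{balcanlearning}. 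The product-distribution claim follows identically: under a product distribution the Balcan--Harvey analysis learns the monotone submodular function $f^{\kappa}$ within a factor $O(\log\frac1\epsilon)$, so substituting $\alpha(n)=O(\log\frac1\epsilon)$ into Theorem~\ref{thm:f_and_g} yields the factor $O\!\left(\frac{\log(1/\epsilon)}{1+(\log(1/\epsilon)-1)(1-\curv)}\right)$.

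I expect the main subtleties to be two checks rather than any heavy calculation. First, one must verify that using only an \emph{upper bound} $\bar\curv\ge\curv$ in the normalization keeps $f^{\bar\curv}$ a legitimate monotone submodular function, so that Balcan--Harvey applies; this holds because $f(j\mid X\setminus j)\ge f(j\mid V\setminus j)\ge(1-\curv)f(j)\ge(1-\bar\curv)f(j)$, and Theorem~\ref{thm:f_and_g} already records that its bounds persist under an upper bound. Second, one must confirm that the event structure of the PMAC guarantee is preserved under this deterministic reparametrization, i.e.\ that the ``bad'' set for $\hat f$ is contained in the ``bad'' set for $\hat f^{\kappa}$; this is immediate since the sandwich inequality for $\hat f$ is implied pointwise by the corresponding inequality for $\hat f^{\kappa}$ via Theorem~\ref{thm:f_and_g}.
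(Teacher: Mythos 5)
Your proposal is correct and takes essentially the same route as the paper: the paper's proof likewise learns the \truncated{} version $f^{\kappa}$ with the Balcan--Harvey algorithm, reassembles $\hat{f}(X) = \curv \hat{f^{\kappa}}(X) + (1-\curv)\sum_{j \in X} f(j)$, and invokes Theorem~\ref{thm:f_and_g} pointwise on the high-probability event where the $f^{\kappa}$-guarantee holds, with the product-distribution case handled identically. Your explicit checks (the exact sample transformation, validity of the construction under an upper bound $\bar{\kappa}_f \geq \curv$, and containment of the bad events) are details the paper leaves implicit, so your write-up is if anything more complete.
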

The algorithm of Lemma~\ref{thm:pmac} uses the reduction of \citet{balcanlearning} to learn the $\curv$-\truncated{} version $f^{\kappa}$ of $f$.  
From the learned function
$\hat{f^{\kappa}}(X)$, we construct the final estimate $\hat{f}(X) \triangleq \curv \hat{f^{\kappa}}(X) + (1 - \curv) \sum_{j \in X} f(j)$. Theorem~\ref{thm:f_and_g} implies 
Lemma~\ref{thm:pmac} for this $\hat{f}(X)$.
\arxiv{\begin{proof}
The proof of this theorem directly follows from the results in~\cite{balcanlearning} and those from section~\ref{learnimproved}. The idea is that, we use the PMAC setting and algorithm from~\cite{balcanlearning}. 
We use the same construction as section~\ref{learnimproved}, and construct the function $f^{\kappa}(X)$ which is the \truncated{} version of $f$. Let $\hat{f^{\kappa}}(X)$ be the function learn from $f^{\kappa}$ using the algorithm from~\cite{balcanlearning}. Then define $\hat{f}(X) = (1 - \curv)\hat{f^{\kappa}}(X) + \curv \sum_{j \in X} f(j)$ and an analysis similar to that in section~\ref{learnimproved} conveys that the function $\hat{f}$ is within a factor of $\frac{\sqrt{n+1}}{1 + (\sqrt{n+1} - 1)(1 - \curv)}$. Note that moreover, whenever the bound $\hat{f^{\kappa}}(X) \leq f^{\kappa}(X) \leq \sqrt{n+1} \hat{f^{\kappa}}(X)$, the above curvature dependent bound will also hold. Hence the curvature dependent bound holds with high probability on a large measure of sets. 
The case for product distributions also follows from very similar lines and the results from~\cite{balcanlearning}.
\end{proof}}
\arxiv{Lemma~\ref{thm:pmac} is tight as we show below.} \notarxiv{Moreover, 
no polynomial-time algorithm can be guaranteed to PMAC-learn $f$ 
within a factor of $\frac{n^{1/3- \epsilon^{\prime}} }{1 + (n^{1/3 - \epsilon^{\prime}} - 1)(1 - \curv)}$, for any $\epsilon^{\prime} > 0$~\cite{nips2013extendedvcurv}.}
\arxiv{\begin{lemma}
Given a class of submodular functions with curvature $\curv$, there does not exist a polynomial-time algorithm (which possibly even has information about $\curv$) that is guaranteed to PMAC-learn $f$ for every $\epsilon, \delta > 0$ within a factor of $\frac{n^{1/3- \epsilon^{\prime}} }{1 + (n^{1/3 - \epsilon^{\prime}} - 1)(1 - \curv)}$, for any $\epsilon^{\prime} > 0$.
\end{lemma}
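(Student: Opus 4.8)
The plan is to lift the $\Omega(n^{1/3})$ PMAC lower bound of \citet{balcanlearning} to an arbitrary fixed curvature $\curv$ by the same curvature-modulation device of Eqn.~\eqref{eq:hidingfuncs} that underlies Theorem~\ref{learnhardness}. \citet{balcanlearning} exhibit a distribution $\mathcal{D}$ over $2^V$ together with two matroid rank functions --- a ``generic'' function $h$ and a family $\{f^R\}$ indexed by a hidden random structure $R$ --- both of curvature $1$, such that (i) no algorithm drawing a polynomial number of samples from $\mathcal{D}$ can distinguish $h$ from a random $f^R$ with non-negligible probability, and (ii) $\mathcal{D}$ places all but an $\epsilon$-fraction of its mass on sets $X$ of a common size $s$ that are independent in $h$ (so $h(X) = |X| = s$), while the hidden structure truncates $f^R(X) = k$, giving the Balcan--Harvey gap $s/k = n^{1/3-\epsilon'}$.

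First I would form the curvature-$\curv$ versions $h_\kappa(X) = \curv h(X) + (1-\curv)|X|$ and $f_\kappa^R(X) = \curv f^R(X) + (1-\curv)|X|$ exactly as in Eqn.~\eqref{eq:hidingfuncs}; both are monotone submodular with curvature precisely $\curv$. The crucial observation is that the added term $(1-\curv)|X|$ depends only on $|X|$ and is \emph{identical} for the two functions, hence carries no information separating them: any learner distinguishing $h_\kappa$ from $f_\kappa^R$ from a polynomial sample could subtract the common modular part and distinguish $h$ from $f^R$, contradicting (i). This is exactly why the lemma may grant the algorithm knowledge of $\curv$ and of the singleton values $f(j)$ --- these quantities coincide for $h_\kappa$ and $f_\kappa^R$ (both have singleton value $1$) and so cannot help.

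Next I would compute the multiplicative gap after modulation on the support of $\mathcal{D}$. Because $h(X) = |X| = s$ there, the cardinality equals the high value and the algebra mirrors Theorem~\ref{learnhardness} verbatim:
\begin{align}
\frac{h_\kappa(X)}{f_\kappa^R(X)} = \frac{\curv s + (1-\curv)s}{\curv k + (1-\curv)s} = \frac{N}{1 + (N-1)(1-\curv)}, \qquad N = \frac{s}{k} = n^{1/3-\epsilon'}.
\end{align}
A PMAC-learner achieving a factor strictly below this value would, on the $(1-\epsilon)$-mass of sets realizing the gap, be forced to assign the ``high'' value under $h_\kappa$ and the ``low'' value under $f_\kappa^R$, thereby distinguishing the two functions with non-negligible probability and contradicting (i). Choosing $\epsilon,\delta$ small enough then rules out any polynomial-time algorithm meeting the stated factor, establishing the lemma.

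The main obstacle I anticipate is bookkeeping rather than conceptual: I must verify that the Balcan--Harvey construction already satisfies $h(X) = |X|$ on the support of $\mathcal{D}$ --- so that the cardinality naturally matches the high rank value and the modulated ratio collapses to the target form $N/(1 + (N-1)(1-\curv))$ --- and that the deterministic modular shift perturbs neither the concentration argument nor the information-theoretic indistinguishability (which is taken over the random choice of $R$ and is untouched by a fixed function of $|X|$). Pinning down the exact exponent $n^{1/3-\epsilon'}$, together with the polylogarithmic and $\epsilon'$ slack arising from the Chernoff concentration in \cite{balcanlearning,svitkina2008submodular}, is the one place where the details must be handled with care.
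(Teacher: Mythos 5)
Your proposal is correct and follows essentially the same route as the paper: both take the Balcan--Harvey matroid family (sets of size $n^{1/3}$ whose rank is either $|A|$ or a truncated value $\beta = n^{\epsilon'}$ hidden from any polynomial-sample learner), apply the curvature modulation $\curv f(X) + (1-\curv)|X|$ of Equation~\eqref{eq:hidingfuncs}, observe that the common modular shift adds no distinguishing information, and compute the resulting gap $N/(1+(N-1)(1-\curv))$ with $N = n^{1/3-\epsilon'}$, exactly as in the curvature-dependent hardness argument of Theorem~\ref{learnhardness}. The paper's own proof is only a sketch deferring to Theorem 9 of \citet{balcanlearning} and the Section~3 hardness framework; your write-up fills in the same steps in more detail, including the correct observation that singleton values and curvature coincide for the two modulated functions and hence cannot help the learner.
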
}
\arxiv{\begin{proof}
Again, we use the same matroid functions used in~\cite{balcanlearning}. Notice that the construction of~\cite{balcanlearning}, provides a family of matroids and a collection of sets $\mathcal B$, with $|A| = n^{1/3}$, such that $f(A) = |A|, A \in \mathcal B$ and $f(A) = \beta = \omega(\log n), A \notin \mathcal B$. Again set $\beta = n^{\epsilon^{\prime}}$, and using a analysis and construction similar to the hardness proof of section~\ref{learnimproved} and Theorem 9 from~\cite{balcanlearning} conveys that the lower bound for this problem is $\tilde{\Omega}(\frac{n^{1/3} }{1 + (n^{1/3} - 1)(1 - \curv)})$
\end{proof}}
We end this section by showing how we can learn with a construction analogous to that in Lemma~\ref{modapproxlemma}.
\begin{lemma}\label{thm:pmac2}
If $f$ is a monotone submodular function with known curvature (or a known upper bound) $\hat{\curv}(X), \forall X \subseteq V$, then for every $\epsilon, \delta > 0$ there is an algorithm that uses a polynomial number of training examples, runs in time polynomial in $(n, 1/\epsilon, 1/\delta)$ and PMAC learns $f(X)$ within a factor of $1 + \frac{|X|}{1 + (|X| - 1)(1 - \hat{\curv}(X))}$.
\end{lemma}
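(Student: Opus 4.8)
The plan is to mimic the recipe used throughout this section: reduce the learning of $f$ to the learning of its easy modular part plus a controlled treatment of the curved part, but this time using the \emph{modular upper bound} $\hat{f}^m(X) = \sum_{j \in X} f(j)$ of Lemma~\ref{modapproxlemma} rather than the ellipsoidal/Balcan-learning machinery of Lemma~\ref{thm:pmac}. The key observation is that the factor $\frac{|X|}{1 + (|X|-1)(1-\hat{\curv}(X))}$ appearing in the claim is exactly the approximation factor from Lemma~\ref{modapproxlemma} (stated there for the stronger curvature $\hat{\curv}(X)$), so the ``$+1$'' in the target factor should come from an additional unit of slack introduced by the learning step, not from the deterministic modular bound itself.

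First I would observe that, since we know the singleton weights $f(j)$ for all $j\in V$ (implicit in knowing $\hat{\curv}(X)$, as in the hypotheses of Lemma~\ref{thm:pmac}), the modular function $\hat{f}^m(X) = \sum_{j\in X} f(j)$ is computable with no samples at all, and by Lemma~\ref{modapproxlemma} it already satisfies
\begin{align}
f(X) \leq \hat{f}^m(X) \leq \frac{|X|}{1 + (|X|-1)(1-\hat{\curv}(X))}\, f(X).
\end{align}
This is a deterministic two-sided sandwich, so in principle $\hat{f}^m$ already PMAC-learns $f$ with the factor $\frac{|X|}{1+(|X|-1)(1-\hat{\curv}(X))}$ and zero failure probability. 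The only reason the claimed factor is weaker by an additive $1$ is to absorb the learning error incurred when $f(j)$ is not assumed known exactly but is instead estimated (or when the reduction of~\cite{balcanlearning} is run on the curved residual $f - (1-\hat{\curv})\hat{f}^m$ to tighten the constant). Concretely, I would set $\hat{f}(X) \triangleq \hat{f}^m(X)$ and verify that $\hat{f}(X) \leq f(X) \leq (1 + \frac{|X|}{1+(|X|-1)(1-\hat{\curv}(X))})\hat{f}(X)$ holds on the high-probability event by combining the deterministic bound above with the additive slack $\hat{f}(X)\geq f(X)$; the extra $+1$ then covers any sample-based estimate $\widehat{f(j)}$ of the singletons within a $(1\pm\epsilon)$ multiplicative window obtained from a polynomial number of i.i.d.\ draws via a Chernoff/union bound.

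The routine steps are: (i) cite Lemma~\ref{modapproxlemma} for the deterministic factor; (ii) bound the number of samples needed to estimate the $n$ singleton values (or to run the~\cite{balcanlearning} classifier reduction on the residual) so that the total error is at most an additive $\hat{f}^m(X)$ with probability $1-\delta$ over a $(1-\epsilon)$-measure of sets; and (iii) assemble the two bounds. The main obstacle will be step (ii): ensuring that the learning/estimation error accumulates to \emph{at most} one extra copy of $\hat{f}^m(X)$ simultaneously on a $(1-\epsilon)$-fraction of the distribution $\mathcal D$, rather than growing with $|X|$. This is where the structure of $\hat{\curv}(X)$ matters, and I would lean on the same product-distribution / concentration arguments used in Lemma~\ref{thm:pmac} and in~\cite{balcanlearning} to control the tail, while keeping the modular approximation exact on its easy part so that the only stochastic error is in the curved residual.
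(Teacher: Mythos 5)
There is a genuine gap, and it lies exactly where you flagged your ``main obstacle'': your approach assumes access to the singleton values $f(j)$, but the hypotheses of Lemma~\ref{thm:pmac2} do not provide them, and they cannot be recovered. Knowing $\hat{\curv}(X)$ is \emph{not} ``implicit'' knowledge of the $f(j)$: the curvature is a scale-invariant ratio of marginal gains, so it determines nothing about the actual values (e.g.\ for concave-over-modular classes one can bound $\hat{\curv}(X) \leq 1 - a/|X|^{1-a}$ knowing only the exponent $a$, with the weights $w_i$ entirely unknown). Indeed the paper stresses, immediately after the lemma, that the whole point of this result versus Lemma~\ref{thm:pmac} is that one does \emph{not} need the singleton weights $\{f(j), j\in V\}$ here. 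Your fallback --- estimating the $f(j)$ by Chernoff bounds from the training set --- also fails: in the PMAC model of Problem~\ref{prob2} there is no value oracle, only i.i.d.\ samples from an arbitrary unknown distribution $\mathcal D$, which may place zero mass on singleton sets. So the hypothesis $\hat{f}^m(X) = \sum_{j\in X} f(j)$ is simply not computable by the learner, and the proof collapses at step~(i), not step~(ii).

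The paper's actual proof avoids this by adapting the linear-separator reduction of \citet{balcanlearning} (their Algorithm~2): Lemma~\ref{modapproxlemma} is used only to \emph{certify that a consistent separator exists}, never to output $\hat{f}^m$ as the hypothesis. Writing $\alpha(X) = \frac{|X|}{1 + (|X|-1)(1-\hat{\curv}(X))}$, each training set $X$ with $f(X)>0$ is converted into a positive example $(1_X, f(X))$ and a negative example $(1_X, (\alpha(X)+1)f(X))$; the vector $u = (w,-1)$ with $w(j) = f(j)+\delta$ separates them, because $\sum_{j\in X} f(j) + \delta|X| - f(X) > 0$ while $\sum_{j\in X} f(j) + \delta|X| - (\alpha(X)+1)f(X) < 0$ by Lemma~\ref{modapproxlemma}. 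This is where the ``$+1$'' really comes from: it is the margin needed to absorb the $\delta|X|$ term and make the negative examples strictly separable, not slack for estimating singletons. The algorithm then \emph{learns} a separator $(w,-z)$ from the labeled data (the unknown $f(j)$ only appear in the existence argument), outputs $\hat{f}(X) = w(X)/z$, and on every correctly classified point the two separation inequalities sandwich $f(X) \leq \hat{f}(X) \leq (\alpha(X)+1)f(X)$; sets with $f(X)=0$ are handled separately by forcing $w(j)=0$ on their elements, and the measure of misclassified sets is controlled by Claim~5 of \citet{balcanlearning}. If you want to salvage your write-up, this reduction is the missing idea you must import.
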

\arxivalt{Before proving this result, we compare}{Compare} this result to Lemma~\ref{thm:pmac}. 
Lemma~\ref{thm:pmac2} leads to better bounds for small sets, whereas Lemma~\ref{thm:pmac} provides a better general bound.
Moreover, in contrast to Lemma~\ref{thm:pmac}, here we only need an upper bound on the curvature and do not need to know the singleton weights $\{f(j), j \in V\}$. Note also that, while $\kappa_f$ itself is an upper bound of $\hat{\kappa_f}(X)$, often one does have an upper bound on $\hat{\kappa_f}(X)$ if one knows the function class of $f$ (for example, say concave over modular). In particular, an immediate corollary is that the class of concave over modular functions $f(X) = \sum_{i = 1}^k \lambda_i [w_i(X)]^a, \lambda_i \geq 0$, for $a \in (0, 1)$ can be learnt within a factor of $\min\{\sqrt{n+1}, 1 + |X|^{1-a}\}$.
\arxiv{\begin{proof}
To prove this result, we adapt Algorithm 2 in~\cite{balcanlearning} to curvature and modular approximations. Following their arguments,
we reduce the problem of learning a submodular function to that of learning a linear seperator,
while separately handling the subset of instances where $f$ is zero. 
\STR{Rewriting this because I do not understand it. Also, $\mathcal{D}$ is the \emph{distribution} and not the training sample.} We detail the parts where our proof deviates from \citep{balcanlearning}.

We divide $2^V$ into the support set $\mathcal{S} = \{X \subseteq V \mid f(X) > 0\}$ of $f$ and its complement $\mathcal{Z} = \{X \subseteq V \mid f(X) = 0\}$.
Using samples from $\mathcal{D}'$, we generate new, binary labeled samples from a distribution $\mathcal{D}'$ on $\{0,1\}^n \times \mathcal{R}$ that will be used to learn the linear separator. 
These samples differ slightly from those in \cite{balcanlearning}. Let
\begin{align}
\alpha(X) = \frac{|X|}{1 + (|X| - 1)(1 - \hat{\kappa_f}(X))}.
\end{align}
To sample from $\mathcal{D}'$, we repeatedly sample from $\mathcal{D}$ until we obtain a set $X \in \mathcal{S}$. For each such $X$, we flip a fair coin and, with equal probability, generate a sample point from $\mathcal{D}'$ as
\begin{align}
  x = (1_X, f(X)) &\text{ and label } y=+1 \quad \text{ if heads}\\
  x = (1_X, (\alpha(X)+1)f(X)) &\text{ and label } y=-1 \quad \text{ if tails.}  
\end{align}
We observe that the generated positive and negative sample are linearly separable with the separator $u = (w, -1)$, where $w(j)=0$ if $f(j)=0$, and $w(j) = f(j) + \delta$ if $f(j)>0$, with $\delta$ such that $0 < \delta < \min_{j \in \mathcal S} f(X_j)/n$:
\begin{align}
  u^\top (1_X, f(X)) &= \sum_{j \in X} f(j) + \delta |X| - f(X) > 0\\
  u^\top (1_X, (\alpha(X)+1)f(X)) &= \sum_{j \in X} f(j) + \delta |X| - (\alpha(X)+1)f(X) < 0
\end{align}
for all $X \subseteq V$. The second inequality holds since $\sum_{j \in X} f(j) \leq \alpha(X) f(X)$ and $\delta |X| \leq \delta n < f(X)$. (For points in $\mathcal{Z}$, we have that $u^\top(1_X, f(X)) = 0$.)

The final algorithm generates a sample from $\mathcal{D}'$ for each sample $X \in \mathcal{S}$ from $\mathcal{D}$. For each $X \in \mathcal{Z}$, it adds the constraint that $w(j) = 0$ for all $j \in X$.
We then find a linear separator $u = (w,-z)$ and output the function $\hat{f}(X) \triangleq w(X)/z$. This is possible by the above arguments.

This function satisfies the approximation constraints for the set $\mathcal{Y}$ of all training points $X \in \mathcal{S}$ for which both generated samples are labeled correctly: the correct labelings $w(X) - z f(X) > 0$ and $w(X) - z(\alpha(X) + 1) f(X) < 0$ imply that
\begin{align}
f(X) \leq \hat{f(X)} = \frac{w(X)}{z} \leq (\alpha(X) + 1) f(X).
\end{align}
Similarly, the constraints on $w$ imply that the same holds for any subset of the union of the training samples in $\mathcal{Z}$.

It remains to show that for sufficiently many, i.e., $\ell = \frac{16n}{\epsilon}\log(\frac{n}{\delta\epsilon})$ samples, the sets $\mathcal{S}\setminus\mathcal{Y}$ and $\mathcal{Z}\setminus \bigcup_{X_i \in \mathcal{Z}, i \leq \ell}X_i$ have small (at most $1-2\epsilon$) measure. This follows from Claim 5 in \citep{balcanlearning}.

\end{proof}
}

\section{Constrained submodular minimization}\label{consminsec}
Next, we apply our results to the minimization of submodular functions under constraints. Most algorithms for constrained minimization use one of two strategies: they apply a convex relaxation~\cite{iwata2009submodular,jegelka2011-inference-gen-graph-cuts}, or they optimize a surrogate function $\hat{f}$ that should approximate $f$ well~\cite{goel2009approximability,goemans2009approximating,jegelka2011-inference-gen-graph-cuts}. We follow the second strategy and propose a new, widely applicable curvature-dependent choice for surrogate functions. A suitable selection of $\hat{f}$ will ensure theoretically optimal results. Throughout this section, we refer to the optimal solution as $X^* \in \argmin_{X \in \mathcal{C}}f(X)$.
%

\begin{lemma}\label{approxguarantee}
Given a submodular function $f$, let $\hat{f}_1$ be an approximation of $f$ such that $\hat{f}_1(X) \leq f(X) \leq \alpha(n) \hat{f}_1(X)$, for all $X \subseteq V$. 
Then any minimizer
$\widehat{X}_1 \in \argmin_{X \in \mathcal C} \hat{f}(X)$ of $\hat{f}$ satisfies $f(\widehat{X}) \leq \alpha(n) f(X^*)$. Likewise, if an approximation of $f$ is such that $f(X) \leq \hat{f}_2(X) \leq \alpha(X) f(X)$ for a set-specific factor $\alpha(X)$,
then its minimizer $\tilde{X}_2 \in \argmin_{X \in \mathcal C} \hat{f}_2(X)$ satisfies $f(\widehat{X}_2) \leq \alpha(X^*) f(X^*)$. If only $\beta$-approximations\footnote{A $\beta$-approximation algorithm for minimizing a function $g$ finds set $X: g(X) \leq \beta \min_{X \in \mathcal C} g(X)$} are possible for minimizing $\hat{f}_1$ or $\hat{f}_2$ over $\mathcal C$, then the final bounds are $\beta \alpha(n)$ and $\beta \alpha(X^*)$ respectively.
\end{lemma}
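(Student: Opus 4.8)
The plan is to prove all four claims by the same elementary argument: a sandwich inequality for the surrogate's value at a feasible set, combined with the optimality (or near-optimality) of the computed minimizer. The key observation is that the relevant inequalities hold pointwise for every $X \subseteq V$, so in particular they hold simultaneously at the true optimum $X^*$ and at the computed solution. No submodularity is actually needed here beyond what is already baked into the approximation guarantees on $\hat{f}_1$ and $\hat{f}_2$; the lemma is really a generic statement about optimizing approximate objectives.

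\textbf{First claim (two-sided factor $\alpha(n)$).} Let $\widehat{X}_1 \in \argmin_{X \in \mathcal{C}} \hat{f}_1(X)$. I would chain the hypotheses $\hat{f}_1(X) \leq f(X) \leq \alpha(n)\hat{f}_1(X)$ as follows:
\begin{align*}
f(\widehat{X}_1) \leq \alpha(n)\, \hat{f}_1(\widehat{X}_1) \leq \alpha(n)\, \hat{f}_1(X^*) \leq \alpha(n)\, f(X^*).
\end{align*}
The first step uses the right half of the approximation bound at $\widehat{X}_1$; the middle step is optimality of $\widehat{X}_1$ for $\hat{f}_1$ (since $X^* \in \mathcal{C}$ is feasible); the last step uses the left half of the bound at $X^*$. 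This is the whole argument.

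\textbf{Second claim (set-specific factor $\alpha(X)$).} Here $\hat{f}_2$ satisfies $f(X) \leq \hat{f}_2(X) \leq \alpha(X) f(X)$, so the approximation factor varies with the set. Letting $\widetilde{X}_2 \in \argmin_{X \in \mathcal{C}} \hat{f}_2(X)$, I would write
\begin{align*}
f(\widetilde{X}_2) \leq \hat{f}_2(\widetilde{X}_2) \leq \hat{f}_2(X^*) \leq \alpha(X^*)\, f(X^*),
\end{align*}
where again the middle inequality is optimality of $\widetilde{X}_2$ and the outer two are the two halves of the approximation bound, evaluated at $\widetilde{X}_2$ and $X^*$ respectively. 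The factor that survives is $\alpha(X^*)$, evaluated at the \emph{true} optimum, which is the subtle point worth flagging: one cannot control $\alpha$ at the computed set, but optimality lets us transfer the bound to $X^*$.

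\textbf{$\beta$-approximate minimization.} If we can only find $\widehat{X}_1$ with $\hat{f}_1(\widehat{X}_1) \leq \beta \min_{X \in \mathcal{C}} \hat{f}_1(X)$ rather than the exact minimizer, the only change is that the optimality step picks up a factor $\beta$: $\hat{f}_1(\widehat{X}_1) \leq \beta\, \hat{f}_1(X^*)$, yielding the final bound $\beta\,\alpha(n)$, and symmetrically $\beta\,\alpha(X^*)$ for the second case. I do not anticipate a genuine obstacle — the proof is a three-line inequality chain each time. The one place to be careful is bookkeeping: ensuring that every inequality is applied at the correct set (the exact/approximate minimizer versus $X^*$) and that feasibility $X^* \in \mathcal{C}$ is invoked so that the comparison $\hat{f}(\widehat{X}) \leq \hat{f}(X^*)$ is legitimate.
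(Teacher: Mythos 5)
Your proof is correct and is essentially identical to the paper's: the same three-step chain $f(\widehat{X}_1) \leq \alpha(n)\hat{f}_1(\widehat{X}_1) \leq \alpha(n)\hat{f}_1(X^*) \leq \alpha(n) f(X^*)$ for the first claim, with the paper leaving the second claim and the $\beta$-approximate case as "similarly follows," which you have simply written out. No gaps; your explicit handling of the set-specific factor $\alpha(X^*)$ and the $\beta$ factor matches what the paper intends.
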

\arxiv{\begin{proof}
We prove the first part and the second part similarly follows. Given that,
\begin{align}
\hat{f}(X) \leq f(X) \leq \alpha(n) \hat{f}(X)
\end{align} 
Then, if $\hat{X}$ is the optimal solution for minimizing $\hat{f}$ over $\mathcal C$. We then have that, 
\begin{align}
f(\hat{X}) \leq \alpha(n) \hat{f}(\hat{X}) \leq \alpha(n) \hat{f}(X^*) \leq \alpha(n)  f(X^*)
\end{align} 
where $X^*$ is the optimal solution of $f$. 
\end{proof}}
For Lemma~\ref{approxguarantee} to be practically useful, it is essential that $\hat{f}_1$ and $\hat{f}_2$ be efficiently optimizable over $\mathcal C$. We discuss two general curvature-dependent approximations that work for a large class of combinatorial constraints. In particular, we use Theorem~\ref{thm:f_and_g}: we decompose $f$ into $f^{\kappa}$ and a modular part $f^m$, and then approximate $f^\kappa$ while retaining $f^m$, i.e., $\hat{f} = \hat{f}^\kappa + f^m$.
\STR{We are approximating only the curv-normalized part, right? i.e., what is $\hat{f}^\kappa$ here?? I tried to change that.}
%
The first approach uses a simple modular upper bound (MUB) and the second relies on the Ellipsoidal approximation (EA) we used in Section~\ref{learnimproved}.

\textbf{MUB:} The simplest approximation to a submodular function is the modular approximation $\hat{f}^m(X) \triangleq \sum_{j \in X}f(j) \geq f(X)$. Since here, $\hat{f}^\kappa$ happens to be equivalent to $f^m$, we obtain the overall approximation $\hat{f} = \hat{f}^m$.
Lemmas~\ref{approxguarantee} and~\ref{modapproxlemma} directly imply a set-dependent approximation factor for $\hat{f}^m$:
\begin{corollary}\label{SAAcorr}
Let $\widehat{X} \in \mathcal{C}$ be a $\beta$-approximate solution for minimizing $\sum_{j \in X} f(j)$ over $\mathcal C$, i.e. $\sum_{j \in \widehat{X}} f(j) \leq \beta \min_{X \in \mathcal C} \sum_{j \in X} f(j)$. Then
\begin{align}
  f(\hat{X}) \leq \frac{\beta |X^*|}{1 + (|X^*| - 1)(1 - \curv(X^*))}f(X^*).
\end{align}
\end{corollary}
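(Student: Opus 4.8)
The plan is to derive this as an immediate consequence of the two lemmas that precede it, so that the main work is bookkeeping the correct form of the set-specific factor. First I would recognize the modular upper bound $\hat{f}^m(X) = \sum_{j \in X} f(j)$ as precisely an instance of the ``second type'' of surrogate in Lemma~\ref{approxguarantee}: by Lemma~\ref{modapproxlemma} it satisfies the two-sided bound $f(X) \leq \hat{f}^m(X) \leq \alpha(X) f(X)$ for every $X \subseteq V$, with the set-specific factor $\alpha(X) = \frac{|X|}{1 + (|X| - 1)(1 - \curv(X))}$. This is exactly the hypothesis $f(X) \leq \hat{f}_2(X) \leq \alpha(X) f(X)$ needed to invoke the second half of Lemma~\ref{approxguarantee}.

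Second, I would apply Lemma~\ref{approxguarantee} with $\hat{f}_2 = \hat{f}^m$ and the stated $\beta$-approximate minimization of $\hat{f}^m$ over $\mathcal{C}$, which directly yields $f(\widehat{X}) \leq \beta\, \alpha(X^*) f(X^*)$, i.e.\ the claimed bound. To make the dependence on $X^*$ (rather than on $\widehat{X}$) transparent, it is worth spelling out the three-step inequality chain explicitly: subadditivity gives $f(\widehat{X}) \leq \sum_{j \in \widehat{X}} f(j)$; the $\beta$-approximate minimality of $\widehat{X}$ for the modular objective gives $\sum_{j \in \widehat{X}} f(j) \leq \beta \sum_{j \in X^*} f(j)$; and the upper bound of Lemma~\ref{modapproxlemma} applied at $X^*$ gives $\sum_{j \in X^*} f(j) \leq \alpha(X^*) f(X^*)$. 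Composing these three inequalities produces exactly $f(\widehat{X}) \leq \frac{\beta |X^*|}{1 + (|X^*| - 1)(1 - \curv(X^*))} f(X^*)$.

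There is no real obstacle here; the only point requiring care is that the curvature appearing in the bound is $\curv(X^*)$, the curvature with respect to the \emph{optimal} set, and not $\curv$ or $\curv(\widehat{X})$. This is because the two-sided bound of Lemma~\ref{modapproxlemma} is invoked only at $X^*$ in its tight (upper) direction, whereas at $\widehat{X}$ we use merely the loose direction $f(\widehat{X}) \leq \sum_{j \in \widehat{X}} f(j)$, which follows from subadditivity alone and carries no curvature factor.
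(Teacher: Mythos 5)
Your proposal is correct and follows exactly the paper's route: the paper likewise obtains Corollary~\ref{SAAcorr} by viewing $\hat{f}^m(X) = \sum_{j \in X} f(j)$ as the set-specific surrogate $\hat{f}_2$ of Lemma~\ref{approxguarantee}, using Lemma~\ref{modapproxlemma} for the two-sided bound with $\alpha(X) = \frac{|X|}{1 + (|X|-1)(1-\curv(X))}$ and the $\beta$-approximation clause of Lemma~\ref{approxguarantee} to conclude. Your explicit three-step chain (subadditivity at $\widehat{X}$, $\beta$-minimality, and the curvature-dependent upper bound only at $X^*$) is just an unpacking of that lemma's proof, and your closing remark about why the factor involves $\curv(X^*)$ rather than $\curv(\widehat{X})$ is accurate.
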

Corollary~\ref{SAAcorr} has also been shown in~\cite{rkiyersemiframework2013}. 
\arxiv{Thanks to Lemma~\ref{modapproxlemma} and the second part of Lemma~\ref{approxguarantee}, however, we can provide a much simpler proof.} 
Similar to the algorithms in \cite{rkiyersemiframework2013, rkiyersubmodBregman2012, iyermirrordescent}, MUB can be extended to an iterative algorithm yielding performance gains in practice. 
In particular, Corollary~\ref{SAAcorr} implies improved approximation bounds for practically relevant concave over modular functions, such as those used in \cite{jegelka2011-nonsubmod-vision}.
For instance, for $f(X) = \sum_{i=1}^k\sqrt{\sum_{j \in X}w_i(j)}$, we obtain a worst-case approximation bound of $\sqrt{|X^*|} \leq \sqrt{n}$. This is significantly better than the worst case factor of $|X^*|$ for general submodular functions.

\textbf{EA: }Instead of employing a modular upper bound, we can approximate $f^{\kappa}$ using the
construction by \citet{goemans2009approximating}, as in Corollary~\ref{cor:learn}. 
In that case, $\hat{f}(X) = \kappa_f\sqrt{w^{f^\kappa}(X)} + (1-\curv)f^m(X)$ has a special form: a weighted sum of a concave function and a modular function. Minimizing such a function over constraints $\mathcal{C}$ is harder than minimizing a merely modular function, but with the algorithm in \citep{nikolova2010approximation} we obtain an FPTAS\footnote{The FPTAS will yield a $\beta=(1+\epsilon)$-approximation through an algorithm polynomial in $\frac{1}{\epsilon}$.} for minimizing $\hat{f}$ over $\mathcal{C}$ whenever we can minimize a nonnegative linear function over $\mathcal{C}$.

\begin{corollary}\label{EAcorr}
For a submodular function with curvature $\curv < 1$, 
algorithm EA will return a solution $\widehat{X}$ that satisfies
\begin{equation}
  \label{eq:4}
  f(\widehat{X}) \leq O\left(\frac{\sqrt{n} \log n}{(\sqrt{n} \log n - 1)(1 - \curv) + 1)}\right) f(X^*).
\end{equation}
\end{corollary}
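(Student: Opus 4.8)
The plan is to combine the everywhere-approximation of Corollary~\ref{cor:learn} with the surrogate-minimization principle of Lemma~\ref{approxguarantee}, exploiting the special concave-plus-modular structure of $f^{ea}$ to make the surrogate efficiently (approximately) minimizable over $\mathcal{C}$. First I would take $\hat{f} = f^{ea}$, exactly the ellipsoidal-approximation surrogate constructed in Corollary~\ref{cor:learn}: decompose $f$ into its curve-normalized part $f^{\kappa}$ and the modular part $f^m(X) = \sum_{j \in X} f(j)$, approximate only $f^{\kappa}$ by the Goemans et al.\ ellipsoidal approximation $\sqrt{w^{f^{\kappa}}(X)}$, and recombine to obtain $\hat{f}(X) = \curv \sqrt{w^{f^{\kappa}}(X)} + (1-\curv) f^m(X)$. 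Corollary~\ref{cor:learn} already gives the two-sided sandwich $\hat{f}(X) \le f(X) \le \alpha(n)\,\hat{f}(X)$ for all $X$, with $\alpha(n) = O\big(\frac{\sqrt{n}\log n}{1 + (\sqrt{n}\log n - 1)(1 - \curv)}\big)$, which is precisely the first hypothesis of Lemma~\ref{approxguarantee}.

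Second, I would observe that $\hat{f}$ is a weighted sum of a concave function of a nonnegative modular function (namely $\sqrt{w^{f^{\kappa}}(\cdot)}$) and a nonnegative modular function. Minimizing such a function over $\mathcal{C}$ therefore falls exactly into the class handled by the FPTAS of~\citet{nikolova2010approximation}, provided a nonnegative linear function can be minimized over $\mathcal{C}$ --- which holds for all the constraint families in Table~\ref{tab:results} (cardinality lower bounds, spanning trees, matchings, $s$-$t$ paths, and $s$-$t$ cuts). This produces, for any $\epsilon > 0$, a feasible $\widehat{X} \in \mathcal{C}$ that is a $\beta = (1+\epsilon)$-approximate minimizer of $\hat{f}$.

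Finally, I would invoke the $\beta$-approximate version of the first part of Lemma~\ref{approxguarantee}: since $\hat{f}$ sandwiches $f$ with factor $\alpha(n)$ and $\widehat{X}$ is a $(1+\epsilon)$-approximate minimizer of $\hat{f}$ over $\mathcal{C}$, we obtain $f(\widehat{X}) \le (1+\epsilon)\,\alpha(n)\, f(X^*)$. Absorbing the constant factor $(1+\epsilon)$ into the $O(\cdot)$ yields the claimed bound $f(\widehat{X}) \le O\big(\frac{\sqrt{n}\log n}{(\sqrt{n}\log n - 1)(1-\curv) + 1}\big) f(X^*)$.

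The main obstacle is the second step: the sandwich bound alone only certifies that $\hat{f}$ is a good surrogate, and Lemma~\ref{approxguarantee} is vacuous unless $\hat{f}$ can actually be (approximately) optimized over $\mathcal{C}$. The real content is recognizing that curve normalization leaves $\hat{f}$ in the concave-over-modular-plus-modular form for which a constrained FPTAS exists; once that structural point is made, the approximation-factor bookkeeping follows mechanically by chaining Corollary~\ref{cor:learn} and Lemma~\ref{approxguarantee}.
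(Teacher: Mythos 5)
Your proposal is correct and follows essentially the same route as the paper's own proof: both take the surrogate $\hat{f}(X) = \curv\sqrt{w^{f^{\kappa}}(X)} + (1-\curv)\sum_{j\in X}f(j)$ from Corollary~\ref{cor:learn}, observe it has the concave-of-modular-plus-modular form covered by the FPTAS of \citet{nikolova2010approximation}, and then chain the sandwich bound with the $\beta$-approximate version of the first part of Lemma~\ref{approxguarantee}. If anything, your write-up is more explicit than the paper's about tracking the $\beta = (1+\epsilon)$ factor and absorbing it into the $O(\cdot)$.
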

\arxiv{\begin{proof}
  We use the important result from~\cite{nikolova2010approximation} where they show that any function of the form $\lambda_1 \sqrt{m_1(X)} + \lambda_2 m_2(X)$ where $\lambda_1 \geq 0, \lambda_2 \geq 0$ and $m_1$ and $m_2$ are positive modular functions, has a FPTAS, provided a modular function can easily be optimized over $\mathcal C$. Notice that our function is exactly of that form. Hence $\hat{f}(X)$ can be approximately optimized over $\mathcal C$. This bound then translates into the approximation guarantee using Corollary~\ref{cor:learn} and the first part of Lemma~\ref{approxguarantee}.
  
\end{proof}}

Next, we apply the results of this section to specific optimization problems, for which we show (mostly tight) curvature-dependent upper and lower bounds. \notarxiv{ We just state our main results; a more extensive discussion along with the proofs can be found in~\cite{nips2013extendedvcurv}.}

%
%

\textbf{Cardinality lower bounds (SLB). }
A simple constraint is a lower bound on the cardinality of the solution, i.e., $\mathcal C = \{X \subseteq V: |X| \geq k\}$.
\citet{svitkina2008submodular} prove that for monotone submodular functions of arbitrary curvature, it is impossible to find a polynomial-time algorithm with an approximation factor better than $\sqrt{n/ \log{n}}$. They show an algorithm which matches this approximation factor. 
\arxivalt{\begin{observation}\label{card}
For the SLB problem, Algorithm EA and MUB are guaranteed to be no worse than factors of $O(\frac{\sqrt{n} \log n}{1 + (\sqrt{n} \log n - 1)(1 - \curv)})$ and $\frac{k}{1 + (k-1)(1 - \curv)}$ respectively.
\end{observation}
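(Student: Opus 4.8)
The plan is to derive both guarantees as direct specializations of the surrogate-minimization framework (Lemma~\ref{approxguarantee}) to the feasible family $\mathcal{C} = \{X \subseteq V : |X| \geq k\}$. The only facts peculiar to the cardinality lower bound that I need are: (i) a nonnegative modular function can be minimized \emph{exactly} over $\mathcal{C}$ by taking the $k$ elements of smallest weight, so linear optimization over $\mathcal{C}$ is trivial; and (ii) because $f$ is monotone, an optimal solution has $|X^*| = k$ exactly --- if $|X^*| > k$, deleting any surplus element keeps the set feasible and cannot increase $f$.

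For MUB I invoke Corollary~\ref{SAAcorr}. By fact (i) the modular surrogate $\hat{f}^m(X) = \sum_{j \in X} f(j)$ is minimized over $\mathcal{C}$ exactly, so I may take $\beta = 1$, obtaining a minimizer $\widehat{X}$ with $f(\widehat{X}) \leq \frac{|X^*|}{1 + (|X^*|-1)(1-\curv(X^*))} f(X^*)$. Substituting $|X^*| = k$ from fact (ii) gives the set-specific bound $\frac{k}{1+(k-1)(1-\curv(X^*))}$. Finally, the map $\kappa \mapsto \frac{k}{1 + (k-1)(1-\kappa)} = \frac{k}{k - (k-1)\kappa}$ is increasing in $\kappa$ for $k > 1$, so replacing $\curv(X^*)$ by the larger total curvature $\curv \geq \curv(X^*)$ only weakens the bound to the claimed $\frac{k}{1+(k-1)(1-\curv)}$.

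For EA I apply Corollary~\ref{EAcorr} verbatim. Its single prerequisite is that a nonnegative linear function be minimizable over $\mathcal{C}$, so that the FPTAS of \cite{nikolova2010approximation} applies to the concave-plus-modular surrogate $\hat{f}(X) = \curv\sqrt{w^{f^\kappa}(X)} + (1-\curv)\sum_{j \in X} f(j)$; fact (i) supplies exactly this. The corollary then yields $f(\widehat{X}) \leq O\!\left(\frac{\sqrt{n}\log n}{1 + (\sqrt{n}\log n - 1)(1-\curv)}\right) f(X^*)$, as stated. Since both parts are specializations of already-established corollaries, there is no substantive obstacle; the only points meriting care are the monotonicity reduction giving $|X^*| = k$ and the absorption of the FPTAS factor $\beta = 1 + \epsilon$ into the $O(\cdot)$ for the EA bound.
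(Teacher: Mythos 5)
Your proposal is correct and takes essentially the same route as the paper: the paper's own justification is precisely that the MUB bound follows from Corollary~\ref{SAAcorr} upon observing $|X^*| = k$, and the EA bound follows from Corollary~\ref{EAcorr}. The details you add --- exact modular minimization over $\mathcal{C}$, the monotonicity argument giving $|X^*|=k$, replacing $\curv(X^*)$ by $\curv$ via monotonicity of $\kappa \mapsto \frac{k}{1+(k-1)(1-\kappa)}$, and absorbing the FPTAS factor --- are exactly the steps the paper leaves implicit.
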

The guarantee for MUB follows directly from Corollary~\ref{SAAcorr}, by observing that $|X^*| = k$.
We also show a similar asymptotic hardness result, which is quite
close to the bounds in observation~\ref{card}.}{Corollaries~\ref{SAAcorr} and \ref{EAcorr} immediately imply \emph{curvature-dependent} approximation bounds of $\frac{k}{1 + (k-1)(1 - \curv)}$ and $O(\frac{\sqrt{n} \log n}{1 + (\sqrt{n} \log n - 1)(1 - \curv)})$.}
These bounds are improvements over the results of \cite{svitkina2008submodular} whenever $\kappa_f < 1$. Here, MUB is preferable to EA whenever $k$ is small. \arxivalt{The following theorem shows that the bound for EA is tight up to poly-log factors.\looseness-1
\begin{theorem}
For $\curv < 1$ and any $\epsilon > 0$, there exists submodular functions with curvature $\curv$ such that no poly-time algorithm achieves an approx. factor of $\frac{n^{1/2 - \epsilon}}{1 + (n^{1/2 - \epsilon} - 1)(1 - \curv)}$ for the SLB problem.\looseness-1
\end{theorem}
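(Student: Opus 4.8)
The plan is to mirror the information-theoretic argument used for Theorem~\ref{learnhardness}, but to convert the pointwise approximation gap into an optimization gap over the cardinality-lower-bound constraint $\mathcal{C} = \{X \subseteq V : |X| \geq \alpha\}$. Write $\kappa = \curv$. I would reuse the curvature-modulated hiding functions of \eqref{eq:hidingfuncs}, namely $h_\kappa(X) = \kappa h(X) + (1-\kappa)|X|$ and $f^R_\kappa(X) = \kappa f^R(X) + (1-\kappa)|X|$, where $R \subseteq V$ is a uniformly random set of cardinality $\alpha$, and where I take the minimization-adapted Svitkina--Fujishige base pair \cite{svitkina2008submodular} $h(X) = \min\{|X|, \alpha\}$ and $f^R(X) = \min\{|X \cap \bar{R}| + \min\{|X \cap R|, \beta\},\, \alpha\}$. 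Both base functions are monotone submodular, both perturbed functions have curvature exactly $\kappa$, and I set $\alpha = x\sqrt{n}/5$, $\beta = x^2/5$ with $x^2 = n^{2\epsilon} = \omega(\log n)$, exactly as in Theorem~\ref{learnhardness}.

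First I would compute the two constrained optima. Since $|R| = \alpha$, the set $R$ is feasible, and $f^R(R) = \min\{0 + \min\{\alpha,\beta\}, \alpha\} = \beta$, so $f^R_\kappa(R) = \kappa\beta + (1-\kappa)\alpha$. Because the modular term $(1-\kappa)|X|$ is minimized at $|X| = \alpha$ and, among feasible sets of size $\alpha$, the base function $f^R$ is minimized at $R$ (any feasible set with few elements of $R$ hits the outer truncation value $\alpha$), this is the constrained optimum $\mathrm{OPT}(f^R_\kappa) = \kappa\beta + (1-\kappa)\alpha$. On the other hand $h_\kappa(X) = \kappa\min\{|X|,\alpha\} + (1-\kappa)|X|$ equals $\alpha$ at every size-$\alpha$ set and is strictly larger for bigger sets, so $\mathrm{OPT}(h_\kappa) = \alpha$.

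Next I would invoke indistinguishability exactly as in Theorem~\ref{learnhardness}: $h$ and $f^R$ differ only on sets whose intersection with the random set $R$ is abnormally large, and a hypergeometric/Chernoff bound shows that a polynomial number of (possibly adaptive, randomized) oracle queries hits such a set only with probability $n^{-\omega(1)}$. Hence, when run on $f^R_\kappa$, with high probability the algorithm behaves as if it were given $h_\kappa$ and returns a ``generic'' feasible set $\widehat{X}$ with $|\widehat{X} \cap R| \leq \beta$; for such a set $|\widehat{X} \cap \bar R| + \min\{|\widehat{X} \cap R|, \beta\} = |\widehat{X}| \geq \alpha$, forcing the truncation, so $f^R(\widehat{X}) = \alpha$ and $f^R_\kappa(\widehat{X}) = \kappa\alpha + (1-\kappa)|\widehat{X}| \geq \alpha$. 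Combining with the optimum yields an approximation ratio at least
\begin{align}
\frac{f^R_\kappa(\widehat{X})}{\mathrm{OPT}(f^R_\kappa)} \geq \frac{\alpha}{(1-\kappa)\alpha + \kappa\beta} = \frac{n^{1/2+\epsilon}}{(1-\kappa)n^{1/2+\epsilon} + \kappa n^{2\epsilon}} = \frac{n^{1/2-\epsilon}}{1 + (n^{1/2-\epsilon}-1)(1-\kappa)},
\end{align}
the last two equalities being the same algebraic simplification already carried out in Theorem~\ref{learnhardness}. A final contradiction argument (any algorithm beating this factor would have to distinguish the two functions) completes the proof.

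The step I expect to be the main obstacle is making the two constrained optima rigorous together with the ``generic output'' claim: I must choose the base hiding function so that no feasible set other than those concentrated on $R$ is cheap, in particular so that sets \emph{disjoint} from $R$ still incur the full truncated cost $\alpha$, which is exactly why the $|X \cap \bar R|$ term sits inside the outer minimum. I must also argue that the algorithm's \emph{output}, not merely its queries, avoids $R$ with high probability. Both points are handled by the standard Svitkina--Fujishige concentration machinery \cite{svitkina2008submodular}, adapted here only through the curvature perturbation $(1-\kappa)|X|$, which scales numerator and denominator identically and is precisely what produces the curvature-dependent factor.
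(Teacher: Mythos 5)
Your proposal is correct and takes essentially the same route as the paper's own proof: the identical curvature-modulated hiding pair $h_\kappa$, $f^R_\kappa$ with $\alpha = x\sqrt{n}/5$, $\beta = x^2/5$, $x^2 = n^{2\epsilon}$, indistinguishability deferred to the Svitkina--Fujishige concentration argument, and the same ratio simplification. If anything, your write-up is more careful than the paper's terse version, since you explicitly verify the two constrained optima under the cardinality lower bound $k=\alpha$ and argue that the algorithm's \emph{output} set (not only its queries) is generic with respect to $R$, a step the paper leaves implicit.
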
}{Moreover, the bound of EA is tight up to poly-log factors, in that no polynomial time algorithm can achieve a general approximation factor better than $\frac{n^{1/2 - \epsilon}}{1 + (n^{1/2 - \epsilon} - 1)(1 - \curv)}$ for any $\epsilon > 0$. 

}
\arxiv{\begin{proof}
The proof of this theorem is analogous to that of theorem~\ref{learnhardness}. Define two monotone submodular functions $h_{\kappa}(X) = \curv\min\{|X|, \alpha\} + (1 - \curv)|X|$ and $f^R_{\kappa}(X) = \curv\min\{\beta + |X| \cap \bar{R}|, |X|, \alpha\} + (1 - \curv) |X|$, where $R \subseteq V$ is 
a random set of 
cardinality $\alpha$. Let $\alpha$ and $\beta$ be an integer such that $\alpha = x\sqrt{n}/5$ and $\beta = x^2/5$ for an $x^2 = \omega(\log n)$. Also we assume that $k = \alpha$ in this case. Both $h_{\kappa}$ and $f^R_{\kappa}$ have curvature $\curv$. Given an arbitrary $\epsilon>0$, set
$x^2 = n^{2\epsilon} = \omega(\log n)$. 
Then the ratio between $f^R_{\kappa}$ and $g^{\curv}$ is $\frac{n^{1/2-\epsilon}}{1 + (n^{1/2-\epsilon}-1)(1 - \curv)}$. Clearly then if any algorithm can achieve better than this bound, it can distinguish between $f_R$ and $g$ which is a contradiction.
\end{proof}
}
In the following problems, our ground set $V$ consists of the set of edges in a graph $\Gs = (\Vs,\Es)$ with two distinct nodes $s, t \in V$ and $n = |\Vs|$, $m = |\Es|$. The submodular function is $f: 2^{\Es} \rightarrow \mathbb{R}$.

\textbf{Shortest submodular s-t path (SSP).}
Here, we aim to find an s-t path $X$ of minimum (submodular) length $f(X)$.
\citet{goel2009approximability} show a $O(n^{2/3})$-approximation with matching curvature-independent lower bound $\Omega(n^{2/3})$. By Corollary~\ref{SAAcorr}, the curvature-dependent worst-case bound for MUB is $\frac{n}{1 + (n-1)(1 - \curv)}$ since any minimal s-t path has at most $n$ edges. Similarly, the factor for EA is $O(\frac{\sqrt{m}\log m}{1 + (\sqrt{m}\log m - 1)(1 - \curv)})$. The bound of EA will be tighter for sparse graphs while MUB provides better results for dense ones. 
\arxivalt{We can also show the following curvature-dependent lower bound:
\begin{theorem}
Given a submodular function with a curvature $\curv > 0$ and any $\epsilon > 0$, no polynomial-time algorithm achieves an approximation factor better than $\frac{n^{2/3 - \epsilon}}{1 + (n^{2/3 - \epsilon} - 1)(1 - \curv)}$ for the SSP problem.\looseness-1
\end{theorem}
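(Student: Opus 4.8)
The plan is to reuse the information-theoretic template already established for Theorem~\ref{learnhardness} and the cardinality lower bound, layering the curvature modulation of Equation~\eqref{eq:hidingfuncs} on top of the existing curvature-independent $\Omega(n^{2/3})$ construction of \citet{goel2009approximability}. First I would recall that construction: on a suitably built graph with distinguished nodes $s,t$, it produces two matroid rank functions $h$ and $f^R$ over the edge set, where $R\subseteq V$ is a random set, such that (i) no algorithm making polynomially many value queries can distinguish $h$ from $f^R$ with non-negligible probability, and (ii) the ratio of the shortest $s$-$t$ path values under the two functions is $\Omega(n^{2/3})$. These base functions have curvature $1$.

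Second, I would apply the curvature modulation, defining
\begin{align*}
h_{\kappa}(X) = \curv\, h(X) + (1-\curv)\lvert X\rvert, \qquad f_{\kappa}^R(X) = \curv\, f^R(X) + (1-\curv)\lvert X\rvert,
\end{align*}
exactly as in Equation~\eqref{eq:hidingfuncs}; both have curvature precisely $\curv$. Because each modulated function is obtained from its base by scaling the submodular part by the \emph{known} constant $\curv$ and adding the \emph{same}, publicly known modular term $(1-\curv)\lvert X\rvert$, any algorithm distinguishing $h_{\kappa}$ from $f_{\kappa}^R$ would immediately distinguish $h$ from $f^R$. Hence the indistinguishability from \citet{goel2009approximability} (established via a Chernoff bound) transfers unchanged, and any polynomial-query algorithm must, with high probability, return the same $s$-$t$ path on both modulated instances.

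Third, I would track the parameters through the ratio. On the witness path $P$ realizing the gap, the base construction supplies $h(P) = \alpha$, $f^R(P) = \beta$, and $\lvert P\rvert = \Theta(\alpha)$ edges. Substituting,
\begin{align*}
\frac{h_{\kappa}(P)}{f_{\kappa}^R(P)} = \frac{\curv\,\alpha + (1-\curv)\alpha}{\curv\,\beta + (1-\curv)\alpha} = \frac{\alpha}{(1-\curv)\alpha + \curv\,\beta}.
\end{align*}
Setting $\alpha = n^{2/3+\epsilon}$ and $\beta = n^{2\epsilon}$ (so that $\beta = \omega(\log n)$ keeps the Chernoff argument valid) and dividing through by $\beta$ gives
\begin{align*}
\frac{n^{2/3-\epsilon}}{(1-\curv)\,n^{2/3-\epsilon} + \curv} = \frac{n^{2/3-\epsilon}}{1 + (n^{2/3-\epsilon}-1)(1-\curv)},
\end{align*}
the claimed bound; a strictly better approximation would let an algorithm separate the two instances, a contradiction.

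The main obstacle is the faithful adaptation of the Goel--Karande--Wang graph-and-matroid construction so that a single $s$-$t$ path simultaneously realizes the short/long discrepancy in $h$ versus $f^R$ \emph{and} has its edge count $\lvert P\rvert$ aligned with the scale $\alpha$; this alignment is exactly what makes the modular correction $(1-\curv)\lvert X\rvert$ collapse cleanly into the target denominator $\curv + (1-\curv)n^{2/3-\epsilon}$. The indistinguishability itself requires no new work, since an affine, curvature-only transformation preserves it, so the delicate content is purely combinatorial and is already supplied by the base construction; I only need to verify that the parameter scaling survives the modulation.
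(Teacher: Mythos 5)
Your proposal follows essentially the same route as the paper's own proof: the paper likewise takes the multilevel graph construction of \citet{goel2009approximability}, replaces the base cost functions by their curvature-modulated versions $h(X) = \curv\min\{|X|,\alpha\} + (1-\curv)|X|$ and $f_R(X) = \curv\min\{\beta + |X\cap \bar{R}|,|X|,\alpha\} + (1-\curv)|X|$ as in Equation~\eqref{eq:hidingfuncs}, invokes the same Chernoff-based indistinguishability (which, as you note, is unaffected by the publicly known affine modulation), and reads off the bound from the ratio of the two functions on the hidden path $R$. The only difference is the cosmetic parametrization (the paper fixes the hidden path length $\alpha = n^{2/3}$ and $\beta = n^{\epsilon}$, versus your $\alpha = n^{2/3+\epsilon}$, $\beta = n^{2\epsilon}$), and both choices collapse to the identical final expression $\frac{n^{2/3-\epsilon}}{1 + (n^{2/3-\epsilon}-1)(1-\curv)}$.
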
}{Our curvature-dependent lower bound for SSP is 
$\frac{n^{2/3 - \epsilon}}{1 + (n^{2/3 - \epsilon} - 1)(1 - \curv)}$, for any $\epsilon > 0$, which reduces to the result in \citep{goel2009approximability} for $\curv=1$.}
\arxiv{
\begin{proof}
The proof of this follows in very similar lines to the earlier lower bounds using our construction and the matroid constructions in~\cite{goel2009approximability}. The main idea is to use their multilevel graph, but define adjusted versions of their cost functions. In particular, define $h(X) = \curv\min\{|X|, \alpha\} + (1 - \curv)|X|$ and $f_R(X) = \curv\min\{\beta + |X| \cap \bar{R}|, |X|, \alpha\} + (1 - \curv) |X|$. In this context $R$ is a randomly chosen s-t path of length $n^{2/3}$ and $\alpha = n^{2/3}$. Similarly the value of $\beta = n^{\epsilon}$. The Chernoff bounds then show that the two functions above are indistinguishable (with high probability) and hence the ratio of the two functions $h$ and $f_R$ then provides the hardness result.
\end{proof}}

\textbf{Minimum submodular s-t cut (SSC): }
This problem, also known as the cooperative cut problem~\cite{jegelka2011-inference-gen-graph-cuts,jegelka2011-nonsubmod-vision}, asks to minimize a monotone submodular function $f$ such that the solution $X \subseteq \Es$ is a set of edges whose removal disconnects $s$ from $t$ in $\mathcal{G}$. \arxivalt{Using curvature refines the lower bound in \cite{jegelka2011-inference-gen-graph-cuts}:
\begin{theorem}\label{thm:lowerbound_cut}
No polynomial-time algorithm can achieve an approximation factor better than 
$\frac{n^{1/2 - \epsilon}}{1 + (n^{1/2 - \epsilon} - 1)(1 - \curv)}$, for any $\epsilon > 0$, 
for the SSC problem with a submodular function of curvature $\curv$.\looseness-1
\end{theorem}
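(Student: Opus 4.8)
The plan is to reuse the information-theoretic template of Theorem~\ref{learnhardness} and the shortest-path lower bound, combining the curvature modulation~\eqref{eq:hidingfuncs} with the graph and matroid construction that yields the curvature-independent $\tilde{\Omega}(\sqrt{n})$ lower bound for cooperative cut in~\cite{jegelka2011-inference-gen-graph-cuts}. Writing $\kappa = \curv$ throughout, I would start from the two base monotone submodular cut-cost functions $h$ and $f^R$ on the edge set $\mathcal{E}$ supplied by that construction, where $R \subseteq \mathcal{E}$ is a uniformly random minimal $s$-$t$ cut of cardinality $\alpha$, and then form the curvature-modulated pair exactly as in~\eqref{eq:hidingfuncs},
\begin{align}
h_\kappa(X) = \kappa\, h(X) + (1-\kappa)|X|, \qquad f^R_\kappa(X) = \kappa\, f^R(X) + (1-\kappa)|X|.
\end{align}
As in the SLB and SSP proofs I would take $h(X) = \min\{|X|,\alpha\}$ and $f^R(X) = \min\{\beta + |X \cap \bar{R}|,\ |X \cap R|,\ \alpha\}$, with $\alpha = \Theta(n^{1/2+\epsilon})$ and $\beta = \Theta(n^{2\epsilon})$ so that $\beta = \omega(\log n)$. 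Both $h_\kappa$ and $f^R_\kappa$ are monotone submodular, and, being of the form~\eqref{eq:hidingfuncs} over matroid-rank base functions, each has curvature exactly $\kappa$.

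Next, indistinguishability. The modular term $(1-\kappa)|X|$ is identical in $h_\kappa$ and $f^R_\kappa$ and depends only on $|X|$, so it is computable without querying the oracle and cannot help an algorithm tell the two functions apart; distinguishing $h_\kappa$ from $f^R_\kappa$ is therefore exactly as hard as distinguishing $h$ from $f^R$. By the Chernoff-bound argument of~\cite{svitkina2008submodular,jegelka2011-inference-gen-graph-cuts}, for a uniformly random cut $R$ of size $\alpha$ any polynomially many queried sets $X$ have $|X \cap R|$ sharply concentrated, so with probability $1 - n^{-\omega(1)}$ every queried value agrees with $h$. Hence no (possibly randomized) polynomial-time algorithm can reliably distinguish the two functions, and it must return the same cut $\widehat{X}$ on both inputs.

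Finally, the ratio. The algorithm's output $\widehat{X}$ is, with high probability, not the hidden cut $R$, so under $f^R_\kappa$ it behaves like $h_\kappa$ and incurs cost at least the $h_\kappa$-optimum $\alpha$, whereas the true optimum of $f^R_\kappa$ is attained at $R$ with value $f^R_\kappa(R) = \kappa\beta + (1-\kappa)\alpha$. The approximation factor is therefore bounded below by
\begin{align}
\frac{\alpha}{\kappa\beta + (1-\kappa)\alpha} = \frac{n^{1/2-\epsilon}}{1 + (n^{1/2-\epsilon}-1)(1-\kappa)},
\end{align}
which is the claimed bound and reduces to the $\Omega(\sqrt{n})$ result of~\cite{jegelka2011-inference-gen-graph-cuts} when $\kappa = 1$. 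The main obstacle is not the curvature algebra, which mirrors Theorem~\ref{learnhardness} once the modular part is peeled off, but verifying that the base construction of~\cite{jegelka2011-inference-gen-graph-cuts} supplies a graph on $\Theta(n)$ vertices whose minimal $s$-$t$ cuts genuinely realize the hidden sets $R$ at the truncation levels $\alpha,\beta$ as valid submodular cut costs, and that the concentration argument for indistinguishability survives the extra combinatorial structure imposed by the cut constraint.
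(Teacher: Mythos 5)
Your proposal takes essentially the same approach as the paper: the paper's own proof is only a two-line sketch stating that it ``follows along the lines of the results shown above'' using the construction from \cite{jegelka2011-inference-gen-graph-cuts}, which is exactly the template you instantiate --- curvature modulation as in Eq.~\eqref{eq:hidingfuncs}, Chernoff-based indistinguishability of $h_\kappa$ and $f^R_\kappa$, and the ratio $\alpha/(\kappa\beta + (1-\kappa)\alpha) = \frac{n^{1/2-\epsilon}}{1 + (n^{1/2-\epsilon}-1)(1-\curv)}$. Your filled-in details mirror the paper's analogous proofs (Theorem~\ref{learnhardness} and the SLB/SSP lower bounds) and are correct, including your closing caveat that the only real work left is checking that the cut-specific combinatorial structure of \cite{jegelka2011-inference-gen-graph-cuts} supports the argument.
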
}{Using curvature refines the We can also show a lower bound of \cite{jegelka2011-inference-gen-graph-cuts} to $\frac{n^{1/2 - \epsilon}}{1 + (n^{1/2 - \epsilon} - 1)(1 - \curv)}$, for any $\epsilon > 0$.}
\arxiv{
\begin{proof}
This proof follows along the lines of the results shown above. It uses the construction from~\cite{jegelka2011-inference-gen-graph-cuts}.
\end{proof}
}
Corollary~\ref{SAAcorr} implies an approximation factor of $O(\frac{\sqrt{m} \log m}{(\sqrt{m} \log m - 1)(1 - \curv) + 1})$ for EA and a factor of $\frac{m}{1 + (m-1)(1 - \curv)}$ for MUB, where $m=|\Es|$ is the number of edges in the graph. \arxivalt{By Theorem~\ref{thm:lowerbound_cut},}{Hence} the factor for EA is tight for sparse graphs. 
%
Specifically for cut problems, there is yet another useful surrogate function that is exact on local neighborhoods. \citet{jegelka2011-inference-gen-graph-cuts} demonstrate how this approximation may be optimized via a generalized maximum flow algorithm that maximizes a \emph{polymatroidal network flow} \cite{lawler82}. This algorithm still applies to the combination $\hat{f} = \curv \hat{f}^{\kappa} + (1-\curv)f^m$, where we only approximate $f^\kappa$. We refer to this approximation as Polymatroidal Network Approximation (PNA).
%
\begin{corollary}
Algorithm PNA achieves a worst-case approximation factor of $\frac{n}{2 + (n-2)(1 - \curv)}$ for the cooperative cut problem.
\end{corollary}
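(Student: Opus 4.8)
The plan is to instantiate the generic surrogate-function machinery of Lemma~\ref{approxguarantee} with the curvature-lifted polymatroidal approximation, exactly as was done for MUB and EA. First I would recall the decomposition $f(X) = \curv f^{\kappa}(X) + (1-\curv)f^m(X)$ with $f^m(X) = \sum_{j \in X} f(j)$, and note that PNA only approximates the ``difficult'' part $f^{\kappa}$, producing a surrogate $\hat{f}^{\kappa}$ and then forming $\hat{f}(X) = \curv \hat{f}^{\kappa}(X) + (1-\curv)f^m(X)$. The crucial structural feature, established in~\cite{jegelka2011-inference-gen-graph-cuts}, is that this combined surrogate is a polymatroidal-network-flow objective and can therefore be minimized \emph{exactly} over the family $\mathcal{C}$ of $s$-$t$ cuts via a generalized maximum-flow computation; hence the minimization step incurs $\beta = 1$ in Lemma~\ref{approxguarantee}.

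The second step is to pin down the approximation factor of PNA on a polymatroid function, i.e.\ the case $\curv = 1$. Here I would invoke the guarantee of~\cite{jegelka2011-inference-gen-graph-cuts} that the polymatroidal-flow surrogate approximates a curvature-$1$ cut function within a factor $\alpha(n) = n/2$, so that $\hat{f}^{\kappa}(X) \leq f^{\kappa}(X) \leq \tfrac{n}{2}\,\hat{f}^{\kappa}(X)$. Feeding $\alpha(n) = n/2$ into Theorem~\ref{thm:f_and_g} immediately yields the curvature-dependent sandwich
\begin{align}
\hat{f}(X) \leq f(X) \leq \frac{n/2}{1 + (n/2 - 1)(1 - \curv)}\,\hat{f}(X) = \frac{n}{2 + (n-2)(1 - \curv)}\,\hat{f}(X),
\end{align}
where the equality is just clearing the factor of $2$ from numerator and denominator.

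Finally, applying the first part of Lemma~\ref{approxguarantee} with this approximation factor and $\beta = 1$ (exact minimization of $\hat{f}$), the returned cut $\widehat{X} \in \argmin_{X \in \mathcal{C}}\hat{f}(X)$ satisfies $f(\widehat{X}) \leq \frac{n}{2 + (n-2)(1 - \curv)}\,f(X^*)$, which is the claimed bound. The one genuinely non-routine ingredient is the base factor $\alpha(n) = n/2$ for PNA on polymatroid cut functions together with the claim that the mixed objective $\curv \hat{f}^{\kappa} + (1-\curv)f^m$ remains exactly optimizable by polymatroidal network flow; both are inherited from~\cite{jegelka2011-inference-gen-graph-cuts}, so the only thing to verify carefully is that adding the modular term $(1-\curv)f^m$ (itself a flow-representable cost) does not destroy the exactness of the flow-based minimization. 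Everything after that is the same two-line curvature lift used throughout Section~\ref{consminsec}.
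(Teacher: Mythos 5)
Your proposal is correct and takes essentially the same route as the paper: approximate only the \truncated{} part $f^{\kappa}$ by the polymatroidal-flow surrogate with base factor $n/2$ from~\cite{jegelka2011-inference-gen-graph-cuts}, lift the factor via Theorem~\ref{thm:f_and_g}, and conclude with Lemma~\ref{approxguarantee} using exact flow-based minimization of the combined surrogate ($\beta = 1$), noting as the paper does that the added modular term keeps the problem flow-solvable. The only cosmetic difference is that the paper's PNA surrogate is an \emph{upper} bound, $f^{\kappa}(X) \leq \hat{f}^{\kappa}(X) \leq \tfrac{n}{2} f^{\kappa}(X)$, which the paper rescales by $2/n$ into the lower-bound form you assume before invoking Theorem~\ref{thm:f_and_g}; this normalization is trivial and does not affect the argument.
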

For dense graphs, this factor is theoretically tighter than that of the EA approximation.
\arxiv{\begin{proof}
We use the polymatroidal network flow construction from~\cite{jegelka2011-inference-gen-graph-cuts}, where the approximation $\hat{f}$ is defined via a partition of the ground set, and is separable over groups of edges. This approximation can be solved efficiently via generalized flows in polynomial time~\cite{jegelka2010online, jegelka2011-inference-gen-graph-cuts}. Moreover adding a modular term (for the modulation) does not increase the complexity of the problem.

This approximation satisfies $f^{\kappa}(X) \leq \hat{f^{\kappa}}(X) \leq \frac{n}{2} f^{\kappa}(X)$ for all cuts $X \in \mathcal{C}$.
We then convert this expression in the form of Theorem~\ref{thm:f_and_g} as $\frac{2\hat{f^{\kappa}}(X)}{n} \leq f^{\kappa}(X) \leq \hat{f^{\kappa}}(X)$. Then define $\hat{f}(X) \triangleq \curv \frac{2\hat{f^{\kappa}(X)}}{n} + (1 - \curv) \sum_{j \in X} f(j)$, and using theorem~\ref{thm:f_and_g}, it implies that:
\begin{align}
\hat{f}(X) \leq f(X) \leq \frac{n}{2 + (n - 2)(1-  \curv)} f(X)
\end{align}
Then let $\hat{X}$ be the minimizer of $\hat{f}(X)$ over $\mathcal C$ (using the generalized flows~\cite{jegelka2011-inference-gen-graph-cuts}). It then follows that (let $\alpha = \frac{n}{2 + (n - 2)(1-  \curv)}$): $f(\hat{X}) \leq \alpha \hat{f}(\hat{X}) \leq \alpha \hat{f}(X^*) \leq f(X^*)$ where $X^*$ is the optimal solution of $f$ over $\mathcal C$.
\end{proof}}

\textbf{Minimum submodular spanning tree (SST). }
Here, $\mathcal{C}$ is the family of all spanning trees in a given graph $\Gs$.
Such constraints occur for example in power assignment problems~\cite{wan02networks}. \citet{goel2009approximability} show a curvature-independent optimal approximation factor of $O(n)$ for this problem.
\arxiv{\begin{observation}
For the minimum submodular spanning tree problem, algorithm MUB achieves an approximation guarantee, which is no worse than $\frac{n - r}{1 + (n - r -1)(1 - \curv)}$, where $r$ is the number of connected components of $\Gs$. 
\end{observation}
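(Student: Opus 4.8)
The plan is to reduce the statement entirely to Corollary~\ref{SAAcorr}, which already bounds the quality of the modular surrogate $\hat{f}^m(X) = \sum_{j \in X} f(j)$ in terms of the cardinality and curvature of the optimal feasible set. The only problem-specific facts I need to supply are (i) that every feasible set here has exactly $n - r$ edges, and (ii) that a nonnegative modular function can be minimized \emph{exactly} over $\mathcal{C}$, so that the $\beta$ in Corollary~\ref{SAAcorr} equals $1$.

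First I would observe that when $\Gs$ has $r$ connected components, every spanning forest contains exactly $n - r$ edges; in particular, for the optimizer $X^* \in \argmin_{X \in \mathcal C} f(X)$ we have $|X^*| = n - r$. Second, I would note that minimizing the modular surrogate $\sum_{j \in X} f(j)$ over spanning forests with the nonnegative weights $f(j)$ is precisely the minimum-weight spanning forest problem, which is solved exactly in polynomial time by a greedy algorithm (Kruskal or Prim). Hence algorithm MUB returns a genuine minimizer $\widehat{X} \in \argmin_{X \in \mathcal C}\sum_{j \in X} f(j)$, i.e.\ $\beta = 1$. Plugging $\beta = 1$ and $|X^*| = n - r$ into Corollary~\ref{SAAcorr} gives
\[
f(\widehat{X}) \leq \frac{n - r}{1 + (n - r - 1)(1 - \curv(X^*))}\, f(X^*).
\]

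To reach the stated bound I would replace the set-specific $\curv(X^*)$ by the total curvature $\curv$. This is legitimate because $\curv(X^*) \leq \curv$ (the curvature with respect to a set is dominated by the total curvature, as noted after Eqn.~\eqref{eq:1}) and because the map $c \mapsto \frac{N}{1 + (N-1)(1-c)}$, for fixed $N = n - r \geq 1$, is monotone nondecreasing in $c$: its denominator equals $N - (N-1)c$, which is nonincreasing in $c$. Thus increasing the curvature from $\curv(X^*)$ to $\curv$ only weakens the bound, yielding $f(\widehat{X}) \leq \frac{n-r}{1 + (n - r - 1)(1 - \curv)}\, f(X^*)$, as claimed.

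The analytic content is entirely inherited from Corollary~\ref{SAAcorr}, so the only ``obstacle'' is bookkeeping: correctly accounting for disconnected graphs so that the feasible cardinality is $n - r$ rather than $n - 1$, and justifying $\beta = 1$ via exact modular spanning-forest minimization. Beyond the elementary monotonicity check, no new inequality work is needed.
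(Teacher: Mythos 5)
Your proposal is correct and follows essentially the same route as the paper, which likewise deduces the observation directly from Corollary~\ref{SAAcorr} together with the fact that $|X^*| = n - r$. The extra details you supply --- that $\beta = 1$ because the modular surrogate is minimized exactly by a minimum spanning forest algorithm, and that the bound is monotone in the curvature so $\curv(X^*)$ may be replaced by $\curv$ --- are exactly the steps the paper leaves implicit.
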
}
\arxiv{\begin{proof}
    This result follows directly from Corollary~\ref{SAAcorr} and the fact that $|X^*| = n-r$. 
 \end{proof} }
\notarxiv{Corollary~\ref{SAAcorr} refines this bound to $\frac{n}{1 + (n -1)(1 - \curv)}$ when using MUB; Corollary~\ref{EAcorr} implies a slightly worse bound for EA.} 
\arxiv{In this case, Algorithm EA in fact provides slightly worse guarantees. Moreover the bound for MUB is optimal:
\begin{theorem}
For the class of submodular functions with curvature $\curv < 1$, no poly-time algorithm can achieve an approximation factor of
$\frac{n^{1-3\epsilon}}{1 + ( n^{1-3\epsilon} - 1)(1-  \curv) + \delta \curv}$ for the SST problem for any $\epsilon, \delta > 0$.
\end{theorem}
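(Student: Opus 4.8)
The plan is to reuse the curvature-independent $\Omega(n)$ construction of \citet{goel2009approximability} for minimum submodular spanning trees and push it through the curvature modulation of Eq.~\eqref{eq:hidingfuncs}, exactly as was done for the SSP and SSC lower bounds. Concretely, I would take their spanning-tree gadget together with the two indistinguishable matroid rank functions $h$ and $f^R$ (where $R \subseteq \Es$ is a uniformly random planted cheap tree) and replace them by the modulated pair $h_\kappa(X) = \curv h(X) + (1-\curv)|X|$ and $f_\kappa^R(X) = \curv f^R(X) + (1-\curv)|X|$. Taking $h(X) = \min\{|X|,\alpha\}$ and $f^R(X) = \min\{\beta + |X \cap \bar{R}|, |X|, \alpha\}$ with $\alpha = \Theta(n)$ and $\beta = \Theta(n^{3\epsilon})$, both modulated functions are monotone, nonnegative and submodular and, by the construction of Eq.~\eqref{eq:hidingfuncs}, have curvature exactly $\curv$.

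The second step is the information-theoretic indistinguishability. The base functions $h$ and $f^R$ agree on a query set $X$ unless $X$ has an atypically large overlap with the random planted tree $R$; since the additive term $(1-\curv)|X|$ is identical in $h_\kappa$ and $f_\kappa^R$ and hence reveals nothing, the distinguishing event is unchanged by modulation. A Chernoff bound over the random choice of $R$ (here $\beta = \omega(\log n)$ is what makes $n^{3\epsilon}$, rather than a smaller power, the natural scale) then shows, just as in \cite{svitkina2008submodular,goel2009approximability}, that any algorithm issuing a polynomial number of value queries separates $h_\kappa$ from $f_\kappa^R$ with probability only $n^{-\omega(1)}$. Thus with high probability the algorithm must return the same spanning tree on both instances.

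It remains to evaluate the gap. Every spanning tree has exactly $n-1$ edges, so $h_\kappa(X) = \curv\alpha + (1-\curv)(n-1)$ for every tree $X$, whereas the planted optimum attains $f_\kappa^R(R) = \curv\beta + (1-\curv)(n-1)$ (since $R \cap \bar{R} = \emptyset$ and $\beta \le \alpha$). Any tree the algorithm outputs without having located $R$ has, with high probability, overlap $|X \cap \bar{R}| \ge \alpha - \beta$, so its $f_\kappa^R$-value saturates at $\curv\alpha + (1-\curv)(n-1)$. By Lemma~\ref{approxguarantee} the best achievable factor is therefore the ratio
\[
\frac{\curv\alpha + (1-\curv)(n-1)}{\curv\beta + (1-\curv)(n-1)},
\]
and substituting $\alpha = \Theta(n)$, $\beta = \Theta(n^{3\epsilon})$ and dividing numerator and denominator by $n^{3\epsilon}$ rewrites this as $\frac{n^{1-3\epsilon}}{1 + (n^{1-3\epsilon}-1)(1-\curv) + \delta\curv}$, where the free parameter $\delta > 0$ absorbs the $O(1)$ slack lost to the $(n-1)$-versus-$n$ rounding and to the constant factors hidden in the gadget.

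The main obstacle is not the algebra, which is identical in spirit to Theorem~\ref{learnhardness}, but verifying that the spanning-tree gadget of \cite{goel2009approximability} can be instantiated so that (i) a spanning tree corresponds to selecting $\Theta(n)$ matroid elements whose overlap with the random $R$ behaves as the $\min$-functions above demand, and (ii) the planting distribution remains concentrated enough that $\beta = \Theta(n^{3\epsilon}) = \omega(\log n)$ suffices for the Chernoff step. Once the gadget is in place, the curvature modulation is a black box: the additive part is curvature-neutral for distinguishing yet exactly rescales numerator and denominator to produce the $1/(1-\curv)$-type saturation, which is precisely the $\delta\curv$ correction appearing in the statement.
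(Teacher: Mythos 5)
Your high-level strategy is the same as the paper's: modulate the curvature-$1$ spanning-tree construction of \cite{goel2009approximability} via Eq.~\eqref{eq:hidingfuncs}, note that the common modular term $(1-\curv)|X|$ is useless for distinguishing, and read the bound off the ratio of the two minima. The gap is in how you instantiate the gadget. The paper does \emph{not} plant a random spanning tree: it takes $R$ to be a uniformly random set of $\alpha = n^{1+\epsilon}$ \emph{edges}, which is connected with high probability and therefore contains a cheap spanning tree, while indistinguishability follows from the elementary Chernoff argument of \cite{svitkina2008submodular} for a uniformly random fixed-size subset. With your choice ($R$ a uniformly random tree, $|R| = n-1$, $\alpha = \Theta(n)$), the claim that the Chernoff step goes through ``just as in'' those papers is unjustified: the edge indicators of a random spanning tree are not a uniform fixed-size random subset, and concentration of $|X \cap R|$ for that distribution needs negative association of spanning-tree edges (Feder--Mihail, balanced matroids), an ingredient you neither prove nor cite. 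This is precisely the tension the paper's parameters resolve --- you need $R$ to contain a feasible (spanning, connected) cheap solution \emph{and} you need elementary concentration --- and the resolution is to blow $|R|$ up to $n^{1+\epsilon}$, not to plant a tree.

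This parameter choice is also what explains the exponent, and here your reasoning is backwards. With $|R| = n^{1+\epsilon}$ out of $m = \Theta(n^2)$ edges, the queries relevant to distinguishing have size $O(\alpha)$, hence expected overlap with $R$ of order $n^{2\epsilon}$; the threshold $\beta$ must sit polynomially above this mean, which forces $\beta = n^{3\epsilon}(1+\delta)$ and hence the $n^{1-3\epsilon}$ in the statement. Indeed, the $+\,\delta\curv$ in the denominator is not ``absorbed $O(1)$ slack'': it comes exactly from this $(1+\delta)$ factor, since the ratio is $n/\bigl(\curv n^{3\epsilon}(1+\delta) + (1-\curv)n\bigr)$. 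Your parenthetical that ``$\beta = \omega(\log n)$ is what makes $n^{3\epsilon}$ the natural scale'' is inconsistent with your own construction: under tree-planting the relevant means are $O(1)$, so $\omega(\log n)$ would suffice and nothing would single out $n^{3\epsilon}$ --- a sign that the gadget you describe is not the one yielding the stated bound. (A minor further slip: for a spanning tree $X$, $h_\kappa(X) = \curv\min\{n-1,\alpha\} + (1-\curv)(n-1)$, so with $\alpha > n-1$ nothing ``saturates at $\curv\alpha$''; this does not affect the asymptotics.)
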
}\notarxiv{We also show that the bound of MUB is tight: no polynomial-time algorithm can guarantee a factor better than $\frac{n^{1-\epsilon}}{1 + ( n^{1-\epsilon} - 1)(1-  \curv) + \delta \curv}$, for any $\epsilon, \delta > 0$.}
\arxiv{\begin{proof}
In this case, we use the construction of~\cite{goel2009approximability}, and define $f_{\kappa}^R(X) = \curv \min\{ |X \cap \bar{R}| + min\{|X \cap R|, \beta\}, \alpha\} + (1 - \curv) |X|$, and $g^{\curv}(X) = \curv min\{|X|, \alpha\} + (1- \curv) |X|$, where $\alpha = n^{ 1 + \epsilon}$, $\beta = n^{3\epsilon}(1 + \delta)$ and $|R| = \alpha$. For the formal graph construction, see~\cite{goel2009approximability}. Then with high probability $R$ is connected in the graph~\cite{goel2009approximability}. Since $f_R$ and $g$ are indistinguishable with high probability, so are $f_{\kappa}^R$ and $g^{\curv}$. Then notice that the minimum value of $f_{\kappa}^R$ and $g^{\curv}$ are $\curv\beta + (1-  \curv) n$ and $n$ respectively, and it is clear that the ratio between them is better than $\frac{n^{1-3\epsilon}}{1 + (n^{1-3\epsilon} - 1)(1 - \curv) + \delta \curv}$. Hence if any algorithm performs better than this, it will be able to distinguish $f_R$ and $g$ with high probability, which is a contradiction.
\end{proof}
Ana analogous analysis applies to 
combinatorial constraints like Steiner trees~\cite{goel2009approximability}. 
}

\textbf{Minimum submodular perfect matching (SPM): }
Here, we aim to find a perfect matching in a graph that minimizes a monotone submodular function.
Corollary~\ref{SAAcorr} implies that an MUB approximation will achieve an approximation factor of at most $\frac{n}{2 + (n-2)(1 - \curv)}$.
\arxiv{This bound is also tight:\looseness-1
\begin{theorem}
Given a submodular function with a curvature $\curv > 0$ and any $\epsilon > 0$, no polynomial-time algorithm achieves an approximation factor better than 
$\frac{n^{1 -3\epsilon}}{2 + (n^{1 -3\epsilon} - 2)(1 - \curv) + 2\delta \curv}$ for the SPM problem.\looseness-1
\end{theorem}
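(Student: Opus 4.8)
The plan is to adapt the curvature-\emph{independent} hardness construction of \citet{goel2009approximability} for submodular perfect matchings to a prescribed curvature $\curv$, in exactly the way we handled the spanning-tree and $s$-$t$ path problems above. That is, I would keep their hard graph instance together with their pair of essentially indistinguishable matroid-type cost functions, modulate each with the modular term supplied by the curvature construction~\eqref{eq:hidingfuncs}, and then recompute the optimality gap on a perfect matching. The only structural difference from the spanning-tree case is that a perfect matching contains $n/2$ edges rather than the $n-1$ edges of a spanning tree, and this is precisely what produces the extra factor of $2$ appearing in the claimed bound.

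Concretely, let $R$ be a uniformly random perfect matching in the instance, so $|R| = n/2$, and define the two monotone submodular functions
\begin{align}
g^{\curv}(X) &= \curv\min\{|X|,\alpha\} + (1-\curv)|X|, \\
f^R_{\curv}(X) &= \curv\min\bigl\{|X\cap\bar R| + \min\{|X\cap R|,\beta\},\,\alpha\bigr\} + (1-\curv)|X|,
\end{align}
with $\alpha = n^{1+\epsilon}$ and $\beta = n^{3\epsilon}(1+\delta)$, chosen so that $\beta < n/2 \le \alpha$ for large $n$. Both functions have curvature exactly $\curv$ by construction~\eqref{eq:hidingfuncs}. Since every perfect matching $X$ satisfies $|X| = n/2$, we have $g^{\curv}(X) = n/2$ for all matchings, whereas $f^R_{\curv}$ is minimized at $X = R$: there $|R\cap\bar R| = 0$ and $\min\{|R\cap R|,\beta\} = \beta$, so $f^R_{\curv}(R) = \curv\beta + (1-\curv)(n/2)$, while any matching meeting $R$ in at most $\beta$ edges keeps the generic value $n/2$.

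The main step, and the only one that is not a short calculation, is the indistinguishability argument. A convenient observation is that $g^{\curv}$ and $f^R_{\curv}$ differ by the \emph{same} quantity as the original (unmodulated) matroid functions of \citet{goel2009approximability}, because the added modular term $(1-\curv)|X|$ is identical in both; hence the set of queries on which the two functions disagree is unchanged, and their Chernoff-bound argument transfers verbatim. That argument shows the functions can differ only when $|X\cap R|$ substantially exceeds its expectation, an event that a polynomial number of oracle queries triggers with probability only $n^{-\omega(1)}$ over the random choice of $R$. Consequently no (possibly randomized) polynomial-time algorithm can reliably distinguish $g^{\curv}$ from $f^R_{\curv}$, and on the instance $f^R_{\curv}$ it must return a matching of generic value $\approx n/2$, even though the optimum is $\curv\beta + (1-\curv)(n/2)$.

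Finally I would read off the inapproximability factor as the ratio of these two values and substitute the parameters:
\begin{align}
\frac{g^{\curv}(X)}{f^R_{\curv}(R)}
= \frac{n/2}{\curv\beta + (1-\curv)(n/2)}
= \frac{n^{1-3\epsilon}}{2\curv(1+\delta) + (1-\curv)n^{1-3\epsilon}}
= \frac{n^{1-3\epsilon}}{2 + (n^{1-3\epsilon}-2)(1-\curv) + 2\delta\curv},
\end{align}
where the last equality is the algebraic identity $2\curv(1+\delta) + (1-\curv)N = 2 + (N-2)(1-\curv) + 2\delta\curv$ with $N = n^{1-3\epsilon}$. Any algorithm beating this factor would let us distinguish the two functions, contradicting the Chernoff bound. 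I expect the only delicate point to be verifying that a random perfect matching in the Goel--Karande--Tripathi--Wang instance genuinely admits the concentration needed for indistinguishability and that the parameter choices $\beta = n^{3\epsilon}(1+\delta) < n/2 \le \alpha$ are simultaneously consistent; the rest is the routine ratio computation above.
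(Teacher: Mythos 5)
Your overall strategy---modulate the hard instance of \citet{goel2009approximability} with the modular term from Eq.~\eqref{eq:hidingfuncs}, argue indistinguishability is preserved because both functions share the same modular part, and read off the bound as the ratio $\frac{n/2}{\curv\beta+(1-\curv)n/2}$---is exactly the paper's, and your algebra (including the identity converting this ratio into the stated form with $N=n^{1-3\epsilon}$) is correct. The deviation is your choice of $R$: you take $R$ to be a uniformly random perfect matching, $|R|=n/2$, whereas the paper (reusing its spanning-tree construction) takes $R$ to be a random edge set of cardinality $\alpha=n^{1+\epsilon}$ which, by \cite{goel2009approximability}, merely \emph{contains} a perfect matching with high probability. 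That containment is all that is needed to certify $\min_{M} f^R_{\curv}(M)\le \curv\beta+(1-\curv)\,n/2$, and it keeps the distribution of $R$ identical to the one in Goel et al., so their query-complexity argument genuinely applies unchanged.

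With your $R$, the claim that the Chernoff argument ``transfers verbatim'' is not correct: indistinguishability is a probabilistic statement over the draw of $R$, and you have changed the distribution of $R$ (from a uniform $\alpha$-subset of the edges, $\alpha=n^{1+\epsilon}$, to a uniform perfect matching with $n/2$ edges). You are right that the \emph{set} of queries on which the two functions disagree is unaffected by the common modular term, but the \emph{probability} that a fixed query lands in that set must be re-derived: one now needs concentration of $|X\cap R|$ when $R$ is a random perfect matching, which involves dependent (at best negatively associated) edge indicators, and it moreover depends on the structure of the specific hard graph---in a graph where some edges lie in most perfect matchings, no such concentration holds and an adversary could query exactly those edges. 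You flag this yourself as the ``delicate point,'' but it is precisely the heart of the lower bound, so as written the proof has a gap. The repair is simple and is what the paper does: keep Goel et al.'s $R$ (random, $|R|=\alpha=n^{1+\epsilon}$), invoke their result that $R$ contains a perfect matching w.h.p.\ and that the unmodulated functions are indistinguishable, and then your ratio computation goes through word for word.
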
}\notarxiv{Similar to the spanning tree case, the bound of MUB is also tight~\cite{nips2013extendedvcurv}.}
\arxiv{\begin{proof}
We use the same submodular functions as the spanning tree case, and it can be shown~\cite{goel2009approximability} that with high probability the set $R$ contains a perfect matching and the two functions are indistinguishable. Taking the ratio of $g^{\curv}$ and $f_R^{\curv}$, provides the above result.
\end{proof}}

\begin{figure}[t]
  \centering
\hspace{-10pt}
\includegraphics[width=0.25\textwidth]{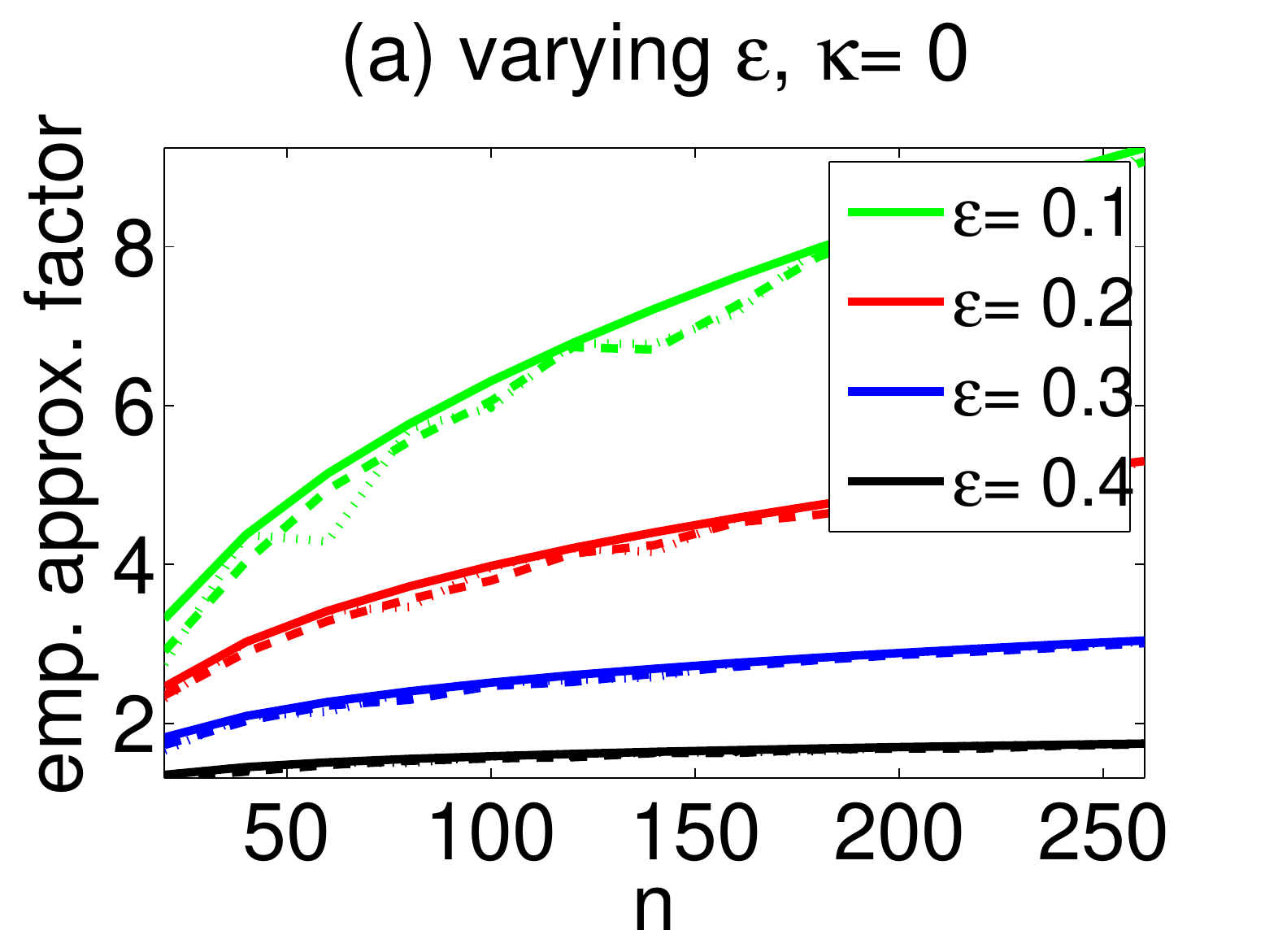}\hspace{-10pt} 
  ~ 
\includegraphics[width=0.25\textwidth]{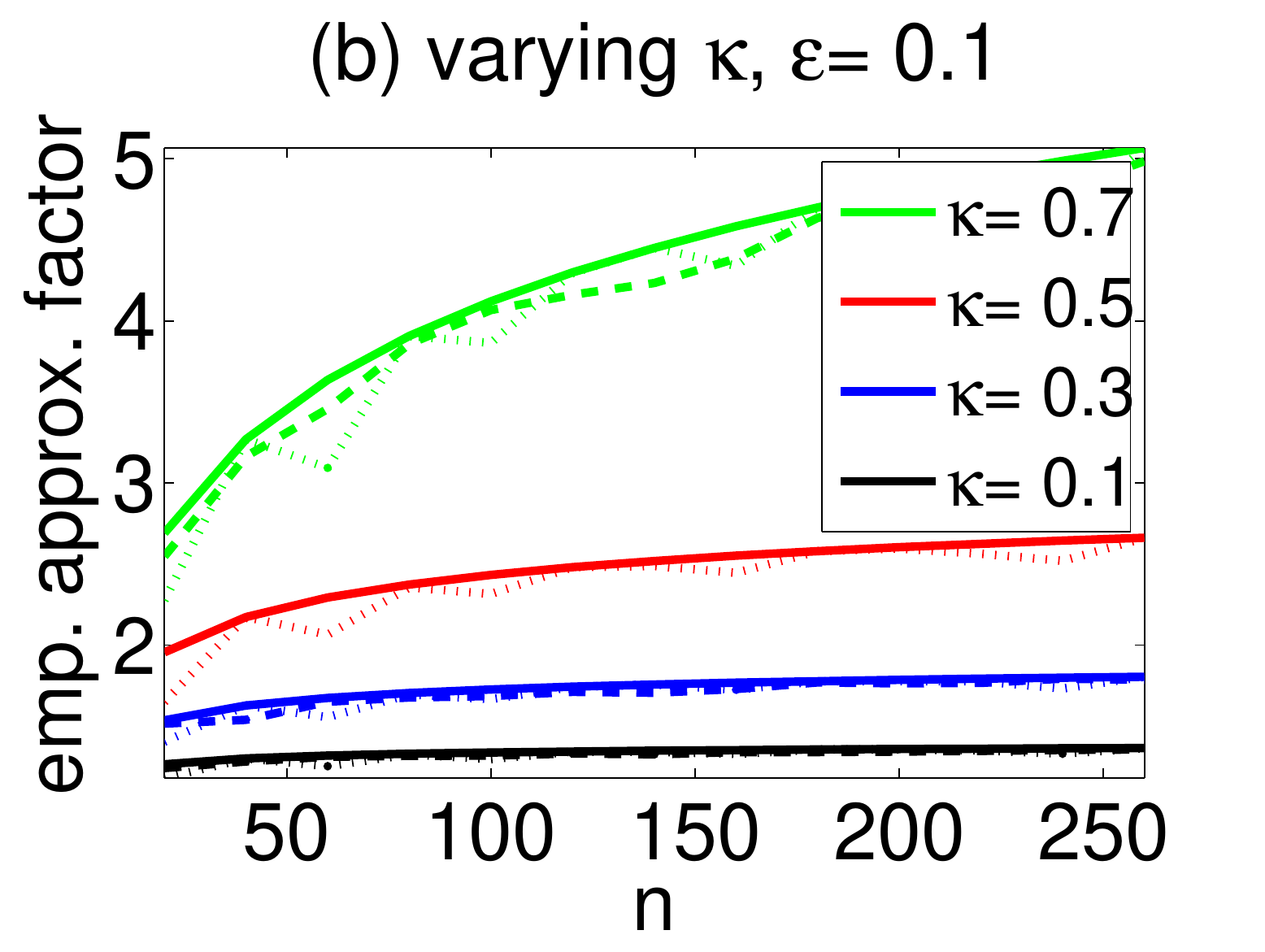} \hspace{-10pt} 
~
\includegraphics[width=0.25\textwidth]{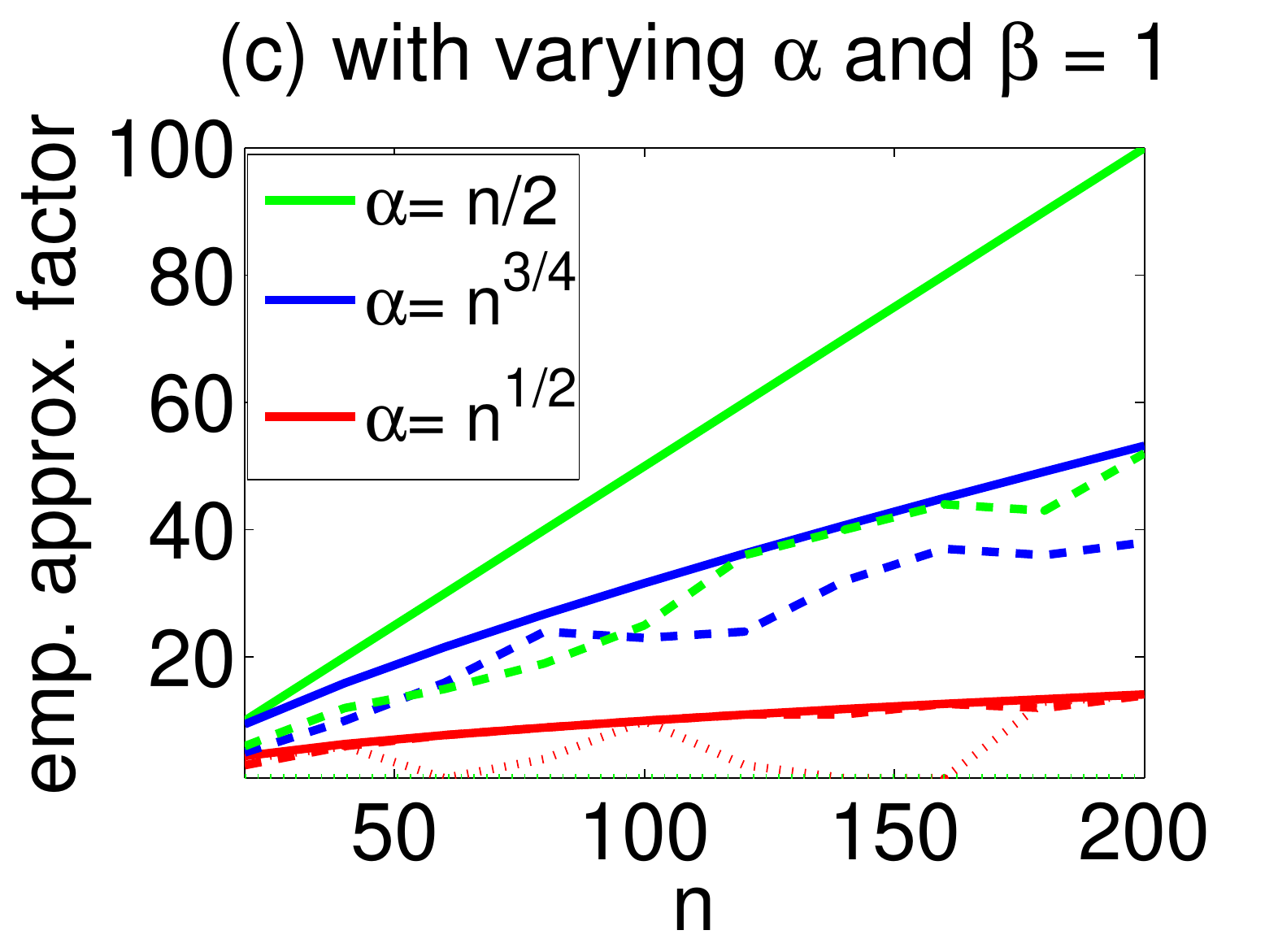}\hspace{-10pt} 
  ~ 
\includegraphics[width=0.25\textwidth]{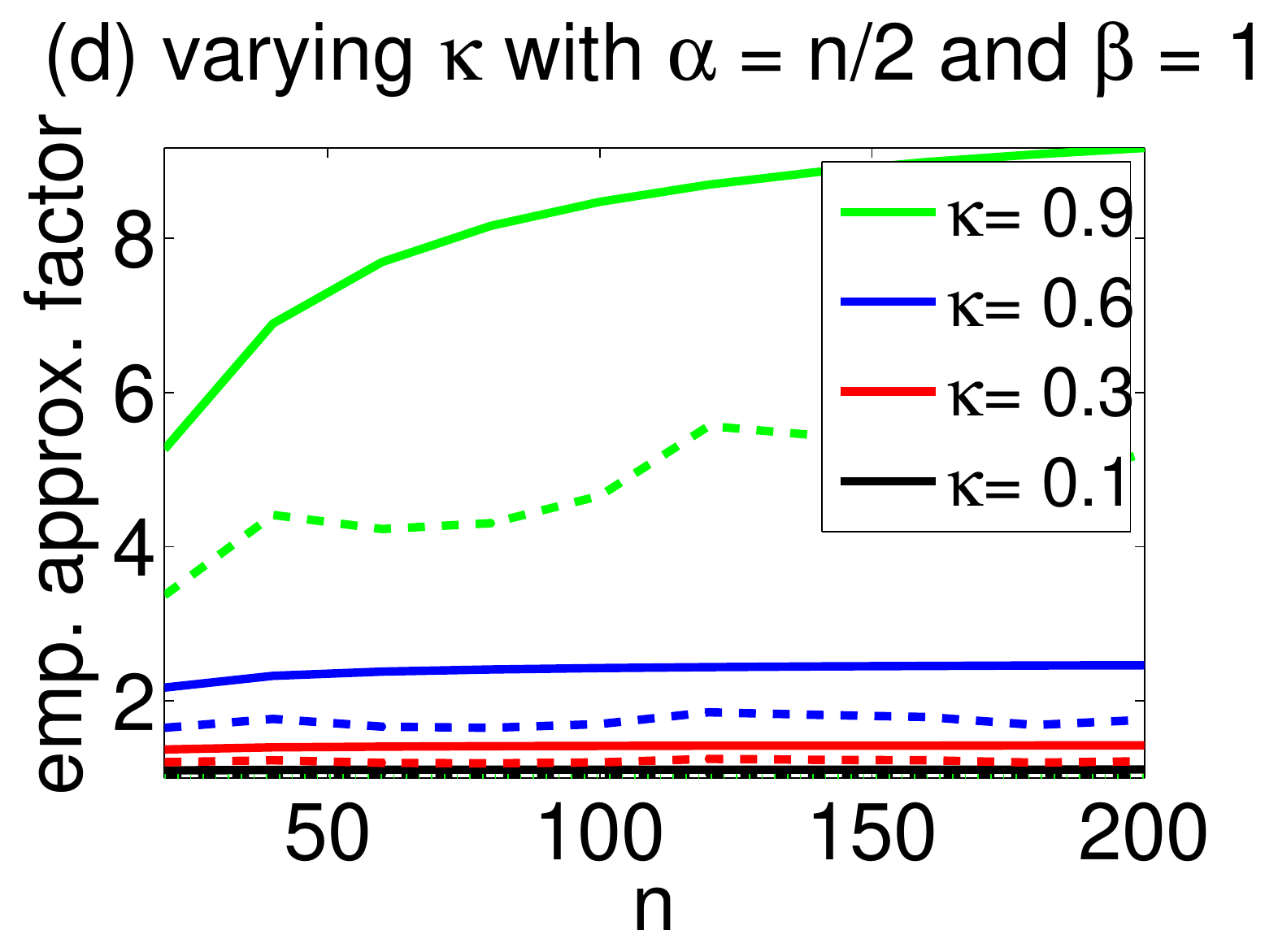} \hspace{-10pt} 
  \caption{Minimization of $g_\kappa$\arxiv{ (Equation~\ref{eq:5})} for cardinality lower bound constraints. (a) fixed $\kappa=0$, $\alpha = n^{1/2 + \epsilon}, \beta = n^{2\epsilon}$ for varying $\epsilon$; (b) fixed $\epsilon = 0.1$, but varying $\kappa$; (c) different choices of $\alpha$ for $\beta = 1$; (d) varying $\kappa$ with $\alpha = n/2, \beta = 1$. Dashed lines: MUB, dotted lines: EA, solid lines: theoretical bound. The results of EA are not visible in some instances since it obtains a factor of $1$.}
  \label{fig:consmin}
\end{figure}
\arxiv{\paragraph{\textbf{Minimum submodular edge cover:}}
The minimum submodular edge cover involves finding an edge cover (subset of edges covering all vertices), with minimum submodular cost. This problem has been investigated in~\cite{iwata2009submodular}, and they show that this problem is $O(n)$ hard. Algorithm MUB provides an approximation guarantee which is no worse than $\frac{2n}{2 + (n-2)(1- \curv)}$. 
We can show a almost matching hardness lower bound for this problem.
\begin{theorem}
Given a submodular function, with curvature coefficient $\curv$ and any $\epsilon, \delta > 0$, there cannot exist a polynomial-time approximation algorithm, which achieves an approximation better than $\frac{n^{1 -3\epsilon}}{2 + ( n^{1 -3\epsilon} - 2)(1 - \curv) + 2\delta \curv}$ for the minimum submodular edge cover problem.
\end{theorem}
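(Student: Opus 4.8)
The plan is to reuse the curvature-modulation framework from the spanning tree and matching lower bounds, since the target factor here is identical to the perfect matching bound and the minimum edge cover in the relevant construction is itself a perfect matching. First I would import the information-theoretic construction of \citet{goel2009approximability} for the edge cover problem, which supplies a graph together with two rank functions $f^R$ (depending on a random set $R$) and $h$ that cannot be distinguished by any polynomial number of value-oracle queries. On top of this I would impose the prescribed curvature by working with the modulated pair
\begin{align}
f_{\kappa}^R(X) &= \curv \min\{|X \cap \bar{R}| + \min\{|X \cap R|, \beta\}, \alpha\} + (1 - \curv)|X|,\\
g^{\curv}(X) &= \curv \min\{|X|, \alpha\} + (1 - \curv)|X|,
\end{align}
with $\alpha = n^{1+\epsilon}$, $\beta = n^{3\epsilon}(1+\delta)$ and $|R| = \alpha$, exactly as in the spanning tree and matching cases. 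By the construction in Equation~\eqref{eq:hidingfuncs} both functions have curvature exactly $\curv$, so they are legitimate instances at the required curvature.

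Next I would argue that the two modulated functions remain indistinguishable. Since $f_{\kappa}^R$ and $g^{\curv}$ are obtained from the indistinguishable pair $f^R$ and $h$ by adding the same modular term $(1-\curv)|X|$ and the same monotone modulation, a Chernoff bound identical to the one of \citet{svitkina2008submodular} used in our earlier proofs shows that any polynomial-query algorithm separates them with probability only $n^{-\omega(1)}$. Hence, with high probability, any algorithm returns the same answer on both instances, so its approximation factor is bounded below by the ratio of the two optimal edge-cover values.

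It then remains to compute that ratio. In the construction of \citet{goel2009approximability} the random set $R$ contains a minimum edge cover with high probability, and this minimum edge cover is a perfect matching of size $n/2$. Thus the optimum of $f_{\kappa}^R$ over edge covers is attained inside $R$: an edge cover $X \subseteq R$ with $|X| = n/2 > \beta$ gives $f_{\kappa}^R(X) = \curv\beta + (1-\curv)(n/2)$, while every edge cover of size $n/2$ yields $g^{\curv}(X) = \curv(n/2) + (1-\curv)(n/2) = n/2$. Taking the ratio,
\begin{align}
\frac{n/2}{\curv\beta + (1-\curv)(n/2)} = \frac{n^{1-3\epsilon}}{2\curv(1+\delta) + (1-\curv)n^{1-3\epsilon}} = \frac{n^{1-3\epsilon}}{2 + (n^{1-3\epsilon}-2)(1-\curv) + 2\delta\curv},
\end{align}
which is precisely the claimed factor. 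An algorithm beating it would separate the two optimal values and hence distinguish $f_{\kappa}^R$ from $g^{\curv}$, contradicting indistinguishability.

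The main obstacle is the graph-theoretic step rather than the algebra: one must verify that the Goel et al.\ edge-cover gadget indeed forces a minimum edge cover to lie inside $R$ with high probability and to have size exactly $n/2$, which is what produces the factor of $2$ in the denominator (and distinguishes this bound from the spanning-tree bound, where the optimum has size $\approx n$). Once this structural property is imported, the curvature modulation, the concentration argument, and the ratio computation are routine adaptations of the matching proof.
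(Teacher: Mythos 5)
Your proposal is correct and follows essentially the same route as the paper: the paper's own proof is just the observation that a perfect matching is an edge cover (and that in the Goel et al.\ matching construction the minimum edge cover is a perfect matching of size $n/2$ lying in $R$ w.h.p.), so the SPM lower bound with the identical factor transfers verbatim. Your write-up simply makes explicit the curvature modulation, indistinguishability, and ratio computation that the paper compresses into this one-line reduction, and your algebra reproducing the factor $\frac{n^{1-3\epsilon}}{2 + (n^{1-3\epsilon}-2)(1-\curv) + 2\delta\curv}$ is right.
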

\arxiv{\begin{proof}
We can use the construction of~\cite{iwata2009submodular} to show this. However a simple observation shows that a perfect matching is also an edge cover, and hence the hardness of edge cover has to be at least as much as the hardness of perfect matchings.
\end{proof}}}

\subsection{Experiments} We end this section by empirically demonstrating the performance of MUB and EA and their precise dependence on curvature. We focus on 
cardinality lower bound constraints, $\mathcal{C} = \{X \subseteq V: |X| \geq \alpha\}$ 
\JTR{above this was defined in terms of $k$ not $\alpha$. Fix.}\RTJ{Yes, but here $k = \alpha$. This is the notation used in the hardness function so probably we should keep it. I am with changing it also though.}
and
the ``worst-case'' class of functions that has been used throughout this paper to prove lower bounds, 
\arxivalt{\begin{align}\label{eq:hardfct}
f^R(X) = \min\{|X \cap \bar{R}| + \beta, |X|, \alpha\},
\end{align}}{$f^R(X) = \min\{|X \cap \bar{R}| + \beta, |X|, \alpha\}$ }
where $\bar{R} = V \backslash R$ and $R \subseteq V$ is random set such that $|R| = \alpha$. We adjust $\alpha = n^{1/2 + \epsilon}$ and $\beta = n^{2 \epsilon}$ by a parameter $\epsilon$. The smaller $\epsilon$ is, the harder the problem.
\arxivalt{The function (\ref{eq:hardfct})}{This function} has curvature $\curv = 1$. To obtain a function with specific curvature $\kappa$, we define
\arxivalt{\begin{align}
  \label{eq:5}
  f^R_\kappa(X) = \kappa f(X) + (1-\kappa)|X|.
\end{align}}{$f^R_\kappa(X) = \kappa f(X) + (1-\kappa)|X|$ as in Equation~\eqref{eq:hidingfuncs}.}

In all our experiments, we take the average over $20$ random draws of $R$. We first set $\kappa=1$
and vary $\epsilon$. Figure~\ref{fig:consmin}(a) shows the empirical approximation factors obtained using EA and MUB, and the theoretical bound. The empirical factors follow the theoretical results very closely. Empirically, we also see that the problem becomes harder as $\epsilon$ decreases.
Next we fix $\epsilon = 0.1$ and vary the curvature $\kappa$ in $f^R_\kappa$. Figure~\ref{fig:consmin}(b) illustrates that the theoretical and empirical approximation factors improve significantly as $\kappa$ decreases. Hence, much better approximations than the previous theoretical lower bounds are possible if $\kappa$ is not too large. This observation can be very important in practice. Here, too, the empirical upper bounds follow the theoretical bounds very closely.

Figures~\ref{fig:consmin}(c) and (d) show results for larger $\alpha$ and $\beta=1$. In Figure~\ref{fig:consmin}(c), as $\alpha$ increases, the empirical factors improve. In particular, as predicted by the theoretical bounds, EA outperforms MUB for large $\alpha$ and, for $\alpha \geq n^{2/3}$, EA finds the optimal solution. In addition, Figures~\ref{fig:consmin}(b) and (d) illustrate the theoretical and empirical effect of curvature: as $n$ grows, the bounds saturate and approximate a constant $1/(1-\kappa)$ -- they do not grow polynomially in $n$. 
Overall, we see that the empirical results quite closely follow our theoretical results, and that, as the theory suggests, curvature significantly affects the approximation factors.

\section{Conclusion and Discussion}
In this paper, we study the effect of curvature on the problems of approximating, learning and minimizing submodular functions under constraints.
We prove tightened, curvature-dependent upper bounds 
with almost matching lower bounds. These results complement known results for submodular maximization~\cite{conforti1984submodular,vondrak2010submodularity}. \arxiv{Moreover, in~\cite{nipssubcons2013}, we also consider the role of curvature in submodular optimization problems over a class of \emph{submodular} constraints.}
Given that the functional form and effect of the submodularity ratio proposed in \cite{das2011submodular} is similar to that of curvature, an interesting extension is the question of whether there is a single unifying quantity for both of these terms. Another open question is whether a quantity similar to curvature can be defined for subadditive functions, thus refining the results in \arxivalt{\cite{balcan2011learning,badanidiyuru2012sketching}}{\cite{balcan2011learning}} for learning subadditive functions. Finally it also seems that the techniques in this paper could be used to provide improved curvature-dependent regret bounds for constrained online submodular minimization~\cite{jegelka2010online}.

{\bf Acknowledgments:} Special thanks to Kai Wei for pointing out that
Corollary~\ref{concvmodres} holds and for other discussions, to Bethany
Herwaldt for reviewing an early draft of this manuscript, and to the
anonymous reviewers.  This material is based upon work supported by
the National Science Foundation under Grant No. (IIS-1162606), a
Google and a Microsoft award, and by the Intel Science and Technology
Center for Pervasive Computing.  Stefanie Jegelka's work is supported
by the Office of Naval Research under contract/grant number
N00014-11-1-0688, and gifts from Amazon Web Services, Google, SAP,
Blue Goji, Cisco, Clearstory Data, Cloudera, Ericsson, Facebook,
General Electric, Hortonworks, Intel, Microsoft, NetApp, Oracle,
Samsung, Splunk, VMware and Yahoo!.
\notarxiv{\small}
\bibliographystyle{abbrvnat}
\bibliography{../Combined_Bib/submod}
\end{document}